\documentclass[11pt,a4paper]{article}
\usepackage[T1]{fontenc}
\usepackage{lmodern,microtype}
\usepackage[margin=27mm]{geometry}
\usepackage{amsmath,amssymb,amsthm,mathtools,booktabs,array,longtable}
\usepackage{graphicx}
\usepackage{enumitem}
\usepackage[skip=4pt plus 1pt,indent=0pt]{parskip}
\setlist{topsep=4pt,itemsep=2pt,parsep=0pt,partopsep=0pt}
\usepackage[dvipsnames]{xcolor}
\usepackage{tikz}
\usetikzlibrary{arrows.meta}
\usepackage{tcolorbox}
\usepackage{needspace}
\usepackage[section]{placeins}
\usepackage{float}
\usepackage[colorlinks=true,linkcolor=MidnightBlue,citecolor=MidnightBlue,urlcolor=MidnightBlue]{hyperref}
\numberwithin{equation}{section}
\numberwithin{figure}{section}
\numberwithin{table}{section}
\numberwithin{footnote}{section}
\newtheorem{theorem}{Theorem}[section]
\newtheorem{lemma}[theorem]{Lemma}
\newtheorem{corollary}[theorem]{Corollary}
\AddToHook{env/theorem/before}{\Needspace{4\baselineskip}}
\AddToHook{env/lemma/before}{\Needspace{4\baselineskip}}
\AddToHook{env/corollary/before}{\Needspace{4\baselineskip}}
\AddToHook{env/example/before}{\Needspace{4\baselineskip}}
\AddToHook{env/question/before}{\Needspace{4\baselineskip}}
\newtheoremstyle{informal}
  {3pt}{3pt}{\normalfont}{}
  {\normalfont\small\scshape\color{MidnightBlue}}{}{\newline}
  {\thmname{#1}\thmnote{\enspace\textnormal{(#3)}}}
\theoremstyle{informal}\newtheorem*{informaltheorem}{Informal theorem}
\theoremstyle{definition}\newtheorem{example}{Example}[section]
\newtheorem*{question}{Question}
\theoremstyle{remark}\newtheorem{remark}[theorem]{Remark}
\newcommand{\Z}{\mathbb Z}
\newcommand{\C}{\mathbb C}
\newcommand{\E}{\mathbb E}
\newcommand{\one}{\mathbf 1}
\newcommand{\per}{\operatorname{per}}
\newcommand{\chq}{\chi_{\mathrm q}}
\newcommand{\chs}{\chi^{\star}}
\newcommand{\Om}{\Omega_n^{(\Z_q)}}
\newcommand{\mult}[2]{\binom{#1}{#2}}

\newcommand{\tr}{\operatorname{tr}}
\newcommand{\HS}{\mathrm{HS}}

\title{Optimal entanglement-assisted source coding\\
under a balanced-difference promise}
\author{Julius A. Zeiss\thanks{Email: \href{mailto:jzeiss@physik.rwth-aachen.de}{jzeiss@physik.rwth-aachen.de}}}
\date{}
\makeatletter
\renewcommand{\@maketitle}{\newpage\null\vskip0.5em
  \begin{center}{\LARGE\@title\par}\vskip0.6em
    {\large\@author\par}\vskip0.3em
    {\small Institute for Quantum Information, RWTH Aachen University, Aachen, Germany\par}
    \vskip0.7em\end{center}}
\makeatother
\hypersetup{pdftitle={Optimal entanglement-assisted source coding under a balanced-difference promise},
  pdfauthor={Julius A. Zeiss}}
\begin{document}
\maketitle
\begin{abstract}
Entanglement can reduce the communication required for coding tasks,
but establishing the minimum achievable cost is essential to understanding
its limits. We address this question in a zero-error source-coding task
where Alice receives a word and Bob knows an unordered pair of candidates
containing it. Alice does not know the pair and must enable Bob to identify
her word without error using shared entanglement and one classical message.
The candidates satisfy a \emph{balanced-difference promise}: for words in
\(\Z_q^n\) with \(n=q\ell\), each residue modulo \(q\) occurs
exactly \(\ell\) times in their coordinatewise difference. For all
integers \(q\geq2\) and \(\ell\geq1\), we prove that the task
requires exactly \(n\) messages when \((q-1)\ell\) is even and
two messages when it is odd. These minima
allow arbitrary finite-dimensional shared states independent of the
inputs and arbitrary local measurements. In even parity, this establishes
optimality of an existing entanglement-assisted protocol. In odd parity, an explicit
deterministic protocol achieves the optimum of one bit without entanglement.
Our proof combines Fourier analysis with a combinatorial counting argument
to determine the smallest eigenvalue of the associated graphs. In even
parity, this resolves the spectral assertion of Cao et al.'s Conjecture~6.3
for balanced cyclic generalized Hadamard graphs. Together with an
explicit odd-parity bipartition, this determines the quantum chromatic
number as \(n\) in even parity and \(2\) in odd parity, where the
classical chromatic number is also \(2\). All lemmas, theorems, and
corollaries are formalized and verified in Lean.
\end{abstract}

\setcounter{tocdepth}{1}
\begingroup
\makeatletter
\patchcmd{\l@section}{1.0em}{0.6em}{}{}
\makeatother
\tableofcontents
\endgroup
\setcounter{tocdepth}{1}
\clearpage

\section{Introduction and main result}\label{sec:introduction}
When can entanglement assistance reduce the communication required for a
coding task below what any classical protocol can achieve? Understanding
such advantages is a central problem in quantum information theory.
Entanglement advantages have been established for zero-error channel
coding~\cite{clmw,lmmor} and for source and source-channel
coding~\cite{bblps}, while general bounds constrain the performance
of entanglement-assisted protocols~\cite{beigi,cmrssw}.
Beyond establishing a gap, it is equally important to determine the
limits of entanglement assistance: how little communication is possible,
and which protocols attain this minimum?

We address this optimality question in a zero-error source-coding task
where the receiver already has two candidates. Alice receives a word \(x\),
while Bob receives an unordered pair \(\{u,v\}\) containing it.
Alice does not know the pair. They may share a finite-dimensional
quantum state independent of their inputs and use arbitrary local POVMs.
Alice sends Bob one noiseless classical message. A single protocol must
allow Bob to recover \(x\) perfectly: with probability one for every
allowed pair and either choice of \(x\in\{u,v\}\). The allowed pairs obey a \emph{balanced-difference promise}.
Let \(q\geq2\) and \(\ell\geq1\) be integers and set \(n=q\ell\).
All words lie in \(\Z_q^n\): a word
\(x=(x_1,\ldots,x_n)\) has \(n\) coordinate slots, each containing
a letter \(x_j\in\Z_q\). A \emph{position} \(j\in\{1,\ldots,n\}\)
specifies which slot we mean; \(x_j\) is the letter stored there.
For an allowed pair, each residue modulo \(q\) must occur exactly
\(\ell\) times in the coordinatewise difference \(v-u\).
For \(q=2\), this reduces to the
binary distributed Deutsch--Jozsa promise: the candidates disagree
in exactly half the positions~\cite{gqz}.

The optimal communication cost depends on the parity of \((q-1)\ell\).
When it is odd, we construct an explicit partition of
the words into two groups and prove that every allowed candidate
pair contains exactly one word from each group. The same partition
works for all allowed pairs, so Alice can send the group label of
her word without knowing Bob's pair. This yields an optimal
deterministic one-bit source code without entanglement.
Lemma~\ref{lem:odd-bipartition} in Section~\ref{sec:graph} gives
the partition, and
Lemma~\ref{lem:odd-source-coding} in Section~\ref{sec:source-coding}
proves the code's optimality.

We now turn to the even-parity regime. A known entanglement-assisted
protocol solves the task using \(n\) possible messages. Our main
contribution is to prove that this message count is optimal throughout
this regime.

\begin{example}[A balanced-difference pair]\label{ex:balanced-promise}
Take \(q=4\), \(n=8\), and \(\ell=2\). In the pair below, the
coordinatewise difference contains each residue twice, so the pair
satisfies the balanced-difference promise.
\begin{center}
\includegraphics[width=\linewidth]{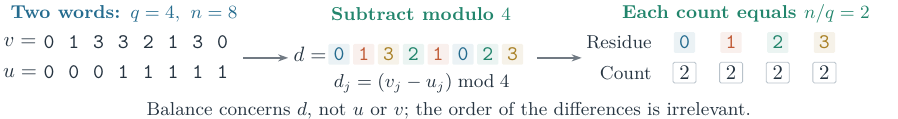}
\end{center}
At these parameters, the optimal protocol described below has eight
possible messages, encoded in three fixed-length bits.
\end{example}

For the task depicted in Figure~\ref{fig:source-coding},
combining the phase representation of Cao et al.~\cite[Lemma~4.1]{cao}
with the source-coding construction of Bri\"et et al.~\cite[Lemma~5.1]{bblps}
gives the following \(n\)-message protocol, illustrated in
Figure~\ref{fig:optimal-protocol}. Alice and Bob share a
maximally entangled state of local dimension \(n\). Given her word
\(x\), Alice performs a measurement determined by \(x\), with \(n\)
possible outcomes, and sends its outcome \(a\) to Bob. The measurement is
constructed so that, for each transmitted outcome, the balanced-difference
promise ensures that the two candidate words in Bob's pair produce
orthogonal conditional states on his subsystem. Knowing the pair and
Alice's message, Bob can therefore distinguish these states and recover
her word without error. This protocol has \(n\) possible messages,
encoded in \(\lceil\log_2 n\rceil\) fixed-length bits. We give its full
construction in Section~\ref{sec:source-coding}.

\begin{question}
When \((q-1)\ell\) is even, can a different encoding, a larger
shared entangled state, or more general measurements use fewer
than \(n\) messages?
\end{question}

In the even-parity regime, earlier results imply
the protocol's optimality for sufficiently large admissible lengths, depending
on \(q\), and in several special parameter families. To determine
whether this code is optimal at every admissible length, we turn the
coding task into a graph whose edges join words that Bob must
distinguish~\cite{wits,fb}. The graph's smallest eigenvalue gives a lower bound on the number of
messages needed for perfect recovery. Fourier analysis reduces the
calculation of this eigenvalue to a counting problem. By solving this
problem, we prove a uniform spectral theorem that resolves the spectral
assertion of Conjecture~6.3 of Cao et al.~\cite{cao} for balanced cyclic
generalized Hadamard graphs and removes the length restriction. It supplies a
matching coding converse: no protocol can use fewer than \(n\) messages,
even with an arbitrary finite-dimensional shared state and arbitrary
local measurements. This establishes the code's optimality for every \(q\geq2\) and
\(n=q\ell\) with \(\ell\geq1\) and \((q-1)\ell\) even.
Together with the deterministic odd-parity protocol, this gives the
complete coding classification.

\begin{tcolorbox}[colback=MidnightBlue!3!white,
  colframe=MidnightBlue!20!white,boxrule=.4pt,arc=1.2mm,
  left=2mm,right=2mm,top=0mm,bottom=0mm,boxsep=1mm,
  before skip=7pt,after skip=14pt]
\begin{informaltheorem}[Coding optimality in both parity cases]
Let \(q\geq2\) and \(\ell\geq1\) be integers, and set \(n=q\ell\).
For the balanced-difference source-coding task, the minimum number
of possible classical messages is:
\begin{itemize}
\item \textbf{Even \((q-1)\ell\):} exactly \(n\) messages,
attained by the entanglement-assisted protocol above. The optimal fixed-length
cost is \(\lceil\log_2 n\rceil\) classical bits.
\item \textbf{Odd \((q-1)\ell\):} exactly two messages,
attained by the deterministic group-label protocol. The optimal
cost is one classical bit, without entanglement.
\end{itemize}
These minima allow arbitrary finite-dimensional shared states
independent of the inputs and arbitrary local POVMs. A single
protocol must recover Alice's word without error for every allowed
pair and either candidate.
\end{informaltheorem}
\end{tcolorbox}

\begin{figure}[tbp]
\centering
\includegraphics[width=\linewidth]{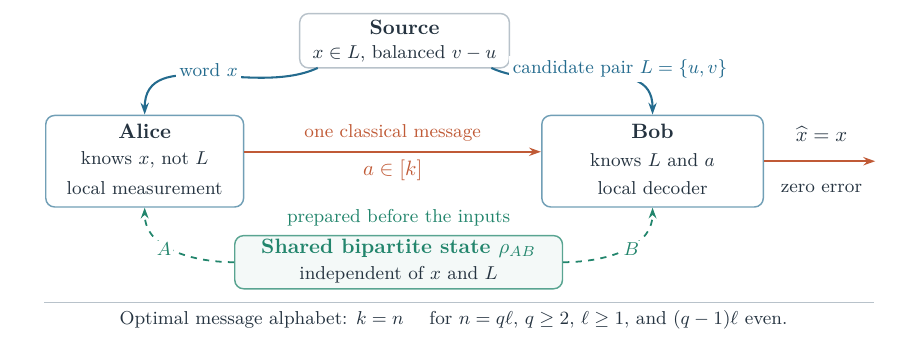}
\caption{Entanglement-assisted zero-error source coding. The source
sends a word \(x\) to Alice and a candidate pair \(L=\{u,v\}\)
containing \(x\) to Bob. The pair obeys the balanced-difference
promise: each residue modulo \(q\) occurs exactly \(\ell\) times in
the coordinatewise difference \(v-u\). Alice's local measurement depends on
\(x\), and she sends its outcome \(a\in[k]=\{1,\ldots,k\}\) as
one of \(k\) possible classical messages. For
\(n=q\ell\), the optimal message alphabet has size \(k=n\)
when \((q-1)\ell\) is even and \(k=2\) when it is odd.
The corresponding fixed-length costs are
\(\lceil\log_2 n\rceil\) bits and one bit, respectively. Bob uses \(L,a\)
and his quantum register to
recover \(x\) exactly. The shared state is independent of the source
inputs. The protocol must succeed for every pair satisfying the
promise and either choice of \(x\in\{u,v\}\).}
\label{fig:source-coding}
\end{figure}

\begin{figure}[tbp]
\centering
\includegraphics[width=\linewidth]{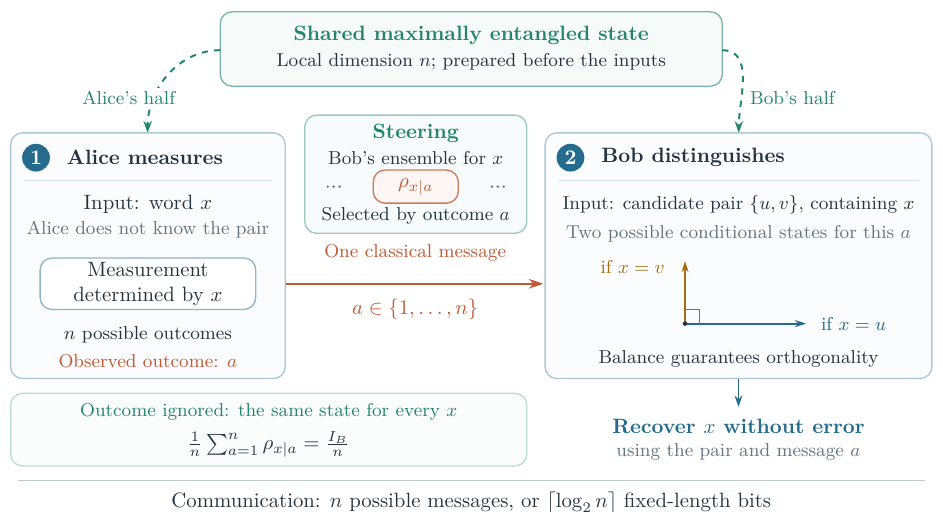}
\caption{The \(n\)-message entanglement-assisted protocol for the setup in
Figure~\ref{fig:source-coding}. Alice and Bob share a maximally
entangled state of local dimension \(n\). Alice's input-dependent
measurement prepares an ensemble of conditional states on Bob's
subsystem, illustrating the steering mechanism~\cite[Sec.~II]{cs-steering}.
Here \(\rho_{x\mid a}\) denotes Bob's normalized state conditioned on
Alice's word \(x\) and outcome \(a\). Every outcome has probability
\(1/n\), and averaging gives \(I_B/n\), independently of \(x\),
where \(I_B\) is the identity on Bob's subsystem. The highlighted
ensemble member is selected by the random outcome \(a\); Alice sends
this label as the sole classical message.
For each fixed \(a\), the balance promise guarantees orthogonality of
Bob's two candidate states; the rays schematically depict these
alternatives. Knowing the pair and \(a\), Bob recovers \(x\) without
error. The construction combines the phase representation of Cao
et al.~\cite[Lemma~4.1]{cao} with the source-coding construction of
Bri\"et et al.~\cite[Lemma~5.1]{bblps}. For \(q\geq2\),
\(n=q\ell\), \(\ell\geq1\), and even \((q-1)\ell\), our converse
proves that at least \(n\) possible messages are necessary even with arbitrary
finite-dimensional shared states and local measurements.}
\label{fig:optimal-protocol}
\end{figure}

Corollary~\ref{cor:source-coding} and
Lemma~\ref{lem:odd-source-coding} prove the even- and odd-parity
coding statements, respectively. Theorem~\ref{thm:quantum-coloring}
also determines the quantum chromatic number as \(n\) in even parity
and \(2\) in odd parity; in the latter case, the classical chromatic
number is \(2\) as well.
Corollary~\ref{cor:source-block-coding} shows that jointly encoding
\(m\) independent instances requires \(n^m\) messages in even parity
and \(2^m\) messages in odd parity; the latter is achieved by sending
one bipartition bit per instance without entanglement.
For even \((q-1)\ell\), Corollary~\ref{cor:near-balanced-coding}
also establishes exact optimality for words of length \(q\ell-1\)
when one residue occurs once fewer than the others,
using the existing reduction of Luo, Ning, and
Zhang~\cite[Theorem~6 and Corollary~7]{lnz}.

In even parity, protocols without shared entanglement require
\(\Theta(n)\) fixed-length classical bits for each fixed \(q\) and
all sufficiently large lengths \(n=q\ell\) in this regime, compared
with exactly \(\lceil\log_2 n\rceil\) bits with entanglement.
A simple classical code sends Alice's letters in \(\ell+1\) fixed
coordinates and uses \(\lceil(\ell+1)\log_2 q\rceil\) bits.
The \(\Omega(n)\) lower bound follows from the chromatic-number
bound of~\cite[Theorem~1.9]{cao} and the standard source-coding
correspondence~\mbox{\cite[Sec.~1.1]{bblps}}.
This large-length separation follows from existing results;
Appendix~\ref{app:prior-regimes} derives both bounds and the
coordinate-sending construction, and explains why a strict advantage
requires steering. For fixed \(q\), these bounds show
that the optimal classical communication cost grows linearly with
\(n\), but they do not generally determine the exact minimum
number of bits.

Entanglement need not reduce the message count at every even-parity
length: classical codes already attain the
\(n\)-message minimum in the small examples of
Appendix~\ref{app:full-graphs}. In odd parity, the optimal one-bit
classical protocol leaves no room for an entanglement advantage at
any length.

The balanced-difference promise also connects the task to classical
coding theory.
In generalized Hadamard theory, families of rows with pairwise balanced
differences form equidistant
codes with Hamming distance \(n(1-1/q)\)~\cite[Sec.~II]{bcp}.
The restriction on Bob's candidate pair substantially reduces the
communication needed for perfect recovery: if every distinct pair were
allowed, \(q^n\) possible messages would be necessary, even
with shared entanglement~\cite[Theorem~2.1]{bblps}.
Under the balanced-difference promise, we determine this minimum
exactly: \(n\) messages in even parity and two in odd parity.

\subsection{Previous work}

Zero-error source coding with receiver side information has a
classical graph-theoretic formulation. The \emph{characteristic graph}
joins two source inputs whenever they can occur with the same side
information, and a proper coloring assigns distinguishable messages
to every such pair. Witsenhausen~\cite{wits} showed that the minimum
message alphabet size equals the graph's chromatic number. Bri\"et
et al.~\cite{bblps} studied the entanglement-assisted version,
including protocols, lower bounds, and separations from classical
coding. Cubitt, Man\v{c}inska, Roberson, Severini, Stahlke, and
Winter~\cite{cmrssw} formulated entanglement-assisted source-channel
coding through graph homomorphisms and obtained bounds
using the Lov\'asz theta number and its variants.

Quantum graph coloring provides a related connection between graphs
and shared entanglement. Cameron et al.~\cite{cmnsw} studied the
quantum chromatic number, and Man\v{c}inska and Roberson~\cite{mr}
developed the broader framework of quantum graph homomorphisms.
A quantum \(k\)-coloring gives an entanglement-assisted source code
with \(k\) messages, although the two optimization problems allow different
strategies~\cite[Sec.~1.3]{bblps}. Elphick and Wocjan~\cite{ew}
established spectral lower bounds for the quantum chromatic number,
including a quantum version of Hoffman's bound.

Hadamard graphs are a natural setting for these methods.
The graph papers discussed below state spectral and coloring results,
but do not explicitly derive their consequences for the coding task
considered here. We derive these consequences in
Appendix~\ref{app:prior-regimes} using established source-coding
constructions and lower bounds~\cite{bblps}.
For the binary family, spectral bounds give quantum chromatic
number \(n\) whenever \(4\mid n\)~\cite[Sec.~3.2]{wed}.
Cao et al.~\cite{cao} studied generalized Hadamard graphs and
constructed quantum colorings from phase encodings. Their cyclic
construction~\cite[Lemma~4.1]{cao}, combined with the source-coding
construction of~\cite[Lemma~5.1]{bblps}, gives an \(n\)-message
code whenever \(q\mid n\), without a parity assumption.
Luo, Ning, and Zhang~\cite{lnz} proved the ternary spectral formula
and quantum chromatic number for every \(3\mid n\).
The balanced quaternary case \(8\mid n\) follows from lemmas of
Ning, Koolen, and Zhang~\cite{nkz} together with elementary estimates.
Further coding optimality cases follow for prime \(q\) and
\(n=q^r\), \(r\geq1\), and from generalized Hadamard matrices
of order \(n\) over \(\Z_q\), whose rows form an \(n\)-clique.
Appendix~\ref{app:prior-regimes} also gives the additional quaternary
calculation.

For general cyclic alphabets, write \(D=n!/(\ell!)^q\) for the
graph's degree. In the odd-parity case, the sign identity of
Cao et al.~\cite[Lemma~4.3]{cao}, together with regularity, implies
that the least adjacency eigenvalue is \(-D\) at every length in
this regime. For \(q=2\), the corresponding classical and quantum
chromatic numbers are already known to equal \(2\) whenever
\(n\equiv2\pmod4\)~\cite[Sec.~2]{lnz}.
Throughout the full odd-parity regime,
Lemma~\ref{lem:odd-bipartition} constructs an explicit bipartition
and proves that every edge joins different classes. Hence every
allowed candidate pair contains one word from each class. From this
partition, we derive an optimal deterministic one-bit source code in
Lemma~\ref{lem:odd-source-coding} and determine the classical
and quantum chromatic numbers in Theorem~\ref{thm:quantum-coloring}.
In the even-parity case, Cao et al.~\cite[Lemma~4.4]{cao}
proved that, for each fixed \(q\), there is a threshold \(N(q)\)
such that the least adjacency eigenvalue is \(-D/(n-1)\) for all
admissible lengths \(n=q\ell\) satisfying \(n\geq N(q)\).
The difficulty is to show that no other eigenvalue is smaller.
Their proof expresses the eigenvalues through circulant determinant
coefficients and combines eigenvalue-multiplicity bounds with
permanent coefficient estimates. These estimates yield the required
comparison only beyond a length threshold depending on \(q\).
Cao et al.~\cite[Conjecture~6.3]{cao} conjectured that the same
least-eigenvalue formula holds for every \(q\geq2\), \(\ell\geq1\),
and \(n=q\ell\) with \((q-1)\ell\) even, leaving open a theorem
valid at every admissible length.

\paragraph{Structure of the paper.}
Section~\ref{sec:graph} defines the graph and establishes its basic
properties and odd-parity bipartition. It then connects the least
eigenvalue to coding optimality, states the main spectral result,
and proves its odd-parity case.
Sections~\ref{sec:fourier}--\ref{sec:spectral} develop the counting
argument and prove the least-eigenvalue formula in the even-parity case.
Sections~\ref{sec:source-coding} and~\ref{sec:quantum} establish the
coding and coloring results. Section~\ref{sec:discussion} discusses the results
and open problems. Appendix~\ref{app:notation} collects the principal
notation. Appendices~\ref{app:normal} and~\ref{app:prior-regimes}
provide supporting proofs and consequences of earlier results.
Appendix~\ref{app:examples} collects further worked examples.

\section{From source coding to graph spectra}\label{sec:graph}

To prove the lower bound for balanced pairs, we represent the coding
problem by a graph that records the ambiguities the code must resolve.
Its vertices are Alice's possible words, and two words are adjacent
if and only if they can occur together in Bob's candidate pair.
This is the \emph{characteristic graph} of the source with side
information~\cite{wits,fb}; see also~\cite[Sec.~1.1]{bblps}.
We first define the graph and collect the properties needed for its
spectral analysis.

Let \(q\geq2\) and \(\ell\geq1\) be integers, and set \(n=q\ell\).
The cyclic group \(\Z_q=\{0,1,\ldots,q-1\}\) uses addition modulo
\(q\). For words \(u,v\in\Z_q^n\), the difference \(s=v-u\)
records the changes that take \(u\) to \(v=u+s\). All word
operations are coordinatewise modulo \(q\). The balanced-difference
promise permits exactly the shifts in
\begin{equation}\label{eq:allowed-shifts}
 S_{q,n}=\bigl\{s\in\Z_q^n:
       \#\{j:s_j=c\}=\ell\text{ for every }c\in\Z_q\bigr\}.
\end{equation}
Here \(\#\) denotes the number of elements in a set. Thus a shift is
\emph{balanced} when each of the \(q\) letters occurs exactly
\(\ell\) times.

The \emph{balanced cyclic generalized Hadamard graph}
\(\Om=\operatorname{Cay}(\Z_q^n,S_{q,n})\) has vertex set
\(\Z_q^n\), with adjacency defined by
\begin{equation}\label{eq:adjacency-definition}
 u\sim v\quad\Longleftrightarrow\quad v-u\in S_{q,n}.
\end{equation}
The notation \(u\sim v\) means that \(u\) and \(v\) are adjacent.
The Cayley graph description means that the same allowed shifts
describe the neighbors of every vertex: they are the words \(u+s\)
with \(s\in S_{q,n}\). This graph is the cyclic-group specialization
of the generalized Hadamard graph in~\cite[Sec.~2.3]{cao}. Example~\ref{ex:shift-neighborhood} in Appendix~\ref{app:examples}
works out a vertex's neighborhood.

The following lemma records the graph's size, edge structure, degree,
and symmetries.

\begin{lemma}[Basic graph properties]\label{lem:graph-properties}
Let \(q\geq2\) and \(\ell\geq1\) be integers, and let \(n=q\ell\).
The graph \(\Om\) has the following properties.
\begin{enumerate}[label=\textup{(\roman*)},leftmargin=*,itemsep=4pt]
\item\label{item:graph-size}
It is finite, with \(q^n\) vertices.

\item\label{item:graph-simple}
It is simple and undirected: there are no loops or multiple edges,
and \(u\sim v\) if and only if \(v\sim u\).

\item\label{item:graph-degree}
For every vertex \(u\), its neighborhood \(N(u)\), the set of its
neighbors, is
\begin{equation}
 N(u)=\{u+s:s\in S_{q,n}\}.
\end{equation}
The graph is regular, meaning that every vertex has the same number
of neighbors. Its common degree is
\begin{equation}\label{eq:degree}
 D_{q,n}=|S_{q,n}|=\frac{n!}{(\ell!)^q}
                  =\mult{n}{\ell,\ldots,\ell}.
\end{equation}

\item\label{item:graph-translations}
For every \(t\in\Z_q^n\), translation \(u\mapsto u+t\) is a
bijection preserving adjacency:
\begin{equation}
 u\sim v\quad\Longleftrightarrow\quad u+t\sim v+t.
\end{equation}
In particular, the graph is \emph{vertex-transitive}: for any two
vertices, there is an adjacency-preserving bijection of the vertex
set taking the first to the second.

\item\label{item:graph-coordinates}
Let \(\pi\) be a permutation of \(\{1,\ldots,n\}\), the coordinate
positions. The map \(u\mapsto\pi u\), defined by
\((\pi u)_j=u_{\pi^{-1}(j)}\), is a bijection preserving adjacency:
\begin{equation}
 u\sim v\quad\Longleftrightarrow\quad\pi u\sim\pi v.
\end{equation}
\end{enumerate}
\end{lemma}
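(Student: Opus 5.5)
The plan is to verify the five items directly from the Cayley-graph definition \eqref{eq:adjacency-definition}, using only two structural facts about the connection set \(S_{q,n}\): it is closed under negation and under coordinate permutations, and it does not contain \(0\). Item~\ref{item:graph-size} is immediate, since the vertex set is \(\Z_q^n\) with \(q^n\) elements.

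For item~\ref{item:graph-simple}, loops are excluded because \(0\notin S_{q,n}\): the zero word has all \(n\) letters equal to \(0\), whereas balance requires the letter \(1\in\Z_q\) to occur exactly \(\ell\geq1\) times, and \(1\neq0\) since \(q\geq2\). Multiple edges cannot occur because adjacency is a relation on vertices. For symmetry, I will show \(-S_{q,n}=S_{q,n}\). Negation \(c\mapsto -c\) is a bijection of \(\Z_q\), so
\[
\#\{j:(-s)_j=c\}=\#\{j:s_j=-c\}=\ell \quad\text{for every } c\in\Z_q .
\]
Thus \(v-u\in S_{q,n}\) if and only if \(u-v\in S_{q,n}\).

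For item~\ref{item:graph-degree}, the description \(N(u)=\{u+s:s\in S_{q,n}\}\) is a restatement of the definition. The map \(s\mapsto u+s\) is injective, so \(|N(u)|=|S_{q,n}|\) for every \(u\), which gives regularity. To count \(S_{q,n}\), note that a balanced word is determined by the ordered partition of the positions \(\{1,\ldots,n\}\) into the \(q\) level sets \(\{j:s_j=c\}\), each of size \(\ell\). The number of such partitions is the multinomial coefficient \(n!/(\ell!)^q\), which establishes \eqref{eq:degree}.

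Item~\ref{item:graph-translations} follows from \((v+t)-(u+t)=v-u\). Translation by \(t\) is a bijection with inverse translation by \(-t\). Vertex-transitivity follows by taking \(t=w-u\), which maps \(u\) to \(w\).

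For item~\ref{item:graph-coordinates}, the map \(u\mapsto\pi u\) is linear and has inverse \(u\mapsto\pi^{-1}u\). Linearity gives \(\pi v-\pi u=\pi(v-u)\). Since \(\pi\) only relocates letters, \(\#\{j:(\pi s)_j=c\}=\#\{j:s_{\pi^{-1}(j)}=c\}=\#\{i:s_i=c\}\), so \(\pi S_{q,n}=S_{q,n}\). The forward implication follows, and the reverse implication follows by applying the same argument with \(\pi^{-1}\).

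None of these steps is a real obstacle. The only points that need care are using \(q\geq2\) and \(\ell\geq1\) to exclude loops, and keeping the convention \((\pi u)_j=u_{\pi^{-1}(j)}\) consistent when proving that the coordinate permutation is linear and invertible.
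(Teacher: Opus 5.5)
Your proposal is correct and follows essentially the same argument as the paper. Both exclude loops because the zero word contains no copy of the letter \(1\), obtain symmetry from \(-S_{q,n}=S_{q,n}\), count \(S_{q,n}\) by a multinomial coefficient, and derive the translation and coordinate-permutation invariances from \((v+t)-(u+t)=v-u\) and \(\pi v-\pi u=\pi(v-u)\).
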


\begin{proof}
We prove the statements in the order listed.

\emph{\ref{item:graph-size}.}
Each of the \(n\) coordinates can contain any of the \(q\) letters,
so the vertex set \(\Z_q^n\) has \(q^n\) elements.

\emph{\ref{item:graph-simple}.}
The zero word is not balanced: it contains no copy of the letter
\(1\), whereas \(\ell\geq1\). Thus \(u\not\sim u\) for every
vertex \(u\). If \(s\) is balanced, then so is \(-s\), because
the map \(c\mapsto-c\) permutes the residues modulo \(q\).
Consequently \(S_{q,n}=-S_{q,n}\), and
\begin{equation}
 v-u\in S_{q,n}\quad\Longleftrightarrow\quad u-v\in S_{q,n}.
\end{equation}
This proves that adjacency is symmetric. The adjacency rule assigns
at most one edge to each unordered pair of distinct vertices, so
there are no multiple edges.

\emph{\ref{item:graph-degree}.}
By definition, \(v\) is a neighbor of \(u\) precisely when
\(v=u+s\) for some \(s\in S_{q,n}\). For fixed \(u\), the map
\(s\mapsto u+s\) is a bijection from \(S_{q,n}\) to \(N(u)\):
its inverse is \(v\mapsto v-u\). Hence every vertex has exactly
\(|S_{q,n}|\) neighbors. To count this set, arrange \(n\)
distinguishable objects, with \(\ell\) objects assigned to each
letter, and then forget the distinctions between objects assigned
to the same letter. Each resulting word is counted \((\ell!)^q\)
times among the \(n!\) arrangements. Therefore
\(|S_{q,n}|=n!/(\ell!)^q\). This quotient counts the words of
length \(n\) in which each of the \(q\) letters occurs exactly
\(\ell\) times.

\emph{\ref{item:graph-translations}.}
Translation by \(t\) is a bijection with inverse translation by
\(-t\), and
\begin{equation}
 (v+t)-(u+t)=v-u.
\end{equation}
It therefore preserves adjacency in both directions. Given vertices
\(u,v\), translation by \(t=v-u\) maps \(u\) to \(v\), proving
vertex transitivity. Here \(t\) may be any word; it need not itself
be a balanced shift.

\emph{\ref{item:graph-coordinates}.}
The coordinate map induced by \(\pi\) is a bijection, with inverse
induced by \(\pi^{-1}\). It satisfies
\begin{equation}
 \pi v-\pi u=\pi(v-u).
\end{equation}
Permuting coordinates leaves the number of occurrences of every
letter unchanged. Thus \(v-u\) is balanced if and only if
\(\pi(v-u)\) is balanced, which proves the last assertion.
\end{proof}

The balanced shifts also determine an explicit bipartition in the
odd-parity regime.

\begin{lemma}[Bipartition in the odd-parity regime]\label{lem:odd-bipartition}
Let \(q\geq2\), \(\ell\geq1\), and \(n=q\ell\), with
\((q-1)\ell\) odd. For each vertex \(x\in\Z_q^n\), let \(\sigma(x)\) be the
representative of \(\sum_jx_j\pmod q\) in \(\{0,\ldots,q-1\}\), and
define \(b:\Z_q^n\to\{0,1\}\) by
\begin{equation}\label{eq:odd-bipartition}
 b(x)=
 \begin{cases}
  0,&0\leq\sigma(x)<q/2,\\
  1,&q/2\leq\sigma(x)<q.
 \end{cases}
\end{equation}
Then \(b(v)=1-b(u)\) for every edge \(\{u,v\}\) of \(\Om\).
Consequently, the classes \(\{x:b(x)=0\}\) and \(\{x:b(x)=1\}\)
form a bipartition of \(\Om\).
\end{lemma}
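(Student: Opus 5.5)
The plan is to track how the coordinate sum \(\sigma\) changes along an edge, and to show that every balanced shift changes it by exactly \(q/2\) modulo \(q\). Once we have that, \(b\) flips along every edge by a short case check.

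\emph{Step 1: parity bookkeeping.} Since \((q-1)\ell\) is odd, both factors are odd. Hence \(q\) is even, so \(q/2\) is an integer and \(b\) splits \(\{0,\ldots,q-1\}\) into two halves of equal size. Also \(\ell\) is odd.

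\emph{Step 2: the sum of a balanced shift.} Let \(s\in S_{q,n}\). Each residue \(c\in\{0,\ldots,q-1\}\) occurs exactly \(\ell\) times in \(s\). Summing the representatives therefore gives
\[
 \sum_j s_j\equiv \ell\sum_{c=0}^{q-1}c=\ell\,\frac{q(q-1)}{2}=\frac q2\cdot(q-1)\ell \pmod q .
\]
Because \((q-1)\ell\) is odd, write \((q-1)\ell=2m+1\). The right-hand side is then \(qm+q/2\equiv q/2\pmod q\). So every balanced shift has coordinate sum \(q/2\) modulo \(q\). This step is the only real content of the lemma. The one point needing care is that \(\ell q(q-1)/2\) is to be reduced as an integer, which Step 1 guarantees.

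\emph{Step 3: the flip.} For an edge \(\{u,v\}\), we have \(v=u+s\) with \(s\in S_{q,n}\) by \eqref{eq:adjacency-definition}. Step 2 then gives \(\sigma(v)\equiv\sigma(u)+q/2\pmod q\).
\begin{itemize}
\item If \(0\le\sigma(u)<q/2\), then \(\sigma(v)=\sigma(u)+q/2\in[q/2,q)\), so \(b(u)=0\) and \(b(v)=1\).
\item If \(q/2\le\sigma(u)<q\), then \(\sigma(v)=\sigma(u)-q/2\in[0,q/2)\), so \(b(u)=1\) and \(b(v)=0\).
\end{itemize}
Either way \(b(v)=1-b(u)\). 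Because adjacency is symmetric by Lemma~\ref{lem:graph-properties}, the labelling of the edge's endpoints does not matter.

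\emph{Step 4: the bipartition.} No edge joins two vertices with equal \(b\)-value, so each class \(\{x:b(x)=i\}\) is an independent set. The two classes partition \(\Z_q^n\), which gives the claimed bipartition. I expect no real obstacle beyond the modular computation in Step 2.
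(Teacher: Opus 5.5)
Your proposal is correct and follows the paper's proof essentially step for step. Both arguments deduce that \(q\) is even and \(\ell\) is odd. Both then show that every balanced shift has coordinate sum \(\ell q(q-1)/2\equiv q/2\pmod q\), and observe that adding \(q/2\) modulo \(q\) interchanges the residue sets \(\{0,\ldots,q/2-1\}\) and \(\{q/2,\ldots,q-1\}\), so \(b\) flips on every edge.
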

\begin{proof}
The parity hypothesis implies that \(q\) is even and \(\ell\) is odd.
Consider any edge \(\{u,v\}\). Its difference
\(s=v-u\) contains every residue exactly \(\ell\) times, so
\begin{equation}
 \sigma(v)-\sigma(u)\equiv\ell\sum_{c=0}^{q-1}c
          =\frac q2\ell(q-1)\equiv\frac q2\pmod q.
\end{equation}
Adding \(q/2\) modulo \(q\) interchanges the residue sets
\(\{0,\ldots,q/2-1\}\) and \(\{q/2,\ldots,q-1\}\).
Hence \(b(v)=1-b(u)\) for every edge, proving the asserted
bipartition.
\end{proof}

We write \(D\) for \(D_{q,n}\) when \(q\) and \(n\) are fixed.
The next subsection introduces the adjacency matrix and explains how
these graph properties support its spectral analysis.

\subsection{Adjacency matrices and spectral bounds}\label{sec:adjacency-spectra}
The allowed shifts describe which pairs of words Bob must distinguish.
We now record these pairs in the graph's adjacency matrix, introduce
its eigenvalues, and establish the spectral inequalities used below.
Section~\ref{sec:spectral-coding} then uses the degree and least
eigenvalue to bound the number of messages.

Let \(G\) be a finite simple undirected graph with nonempty vertex
set \(V\) and unordered-edge set \(E\). The adjacency matrix \(A_G\) records the edges: its entry at \(u,v\)
is one when the vertices are adjacent and zero otherwise:
\begin{equation}
 (A_G)_{u,v}=\begin{cases}
  1, & \text{if }u\sim v,\\
  0, & \text{otherwise.}
 \end{cases}
\end{equation}
For \(\Om\), write \(A=A_{\Om}\). By
Lemma~\ref{lem:graph-properties}\,\ref{item:graph-size}, this matrix has
\(N=q^n\) rows and columns, one for each possible word.
Let \(\one\) denote the vector whose entries are all one, and let
\(\mathfrak S_n\) be the group of coordinate permutations, acting as
in Lemma~\ref{lem:graph-properties}\,\ref{item:graph-coordinates}.
The following lemma records the matrix consequences of
Lemma~\ref{lem:graph-properties}.

\begin{lemma}[Adjacency structure]\label{lem:adjacency-structure}
Let \(q\geq2\), \(\ell\geq1\), and \(n=q\ell\). Then
\begin{equation}\label{eq:adjacency-structure}
 A_{uv}\in\{0,1\},\qquad A_{uu}=0,\qquad
 A^{\mathsf T}=A,\qquad A\one=D_{q,n}\one,
\end{equation}
where \(D_{q,n}=n!/(\ell!)^q\). Moreover, for all
\(u,v,t\in\Z_q^n\) and every \(\pi\in\mathfrak S_n\),
\begin{equation}\label{eq:adjacency-symmetries}
 A_{u+t,v+t}=A_{uv},\qquad A_{\pi u,\pi v}=A_{uv}.
\end{equation}
\end{lemma}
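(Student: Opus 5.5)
The plan is to translate each item of Lemma~\ref{lem:graph-properties} directly into a matrix identity, since the adjacency matrix is just the indicator of the adjacency relation. Concretely, by \eqref{eq:adjacency-definition} we have \(A_{uv}=\one[v-u\in S_{q,n}]\). I will write this out first, so that every claimed identity reduces to a statement about membership of a difference in \(S_{q,n}\).

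The four properties in \eqref{eq:adjacency-structure} follow in turn.
\begin{enumerate}[label=\textup{(\alph*)},leftmargin=*]
\item \(A_{uv}\in\{0,1\}\) holds by definition.
\item \(A_{uu}=0\) is the absence of loops in Lemma~\ref{lem:graph-properties}\,\ref{item:graph-simple}, since \(0\notin S_{q,n}\).
\item Symmetry \(A^{\mathsf T}=A\) is the symmetry of adjacency in the same item, which comes from \(S_{q,n}=-S_{q,n}\).
\item For \(A\one=D_{q,n}\one\), the \(u\)-th entry of \(A\one\) is \(\sum_vA_{uv}=|N(u)|\). By Lemma~\ref{lem:graph-properties}\,\ref{item:graph-degree} this equals \(|S_{q,n}|=n!/(\ell!)^q\) for every \(u\).
\end{enumerate}

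For \eqref{eq:adjacency-symmetries}, translation invariance is immediate because \((v+t)-(u+t)=v-u\), so both entries are the indicator of the same condition. This is also the equivalence in Lemma~\ref{lem:graph-properties}\,\ref{item:graph-translations}. Permutation invariance uses linearity of the coordinate action, \(\pi v-\pi u=\pi(v-u)\), together with the fact that \(\pi s\) is balanced if and only if \(s\) is. This is the equivalence recorded in Lemma~\ref{lem:graph-properties}\,\ref{item:graph-coordinates}.

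There is no real obstacle. The only point needing care is a bookkeeping check: \(\pi u\) is defined by \((\pi u)_j=u_{\pi^{-1}(j)}\), and I need this action to commute with coordinatewise subtraction. That holds because subtraction is applied slot by slot before or after reindexing, with the same result either way.
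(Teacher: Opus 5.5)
Your proposal is correct and follows the paper's proof essentially step for step. Each matrix identity is read off from the corresponding item of Lemma~\ref{lem:graph-properties}: no loops and \(S_{q,n}=-S_{q,n}\) give the first three identities, the degree count gives \(A\one=D_{q,n}\one\), and items \ref{item:graph-translations}--\ref{item:graph-coordinates} give the two invariances, with your remark on commuting the coordinate action with subtraction being exactly the identity \(\pi v-\pi u=\pi(v-u)\) used there.
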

\begin{proof}
By the definition of the adjacency matrix and
Lemma~\ref{lem:graph-properties}\,\ref{item:graph-simple}, its entries
lie in \(\{0,1\}\), its diagonal is zero, and \(A^{\mathsf T}=A\).
By part~\ref{item:graph-degree}, every row contains exactly
\(D_{q,n}\) ones, so \(A\one=D_{q,n}\one\).
Finally, parts~\ref{item:graph-translations} and~\ref{item:graph-coordinates}
show that translations and coordinate permutations preserve adjacency,
which gives the two matrix-entry identities.
\end{proof}

For a general graph \(G\), a function \(f:V\to\C\) assigns a value
\(f(u)\) to each vertex \(u\). We identify this function with the vector of its values,
in the vertex order used for \(A_G\).
For two such functions, the standard inner product is
\(\langle f,g\rangle=\sum_{u\in V}\overline{f(u)}g(u)\),
and the associated norm satisfies \(\|f\|^2=\langle f,f\rangle\).
At \(u\), the product \(A_Gf\) is the sum of the values at its
neighbors: \((A_Gf)(u)=\sum_{v\sim u}f(v)\).
A nonzero vector \(f\) is an eigenvector with \emph{eigenvalue}
\(\lambda\) if this sum equals \(\lambda f(u)\) at every vertex,
with the same factor \(\lambda\) throughout. This is the equation
\(A_Gf=\lambda f\). These eigenvalues form the
graph's \emph{adjacency spectrum}. They are real because \(A_G\) is
real symmetric; for \(\Om\), this symmetry is recorded in
Lemma~\ref{lem:adjacency-structure}. We write
\(\tau=\lambda_{\min}(G)\) for the smallest. Example~\ref{ex:triangle} in Appendix~\ref{app:examples}
computes the adjacency spectrum of a triangle.

The following lemma collects the Rayleigh--Ritz characterization
and its consequences for regular graphs. Part~\ref{item:regular-spectral-bound}
decomposes each vector into a constant component and a component
orthogonal to it. It uses the known eigenvalue \(D\) on the constant
component and applies Rayleigh--Ritz to the remaining component.
We use the resulting inequality for the communication and coloring lower bounds in
Sections~\ref{sec:source-coding} and~\ref{sec:quantum}.

\begin{lemma}[Rayleigh--Ritz and spectral inequalities]
\label{lem:regular-spectral-inequality}
Let \(G\) be a finite simple undirected graph with vertex set \(V\),
unordered-edge set \(E\), and \(N=|V|\geq1\). Let \(A_G\) be its
adjacency matrix and \(\tau=\lambda_{\min}(G)\) its least eigenvalue.
\begin{enumerate}[label=\textup{(\roman*)},leftmargin=*,itemsep=4pt]
\item\label{item:rayleigh-ritz}
The Rayleigh--Ritz characterization~\cite[Sec.~2.4]{bh} gives
\begin{equation}\label{eq:rayleigh-ritz}
 \tau
 =\min_{0\ne f\in\mathbb R^V}
   \frac{f^{\mathsf T}A_Gf}{f^{\mathsf T}f}
 =\min_{0\ne f\in\mathbb R^V}
   \frac{2\sum_{\{u,v\}\in E}f(u)f(v)}{\sum_{u\in V}f(u)^2}.
\end{equation}
In particular, \(\langle f,A_Gf\rangle\geq\tau\|f\|^2\)
for every \(f\in\C^V\).

\item\label{item:regular-neighbor-average}
If \(G\) is \(D\)-regular, then \(A_G\one=D\one\), where
\(\one\) is the vector whose entries are all one. Thus \(D\)
is an eigenvalue. If \(D>0\), the operator \(A_G/D\) averages
the values at neighboring vertices:
\begin{equation}\label{eq:neighbor-average}
 \left(\frac{A_G}{D}f\right)(u)
 =\frac1D\sum_{v\sim u}f(v)
 \qquad(f\in\C^V,\ u\in V).
\end{equation}

\item\label{item:regular-spectral-bound}
If \(G\) is \(D\)-regular, let \(e=N^{-1/2}\one\) be the
normalized constant vector. Then, for every \(f\in\C^V\),
\begin{equation}\label{eq:regular-spectral-quadratic}
 \langle f,A_Gf\rangle
 \geq\tau\|f\|^2+(D-\tau)|\langle e,f\rangle|^2.
\end{equation}
Equivalently,
\begin{equation}\label{eq:regular-spectral-operator}
 A_G\succeq\tau I+(D-\tau)ee^*.
\end{equation}
Here \(I\) is the identity matrix, \(e^*\) is the conjugate transpose
of \(e\), and \(ee^*=|e\rangle\langle e|\) is the orthogonal
projection onto the constant vectors. The symbol \(\succeq\)
denotes positive semidefinite order: the difference between the
left and right sides is positive semidefinite.
\end{enumerate}
\end{lemma}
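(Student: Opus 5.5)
The plan is to derive all three parts from the spectral theorem for real symmetric matrices, using an orthonormal eigenbasis of \(A_G\).

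For part~\ref{item:rayleigh-ritz}, since \(A_G\) is real symmetric, we choose a real orthonormal eigenbasis \(\phi_1,\ldots,\phi_N\) with eigenvalues \(\lambda_1\le\cdots\le\lambda_N\), so that \(\tau=\lambda_1\). Expanding \(f=\sum_i c_i\phi_i\) gives
\[
\langle f,A_Gf\rangle=\sum_i\lambda_i|c_i|^2\ge\tau\sum_i|c_i|^2=\tau\|f\|^2 .
\]
This holds for complex \(f\) as well, because the \(c_i\) may be complex. Equality holds at \(f=\phi_1\), so the minimum over real nonzero \(f\) equals \(\tau\). For the edge form of the quotient, we write
\[
f^{\mathsf T}A_Gf=\sum_{u,v}(A_G)_{uv}f(u)f(v).
\]
The diagonal vanishes, and each unordered edge contributes twice, which yields the factor \(2\sum_{\{u,v\}\in E}f(u)f(v)\).

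For part~\ref{item:regular-neighbor-average}, \(D\)-regularity means that every row of \(A_G\) has exactly \(D\) ones, so \(A_G\one=D\one\). Hence \(D\) is an eigenvalue, and the formula~\eqref{eq:neighbor-average} is just the definition of matrix multiplication divided by \(D\).

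For part~\ref{item:regular-spectral-bound}, we decompose \(f=\alpha e+g\) with \(\alpha=\langle e,f\rangle\) and \(g\perp e\). Since \(A_G\) is symmetric and \(A_Ge=De\), the complement \(e^\perp\) is invariant under \(A_G\). It follows that \(\langle e,A_Gg\rangle=\langle A_Ge,g\rangle=0\), so the cross terms vanish and
\[
\langle f,A_Gf\rangle=D|\alpha|^2+\langle g,A_Gg\rangle\ge D|\alpha|^2+\tau\|g\|^2 ,
\]
where the inequality uses part~\ref{item:rayleigh-ritz} applied to \(g\). Substituting \(\|g\|^2=\|f\|^2-|\alpha|^2\) gives \(\tau\|f\|^2+(D-\tau)|\alpha|^2\), which is~\eqref{eq:regular-spectral-quadratic}. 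The operator form~\eqref{eq:regular-spectral-operator} is the same statement, since \(\langle f,ee^*f\rangle=|\langle e,f\rangle|^2\) and positive semidefiniteness is tested by quadratic forms over \(\C^V\).

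None of the steps is a real obstacle. The only points that need care are the orthogonality of the cross terms, which relies on \(A_G\) being symmetric, and carrying the complex inner product through the expansion correctly.
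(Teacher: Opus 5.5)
Your proposal is correct and follows essentially the same route as the paper: a real orthonormal eigenbasis for the Rayleigh quotient and the double-counted edges in part (i), row sums for part (ii), and in part (iii) the decomposition \(f=\langle e,f\rangle e+g\) with vanishing cross terms, followed by part (i) applied to \(g\). The only cosmetic difference is the complex extension in part (i): you expand \(f\) in the real eigenbasis with complex coefficients, whereas the paper writes \(f=x+iy\) and uses \(\langle f,A_Gf\rangle=x^{\mathsf T}A_Gx+y^{\mathsf T}A_Gy\); both arguments are valid.
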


\begin{proof}
\emph{\ref{item:rayleigh-ritz}.}
Since \(A_G\) is real symmetric, it has an orthonormal basis of real
eigenvectors \(v_1,\ldots,v_N\), with corresponding eigenvalues
\(\lambda_1,\ldots,\lambda_N\). Writing a nonzero real vector as
\(f=\sum_{j=1}^N c_jv_j\), where \(c_j\in\mathbb R\), gives
\begin{equation}\label{eq:rayleigh-weighted-average}
 \frac{f^{\mathsf T}A_Gf}{f^{\mathsf T}f}
 =\frac{\sum_{j=1}^N\lambda_jc_j^2}{\sum_{j=1}^Nc_j^2}.
\end{equation}
The coefficients \(c_j^2/\sum_{i=1}^Nc_i^2\) are nonnegative
and sum to one. The quotient is therefore a weighted average of
the eigenvalues and cannot be smaller than \(\tau\). Choosing
\(f\) to be an eigenvector for \(\tau\) attains equality.
Moreover, each unordered edge is counted twice in the matrix product,
so
\begin{equation}
 f^{\mathsf T}A_Gf=2\sum_{\{u,v\}\in E}f(u)f(v).
\end{equation}
This proves \eqref{eq:rayleigh-ritz}. For a complex vector
\(f=x+iy\), with \(x,y\in\mathbb R^V\), symmetry gives
\begin{equation}
 \langle f,A_Gf\rangle
 =x^{\mathsf T}A_Gx+y^{\mathsf T}A_Gy
 \geq\tau\bigl(\|x\|^2+\|y\|^2\bigr)
 =\tau\|f\|^2.
\end{equation}
The real-vector bound also holds at zero, so this proves the stated
inequality for every complex vector.

\emph{\ref{item:regular-neighbor-average}.}
Each vertex has exactly \(D\) neighbors, so
\((A_G\one)(u)=\sum_{v\sim u}1=D\) for every \(u\).
Thus \(A_G\one=D\one\). When \(D>0\), dividing
\((A_Gf)(u)=\sum_{v\sim u}f(v)\) by \(D\) gives
\eqref{eq:neighbor-average}, the arithmetic mean of the neighboring
values.

\emph{\ref{item:regular-spectral-bound}.}
Write \(f=ce+w\), where \(c=\langle e,f\rangle\) and
\(w\perp e\). Part~\ref{item:regular-neighbor-average} gives
\(A_Ge=De\), and symmetry makes the cross terms vanish.
By part~\ref{item:rayleigh-ritz},
\(\langle w,A_Gw\rangle\geq\tau\|w\|^2\). Hence
\begin{equation}
 \begin{aligned}
  \langle f,A_Gf\rangle
  &=D|c|^2+\langle w,A_Gw\rangle\\
  &\geq D|c|^2+\tau\|w\|^2\\
  &=\tau\|f\|^2+(D-\tau)|\langle e,f\rangle|^2.
 \end{aligned}
\end{equation}
Since this holds for every \(f\), it is equivalent to
\eqref{eq:regular-spectral-operator}.
\end{proof}

\subsection{From the least eigenvalue to coding optimality}\label{sec:spectral-coding}
We now connect these spectral quantities to the coding task.
The following general bound shows how the graph's degree and least
eigenvalue constrain the number of messages required for perfect recovery.

\begin{lemma}[Spectral communication bound]\label{lem:source-hoffman}
Let \(G\) be a finite simple undirected \(D\)-regular graph with
least adjacency eigenvalue \(\tau<0\). Consider the coding setup
of Figure~\ref{fig:source-coding}, with characteristic graph \(G\):
the allowed candidate pairs are precisely the edges of \(G\).
Any protocol that recovers either
candidate with certainty for every edge, using one classical
message \(a\in[k]\), satisfies
\begin{equation}\label{eq:source-spectral-bound}
 k\geq1-\frac{D}{\tau}.
\end{equation}
The bound holds for arbitrary local POVMs and finite-dimensional
shared states independent of the inputs.
\end{lemma}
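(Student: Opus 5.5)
The plan is to turn an arbitrary zero-error protocol into a positive semidefinite Gram matrix on the vertex set that vanishes on edges. We then pair it with the operator inequality \eqref{eq:regular-spectral-operator} of Lemma~\ref{lem:regular-spectral-inequality}\,\ref{item:regular-spectral-bound}, in the manner of Hoffman's bound.

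\textbf{Setting up Bob's conditional states.} Fix a protocol with shared state \(\rho\) on \(\mathcal H_A\otimes\mathcal H_B\) and, for each word \(x\in V\), a POVM \(\{P^x_a\}_{a\in[k]}\) for Alice. Define Bob's unnormalized conditional states
\begin{equation}
 \sigma_{x,a}=\tr_A\bigl[(P^x_a\otimes I_B)\rho\bigr]\succeq0.
\end{equation}
They satisfy \(\sum_a\sigma_{x,a}=\rho_B\) for every \(x\), which is the no-signalling identity. Hence \(0\preceq\sigma_{x,a}\preceq\rho_B\).

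\textbf{Orthogonality from zero error.} For an edge \(\{u,v\}\) and a message \(a\), Bob knows \(\{u,v\}\) and \(a\), and must identify \(x\) with certainty. If both \(\sigma_{u,a}\) and \(\sigma_{v,a}\) are nonzero, their normalizations must be perfectly distinguishable, so their supports are orthogonal. If either is zero, orthogonality is trivial. In all cases \(\tr(\sigma_{u,a}\sigma_{v,a})=0\). This needs a short careful argument: Bob's measurement may depend on the pair and on \(a\), and perfect two-state discrimination forces orthogonal supports. I expect this step to be the main conceptual point to write cleanly.

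\textbf{The Gram matrix and the trace inequality.} Let \(M_{uv}=\sum_a\tr(\sigma_{u,a}\sigma_{v,a})\). This is a sum over \(a\) of Hilbert--Schmidt Gram matrices, so \(M\succeq0\), and \(M_{uv}=0\) whenever \(u\sim v\). Therefore \(\tr(A_GM)=0\). Applying \(A_G\succeq\tau I+(D-\tau)ee^*\) and taking the trace against \(M\succeq0\) gives
\begin{equation}
 0\geq\tau\,\tr M+\frac{D-\tau}{N}\,\one^{\mathsf T}M\one .
\end{equation}

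\textbf{Estimating the two terms.} Write \(r=\tr(\rho_B^2)>0\).
\begin{itemize}
\item For the trace term, since \(\sigma_{u,a}\preceq\rho_B\) we have \(\tr(\sigma_{u,a}^2)\leq\tr(\sigma_{u,a}\rho_B)\). Summing over \(a\) gives \(\sum_a\tr(\sigma_{u,a}^2)\leq r\), hence \(\tr M\leq Nr\). Because \(\tau<0\), this yields \(\tau\tr M\geq\tau Nr\).
\item For the all-ones term, \(\one^{\mathsf T}M\one=\sum_a\|X_a\|_{\HS}^2\) with \(X_a=\sum_u\sigma_{u,a}\) and \(\sum_aX_a=N\rho_B\). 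Cauchy--Schwarz over the \(k\) messages gives \(\one^{\mathsf T}M\one\geq N^2r/k\).
\end{itemize}
Substituting both estimates yields \(0\geq Nr\bigl(\tau+(D-\tau)/k\bigr)\). Dividing by \(Nr>0\) gives \(k\geq(D-\tau)/(-\tau)=1-D/\tau\).

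None of these steps uses any assumption on the dimension beyond finiteness, or any structure of the measurements. This gives the claimed generality.
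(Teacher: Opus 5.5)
Your proposal is correct and follows essentially the same route as the paper. Your Gram matrix satisfies $\tr(BM)=\sum_a\langle z_a,(B\otimes I)z_a\rangle$ for the paper's vectors $z_a=\sum_x e_x\otimes R_{x,a}$, so pairing $M$ with $A_G\succeq\tau I+(D-\tau)ee^*$ is exactly the paper's tensored spectral inequality, followed by the same purity bound and the same Cauchy--Schwarz over messages. Your bound $\tr(\sigma_{u,a}^2)\le\tr(\sigma_{u,a}\rho_B)$ replaces the paper's expansion of $\tr[(\sum_a R_{x,a})^2]$ with nonnegative cross terms. The orthogonality step you flag is closed in the paper as expected: zero error gives $\tr[(I-F)R_{u,a}]=\tr(FR_{v,a})=0$ for Bob's effect $F$, hence $R_{u,a}R_{v,a}=R_{u,a}FR_{v,a}=0$.
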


This bound follows by combining the general source-coding
bound~\cite[Theorem~2.1]{bblps} with the spectral
bound~\cite[Sec.~2.1]{bdov}.
Appendix~\ref{app:prior-regimes} gives this derivation in
Eq.~\eqref{eq:prior-spectral-coding} and explains its coding
consequences for previously established spectral formulas.
We give a direct proof in Section~\ref{sec:source-coding}.
Example~\ref{ex:triangle} in Appendix~\ref{app:examples}
illustrates this bound for the triangle.

For \(\Om\), the following spectral theorem supplies the required
least eigenvalue.

\begin{theorem}[Least adjacency eigenvalue]\label{thm:main}
Let \(q\geq2\), \(\ell\geq1\), and \(n=q\ell\).
If \((q-1)\ell\) is even, then
\begin{equation}
 \lambda_{\min}(\Om)=-\frac{D_{q,n}}{n-1}.
\end{equation}
If \((q-1)\ell\) is odd, the least adjacency eigenvalue is
\(-D_{q,n}\).
\end{theorem}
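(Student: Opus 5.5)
The approach is to diagonalize the Cayley graph with characters of \(\Z_q^n\), compute the eigenvalues exactly at a few test frequencies, and then prove a uniform lower bound on all eigenvalues. For \(y\in\Z_q^n\) and \(\omega=e^{2\pi i/q}\), the character \(\chi_y(x)=\omega^{y\cdot x}\) is an eigenvector of \(A\), because \(A\) is translation invariant (Lemma~\ref{lem:adjacency-structure}). Its eigenvalue is
\[
 \lambda_y=\sum_{s\in S_{q,n}}\omega^{y\cdot s}=D_{q,n}\,\E_{\pi\in\mathfrak S_n}\bigl[\omega^{y\cdot\pi s_0}\bigr],
\]
where \(s_0\) is any fixed balanced word. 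The characters form a complete eigenbasis, so \(\lambda_{\min}(\Om)=\min_y\lambda_y\). By the coordinate symmetry of Lemma~\ref{lem:graph-properties}\,\ref{item:graph-coordinates}, \(\lambda_y\) depends only on the type \(m=(m_0,\ldots,m_{q-1})\) of \(y\), where \(m_c=\#\{j:y_j=c\}\).

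\emph{Odd parity.} Since \(A\) is \(D\)-regular, every eigenvalue is at least \(-D\). Take \(y=\one\). Then \(y\cdot s=\ell\sum_c c=\ell q(q-1)/2\), which is \(\equiv q/2\pmod q\) when \((q-1)\ell\) is odd, as in Lemma~\ref{lem:odd-bipartition}. Hence \(\lambda_{\one}=-D\). Equivalently, the bipartition of Lemma~\ref{lem:odd-bipartition} yields the \(\pm1\) eigenvector for \(-D\).

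\emph{Even parity, attainment.} Take \(y=(1,-1,0,\ldots,0)\). For a uniformly random balanced \(s\), the pair \((s_1,s_2)\) equals \((a,a)\) with probability \(\ell(\ell-1)/(n(n-1))\) and equals \((a,b)\) with \(a\neq b\) with probability \(\ell^2/(n(n-1))\). Using \(\sum_{a\ne b}\omega^{a-b}=-q\), we get
\[
 \lambda_y=\frac{D}{n(n-1)}\bigl(q\ell(\ell-1)-q\ell^2\bigr)=-\frac{D}{n-1}.
\]
So it remains to show \(\lambda_y\geq -D/(n-1)\) for every \(y\).

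\emph{Even parity, lower bound.} This is the main obstacle. First I would record two reductions. Shifting \(y\) by a constant word \(c\one\) multiplies \(\lambda_y\) by \(\omega^{c\ell q(q-1)/2}=1\) in even parity. Scaling \(y\) by a unit of \(\Z_q\) permutes \(S_{q,n}\) and so leaves \(\lambda_y\) unchanged. Next, I would write \(\lambda_y\) as a coefficient extraction: \(\lambda_y\) is the coefficient of \(\prod_c z_c^{\ell}\) in
\[
 \prod_{a}\Bigl(\sum_c \omega^{ac}z_c\Bigr)^{m_a}.
\]
Substituting \(z_c=\sum_b \omega^{-bc}w_b/q\), i.e.\ taking the discrete Fourier transform in the letter variables, should turn this into a count of contingency tables, namely \(q\times q\) nonnegative integer matrices with row sums \(m\) and column sums \(\ell\), weighted by \(\omega^{\sum_{a,c}ac\,t_{ac}}\).

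The plan is then a combinatorial counting argument. I would group the tables by the orbits of the row and column symmetries under which the weights differ by roots of unity. Within each orbit the sum of the weights is a nonnegative rational multiple of \(D\). The only orbits that can contribute negatively are those with exactly one off-balance transfer, as in the test vector above. I would first attempt this by induction on the number of nonzero \(y_j\) after normalizing \(m_0\) to be maximal. Stripping off a pair of coordinates via the two-coordinate computation above should give a recursion \(\lambda_y=\alpha\lambda_{y'}+\beta\), with \(\alpha\) and \(\beta\) controlled by \(\ell\) and \(n\). The bound \(-D/(n-1)\) should propagate through this recursion.

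If the recursion fails to close, I would instead try to show directly that \(\E_\pi[\omega^{y\cdot\pi s_0}]+1/(n-1)\) is a sum of squared moduli. This amounts to writing \(A+\frac{D}{n-1}I\) as a Gram matrix built from the \(n\)-dimensional phase representation of Cao et al.\ (the orthogonal-vectors structure underlying the \(n\)-message protocol). That representation is what makes \(n\) the expected answer, so it is the natural source of such a decomposition.
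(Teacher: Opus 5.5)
Your odd-parity argument and your even-parity attainment at $y=(1,-1,0,\ldots,0)$ are correct and match the paper. The paper gets $-1/(n-1)$ from $0=\E|\sum_jW_j|^2$ instead of from the pair distribution, but it is the same value. Your reductions by constant shifts and unit scalings are also valid. The gap is the even-parity lower bound $\rho(y)\ge -1/(n-1)$ for every nonconstant frequency, which is the substance of the theorem. None of your three routes supplies it.

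\begin{itemize}
\item \emph{Contingency tables.} In $\lambda_y=\sum_T\prod_a\binom{m_a}{t_{a0},\ldots,t_{a,q-1}}\omega^{\sum_{a,c}ac\,t_{ac}}$ the weights are arbitrary $q$-th roots of unity. The absolute values sum to exactly $D$, so the triangle inequality on this side gives only $|\rho|\le1$, and you do not say where the needed cancellation comes from. Your claim that each orbit sum is a nonnegative multiple of $D$ is unsupported and contradicts your next sentence, which has some orbits contributing negatively.
\item \emph{Recursion.} The recursion $\lambda_y=\alpha\lambda_{y'}+\beta$ does not exist. Conditioning on $(s_1,s_2)$ leaves the other $n-2$ coordinates uniform over the unbalanced composition $\mathbf b-e_{s_1}-e_{s_2}$. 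What appears is therefore a phase-weighted mixture of averages over different unbalanced classes, not an eigenvalue of a smaller balanced graph.
\item \emph{Gram fallback.} This points the wrong way. The flat phase vectors form an orthonormal representation certifying $\vartheta(\overline{\Om})\le n$. With $1-D/\tau\le\vartheta(\overline{\Om})$ this gives only $\tau\le -D/(n-1)$, the direction you already have, so it cannot yield a Gram decomposition of $A+\frac{D}{n-1}I$.
\end{itemize}

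The missing idea is to use the dual coefficient identity, in which the balanced composition sits in the exponent: $\rho(a)=[z^r]\prod_t\bigl(\sum_c\zeta_q^{ct}z_c\bigr)^\ell/M_r=[z^r](\det C_q(z))^\ell/M_r$. The product of the $q$ Fourier eigenvalues of the circulant is its determinant, so all root-of-unity phases collapse into permutation signs $\pm1$. The triangle inequality then gives a nontrivial bound, $|\rho(a)|\le[z^r](\per C_q(z))^\ell/M_r$. This is the fraction of words of composition $r$ whose block maps $i\mapsto i+w(b,i)$ are permutations. The paper bounds that fraction by $1/(n-1)$ with a switching argument:
\begin{itemize}
\item If some letter occurs $k$ times with $2\le k\le n-2$, each valid word has $k(n-k)$ swaps of that letter with a different letter, and all of them produce invalid words. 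Each invalid word has at most two such repairs, since a swap creates either two doubled targets or one tripled target and at most one preimage per fibre carries the fixed letter. This gives a ratio at most $2/(k(n-k)+2)\le1/(n-1)$.
\item A letter occurring $n-1$ times admits no valid word.
\item For $r=(1,\ldots,1)$ with $q$ odd, all swaps are permitted and each invalid word has at most three repairs.
\end{itemize}
Together with $\rho(c\one)=1$ this gives the lower bound. Without this step or an equivalent uniform estimate, your proposal proves only $\lambda_{\min}(\Om)\le -D_{q,n}/(n-1)$ in even parity.
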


The odd-parity least-eigenvalue formula holds without a length
threshold and is already implied by the sign identity of
Cao et al.~\cite[Lemma~4.3]{cao} together with regularity.
We give a direct proof using the bipartition of Lemma~\ref{lem:odd-bipartition}.

\begin{proof}[Proof of the odd-parity case of Theorem~\ref{thm:main}]
Suppose \((q-1)\ell\) is odd, and let \(b\) be the class label
from Lemma~\ref{lem:odd-bipartition}. Set \(f(x)=(-1)^{b(x)}\),
which is nonzero and satisfies \(f(v)=-f(u)\) whenever \(u\sim v\).
By Lemma~\ref{lem:graph-properties}\,\ref{item:graph-degree}, the graph
is \(D_{q,n}\)-regular, independently of parity. Thus
\begin{equation}
 (Af)(u)=\sum_{v\sim u}f(v)=-D_{q,n}f(u).
\end{equation}
Hence \(-D_{q,n}\) is an adjacency eigenvalue. Every eigenvalue
of a \(D_{q,n}\)-regular graph is at least
\(-D_{q,n}\)~\cite[Sec.~3.1]{bh}, so it is the least eigenvalue.
\end{proof}

For \((q-1)\ell\) even, Lemma~\ref{lem:graph-properties} supplies
the graph hypotheses of Lemma~\ref{lem:source-hoffman}. Together with
the even-parity assertion of Theorem~\ref{thm:main}, this gives
\(k\geq1-D/\tau=n\), proving optimality of the existing
\(n\)-message code.

The odd-parity spectral assertion is now proved. The remaining proof of
Theorem~\ref{thm:main} concerns \((q-1)\ell\) even and requires
showing that \(-D_{q,n}/(n-1)\) is an eigenvalue and that no eigenvalue
is smaller. The main work is to rule out smaller eigenvalues.
Section~\ref{sec:fourier} uses Fourier analysis to express the
eigenvalues and reduce this task to a counting problem.
Section~\ref{sec:switching} proves the required counting bound.
In Section~\ref{sec:spectral}, we combine these results with an
explicit eigenvector to establish the formula.

\section{From the graph spectrum to a counting problem}\label{sec:fourier}
Throughout Sections~\ref{sec:fourier}--\ref{sec:spectral}, fix integers
\(q\geq2\) and \(\ell\geq1\), set \(n=q\ell\), and assume
\((q-1)\ell\) is even. In particular, \(n\geq3\).
All statements in these sections use these standing assumptions.

We now use Fourier analysis to compute the adjacency eigenvalues
and express them through polynomial coefficients. This reduces the
remaining assertion of Theorem~\ref{thm:main} to a counting problem
addressed in Section~\ref{sec:switching}.

We write \(A=A_{\Om}\) for the adjacency matrix
introduced in Section~\ref{sec:adjacency-spectra}, whose matrix
properties are recorded in Lemma~\ref{lem:adjacency-structure}.

\subsection{Fourier waves on the graph}\label{sec:fourier-waves}
To find the eigenvalues of \(\Om\), we use its description by
shifts. By
Lemma~\ref{lem:graph-properties}\,\ref{item:graph-degree}, the neighbors
of \(u\) are precisely the words \(u+s\) with \(s\in S_{q,n}\).
Thus applying the adjacency operator means summing the shifted
functions \(u\mapsto f(u+s)\). This makes the graph accessible
to the standard Fourier diagonalization of Cayley graphs over finite
abelian groups~\cite[Sec.~1.4.9]{bh}.
A Fourier wave is a nonzero function \(f:\Z_q^n\to\C\) such that
shifting its argument by any fixed \(s\in\Z_q^n\) multiplies every
value by the same phase factor \(\alpha_s\):
\(f(u+s)=\alpha_s f(u)\) for every vertex \(u\), with
\(|\alpha_s|=1\).
The following lemma turns this behavior under shifts into an
adjacency eigenvalue.

\begin{lemma}[From shifts to adjacency eigenvalues]
\label{lem:shifts-to-eigenvalues}
For every function \(f:\Z_q^n\to\C\) and every vertex \(u\),
\begin{equation}\label{eq:adjacency-shifts}
 (Af)(u)=\sum_{s\in S_{q,n}}f(u+s).
\end{equation}
If \(f\ne0\) and, for each allowed shift \(s\), there is a
complex number \(\alpha_s\) such that
\begin{equation}
 f(u+s)=\alpha_s f(u)\qquad\text{for every vertex }u,
\end{equation}
then \(f\) is an eigenvector of \(A\), with real eigenvalue
\begin{equation}\label{eq:shift-eigenvalue}
 \lambda=\sum_{s\in S_{q,n}}\alpha_s.
\end{equation}
\end{lemma}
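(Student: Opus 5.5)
The first identity \eqref{eq:adjacency-shifts} is a reindexing of the neighbor sum. By definition, \((Af)(u)=\sum_{v\sim u}f(v)\). By Lemma~\ref{lem:graph-properties}\,\ref{item:graph-degree}, the map \(s\mapsto u+s\) is a bijection from \(S_{q,n}\) onto \(N(u)\). Substituting \(v=u+s\) therefore gives \(\sum_{s\in S_{q,n}}f(u+s)\). This holds for an arbitrary \(f\); no hypothesis on \(f\) is used.

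For the eigenvector claim, we insert the shift relation \(f(u+s)=\alpha_sf(u)\) into this identity. This gives \((Af)(u)=\bigl(\sum_{s\in S_{q,n}}\alpha_s\bigr)f(u)\) for every vertex \(u\), with a factor \(\lambda=\sum_s\alpha_s\) that does not depend on \(u\). Since \(f\ne0\), this is exactly the statement \(Af=\lambda f\).

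It remains to show that \(\lambda\) is real. The quickest route uses the fact that \(A\) is real symmetric by Lemma~\ref{lem:adjacency-structure}, hence Hermitian. Computing \(\langle f,Af\rangle=\lambda\|f\|^2\) and \(\langle Af,f\rangle=\bar\lambda\|f\|^2\), and noting that these are equal, gives \(\lambda=\bar\lambda\) because \(\|f\|^2>0\).

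As a second, direct route, we use the symmetry \(S_{q,n}=-S_{q,n}\) from Lemma~\ref{lem:graph-properties}\,\ref{item:graph-simple}. Pick a vertex \(u\) with \(f(u)\ne0\). Applying the relation at \(u\) with shift \(s\), and then at \(u+s\) with shift \(-s\), gives \(\alpha_s\alpha_{-s}=1\). If the phases are unimodular, as for the Fourier waves defined above, this yields \(\alpha_{-s}=\bar\alpha_s\), so the sum pairs into conjugates and is real. In the general case the lemma does not assume \(|\alpha_s|=1\), so I would rely on the Hermitian argument there.

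The only point needing care is this last one: the lemma allows arbitrary complex \(\alpha_s\). Using self-adjointness of \(A\) avoids the issue entirely, so I do not expect a genuine obstacle.
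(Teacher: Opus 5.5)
Your proof is correct and matches the paper's: you reindex the neighbor sum via the bijection $s\mapsto u+s$ from Lemma~\ref{lem:graph-properties}\,\ref{item:graph-degree}, substitute the shift relation, and obtain reality from $A$ being real symmetric (Lemma~\ref{lem:adjacency-structure}). Your second route also works in full generality: iterating the relation $q$ times and using $qs=0$ gives $\alpha_s^q=1$, so the $\alpha_s$ are automatically unimodular.
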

\begin{proof}
By Lemma~\ref{lem:graph-properties}\,\ref{item:graph-degree}, the
neighbors of \(u\) are precisely the vertices \(u+s\) with
\(s\in S_{q,n}\), which gives~\eqref{eq:adjacency-shifts}.
Substituting the assumed shift relation yields
\begin{equation}
 (Af)(u)=\sum_{s\in S_{q,n}}\alpha_s f(u)=\lambda f(u).
\end{equation}
The eigenvalue is real because the adjacency matrix is real
symmetric by Lemma~\ref{lem:adjacency-structure}.
\end{proof}

Lemma~\ref{lem:shifts-to-eigenvalues} shows that every Fourier wave
is an adjacency eigenvector. We now construct an orthonormal basis
of such waves, making the standard Fourier diagonalization
explicit~\cite[Lemma~2.3]{cao}.
The vertex set \(\Z_q^n\) is a finite abelian group under
coordinatewise addition modulo \(q\), the direct product of \(n\)
copies of the cyclic group \(\Z_q\).
Put \(\zeta_q=\exp(2\pi i/q)\). For each fixed index
\(a=(a_1,\ldots,a_n)\in\Z_q^n\), define the \emph{Fourier wave}
\(\chi_a:\Z_q^n\to\C\) by
\begin{equation}
 \chi_a(x)=\zeta_q^{a\cdot x}
 =\zeta_q^{a_1x_1+\cdots+a_nx_n}
 =\exp\!\left(\frac{2\pi i}{q}\sum_{j=1}^n a_jx_j\right).
\end{equation}
Thus we raise the fixed complex number \(\zeta_q\) to an integer
power, using integer representatives of the entries of \(a\) and
\(x\). Only the exponent modulo \(q\) matters, because
\(\zeta_q^q=1\), so the value is independent of these representatives.
The index \(a\) selects the wave, while \(x\) is the vertex at
which we evaluate it. Every value \(\chi_a(x)\) has absolute value one. Example~\ref{ex:fourier-wave} in Appendix~\ref{app:examples}
evaluates one of these waves explicitly.

The frequency \(a\) specifies the phase changes: increasing
coordinate \(j\) of \(x\) by one multiplies the wave by
\(\zeta_q^{a_j}\).

These waves are \emph{characters} in the sense of representation
theory. Indeed, they satisfy
\begin{equation}
 \chi_a(x+y)=\chi_a(x)\chi_a(y),
\end{equation}
so multiplication by \(\chi_a(x)\) defines a one-dimensional unitary
representation of \(\Z_q^n\). Its character, the trace of the
representing matrix, is simply \(\chi_a(x)\).
Every irreducible complex representation of a finite abelian group
is one-dimensional, and the \(q^n\) waves above give all the
irreducible characters of \(\Z_q^n\).\footnote{Serre's textbook treats
characters, abelian groups, and cyclic groups
in~\cite[Secs.~2.1, 3.1, and~5.1]{serre}.}

For each \(s\in\Z_q^n\), the shift operator \(T_s\) acts on
functions by \((T_s f)(x)=f(x+s)\). Together, these operators
give the regular representation. Each wave spans an invariant
one-dimensional space because \(T_s\chi_a=\chi_a(s)\chi_a\).
Fourier analysis decomposes the space of functions into these
character spaces. Lemma~\ref{lem:shifts-to-eigenvalues} gives
\(A=\sum_{s\in S_{q,n}}T_s\), so the same decomposition diagonalizes
adjacency, as for any Cayley graph of
a finite abelian group~\cite[Sec.~1.4.9]{bh}.\footnote{For nonabelian
finite groups, irreducible representations can have
dimension greater than one. Fourier reduction then uses their matrix
coefficients and gives matrix blocks in general. Irreducible
characters alone span the class functions, meaning functions constant
on conjugacy classes~\cite[Secs.~2.2--2.5]{serre}.}
The following lemma makes this diagonalization explicit.

\begin{lemma}[Fourier diagonalization]\label{lem:fourier-diagonalization}
The normalized waves \(q^{-n/2}\chi_a\), indexed by
\(a\in\Z_q^n\), form an orthonormal basis of \(\C^{\Z_q^n}\).
They satisfy
\begin{equation}
 A\chi_a=\lambda(a)\chi_a,
 \qquad
 \lambda(a)=\sum_{s\in S_{q,n}}\zeta_q^{a\cdot s}\in\mathbb R.
\end{equation}
The indexed values \(\lambda(a)\) give the entire adjacency spectrum,
counted with multiplicity. Moreover, \(\lambda(\pi a)=\lambda(a)\)
for every \(\pi\in\mathfrak S_n\).
\end{lemma}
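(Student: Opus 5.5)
The plan is to verify the four assertions of Lemma~\ref{lem:fourier-diagonalization} in order: orthonormality, the eigenvalue equation, completeness of the spectrum, and permutation invariance.

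\emph{Orthonormality.} For \(a,b\in\Z_q^n\) compute
\(\langle\chi_a,\chi_b\rangle=\sum_x\zeta_q^{(b-a)\cdot x}\). Because the exponent is a sum over coordinates, this factors as \(\prod_{j=1}^n\sum_{x_j\in\Z_q}\zeta_q^{(b_j-a_j)x_j}\). Each factor equals \(q\) when \(b_j=a_j\). Otherwise it is a geometric sum of a \(q\)th root of unity \(\zeta_q^{c}\ne1\), which vanishes because its \(q\)th power is one. Hence \(\langle\chi_a,\chi_b\rangle=q^n\delta_{ab}\). The \(q^n\) normalized waves are therefore orthonormal. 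Since \(\C^{\Z_q^n}\) has dimension \(q^n\), they form a basis.

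\emph{Eigenvalue equation.} The multiplicativity \(\chi_a(u+s)=\chi_a(u)\chi_a(s)\) puts \(\chi_a\) in the setting of Lemma~\ref{lem:shifts-to-eigenvalues} with \(\alpha_s=\chi_a(s)=\zeta_q^{a\cdot s}\). That lemma gives \(A\chi_a=\lambda(a)\chi_a\), with \(\lambda(a)=\sum_{s\in S_{q,n}}\zeta_q^{a\cdot s}\), and shows this value is real.

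Reality can also be checked directly. By Lemma~\ref{lem:graph-properties}\,\ref{item:graph-simple}, \(S_{q,n}=-S_{q,n}\). The map \(s\mapsto-s\) therefore permutes the summands and sends each one to its complex conjugate, so \(\overline{\lambda(a)}=\lambda(a)\).

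\emph{Full spectrum.} Since \(A\) is diagonal in an orthonormal eigenbasis, its characteristic polynomial is \(\prod_a(t-\lambda(a))\). The list \((\lambda(a))_{a\in\Z_q^n}\) is therefore the whole spectrum with multiplicities.

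\emph{Permutation invariance.} The action \((\pi u)_j=u_{\pi^{-1}(j)}\) preserves dot products: \((\pi a)\cdot(\pi s)=a\cdot s\). Lemma~\ref{lem:graph-properties}\,\ref{item:graph-coordinates} shows that \(\pi\) preserves balanced words, so \(s\mapsto\pi s\) is a bijection of \(S_{q,n}\) onto itself. Then
\[
\lambda(\pi a)=\sum_{s\in S_{q,n}}\zeta_q^{(\pi a)\cdot s}=\sum_{s\in S_{q,n}}\zeta_q^{(\pi a)\cdot(\pi s)}=\lambda(a).
\]

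None of these steps is a real obstacle. The only point needing care is that the exponents are well defined modulo \(q\), which holds because \(\zeta_q^q=1\). The nonvanishing of the geometric sum also relies on \(\zeta_q^{c}\ne1\) for \(c\not\equiv0\pmod q\), which holds because \(\zeta_q\) is a primitive \(q\)th root of unity.
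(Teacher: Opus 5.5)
Your proof is correct and follows the paper's argument essentially step for step: character orthogonality by factoring the sum over coordinates plus a dimension count, the eigenvalue equation via Lemma~\ref{lem:shifts-to-eigenvalues}, completeness of the spectrum from the orthonormal eigenbasis, and permutation invariance by substituting \(s=\pi t\) in the eigenvalue sum. Your direct check of reality via \(S_{q,n}=-S_{q,n}\) is a harmless extra; the paper simply appeals to \(A\) being real symmetric.
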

\begin{proof}
Each Fourier wave \(\chi_a\) is nonzero and satisfies
\(\chi_a(x+s)=\zeta_q^{a\cdot s}\chi_a(x)\).
Applying Lemma~\ref{lem:shifts-to-eigenvalues} with
\(\alpha_s=\zeta_q^{a\cdot s}\) gives the stated eigenvalue formula.

For \(a,b\in\Z_q^n\), character orthogonality follows from
\begin{equation}
 \sum_{x\in\Z_q^n}\overline{\chi_a(x)}\chi_b(x)
 =\prod_{j=1}^n\sum_{t\in\Z_q}\zeta_q^{(b_j-a_j)t}
 =q^n\,\boldsymbol{1}_{a=b}.
\end{equation}
Each factor is \(q\) if \(a_j=b_j\) and zero otherwise.
By Lemma~\ref{lem:graph-properties}\,\ref{item:graph-size}, the space
\(\C^{\Z_q^n}\) has dimension \(q^n\). The \(q^n\) normalized waves
therefore form an orthonormal basis, so they account for every eigenvalue
with its multiplicity. The eigenvalues are real because \(A\) is real
symmetric by Lemma~\ref{lem:adjacency-structure}.

Finally, the coordinate-permutation identity in
Lemma~\ref{lem:adjacency-structure} shows that \(S_{q,n}=N(0)\)
is preserved, since coordinate permutations fix the zero vertex.
Substituting \(s=\pi t\) in the eigenvalue sum gives
\begin{equation}
 \lambda(\pi a)
 =\sum_{t\in S_{q,n}}\zeta_q^{(\pi a)\cdot(\pi t)}
 =\sum_{t\in S_{q,n}}\zeta_q^{a\cdot t}
 =\lambda(a).\qedhere
\end{equation}
\end{proof}

The Fourier waves \(\chi_a\) depend only on the additive group
\(\Z_q^n\). They form an eigenbasis for the adjacency operator of
every Cayley graph on this group. The choice of allowed shifts
\(S_{q,n}\) determines the eigenvalue associated with each wave:
\begin{equation}
 \lambda(a)=\sum_{s\in S_{q,n}}\chi_a(s).
\end{equation}
Thus changing the allowed shifts preserves the Fourier eigenbasis
but can change its eigenvalues. In this basis, adjacency acts by
multiplying each Fourier wave by its eigenvalue, as illustrated in
Figure~\ref{fig:graph-fourier}.

\begin{figure}[htbp]
\centering
\includegraphics[width=\linewidth]{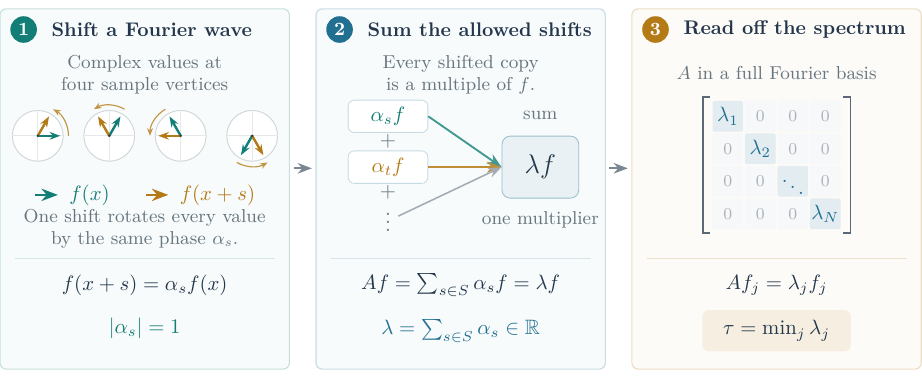}
\caption{Schematic Fourier reduction for the adjacency operator
\(A=A_{\Om}\), with \(S=S_{q,n}\).
Arrows show the complex values of a Fourier wave \(f\) at sample
vertices; a fixed shift multiplies them all by the same phase
\(\alpha_s\). Summing over allowed shifts gives \(Af=\lambda f\).
The \(N=q^n\) Fourier waves \(f_1,\ldots,f_N\) form an orthogonal
basis, so their multipliers give every eigenvalue.}
\label{fig:graph-fourier}
\end{figure}

The adjacency operator \(A\) sums a function's values over each
vertex's neighbors. Dividing by the degree \(D_{q,n}=|S_{q,n}|\)
turns this sum into an average. The corresponding eigenvalue of
\(A/D_{q,n}\) is the \emph{normalized eigenvalue}
\(\rho(a)=\lambda(a)/D_{q,n}\).
Thus \(\rho(a)\) describes how averaging over neighbors scales
the Fourier wave \(\chi_a\).

We also record the counts of the letters in the frequency index
\(a\) by its \emph{composition}:
\begin{equation}\label{eq:frequency-composition}
 \operatorname{comp}(a)=r=(r_0,\ldots,r_{q-1}),
 \qquad r_c=|\{j\in\{1,\ldots,n\}:a_j=c\}|.
\end{equation}
Here \(r_0\) counts the zero entries of \(a\), \(r_1\) counts
its entries equal to one, and so on. These counts are nonnegative
integers with \(\sum_{c=0}^{q-1}r_c=n\). Thus
\(\operatorname{comp}(a)\) is a \emph{weak composition} of \(n\)
into \(q\) parts. Throughout, we refer to weak compositions simply
as compositions. A composition is \emph{monochromatic} if
\(r_c=n\) for one letter \(c\), with all other counts zero.
The following lemma records the consequences of
Lemma~\ref{lem:fourier-diagonalization} for normalized eigenvalues
and frequency compositions.

\begin{lemma}[Normalized eigenvalues and frequency compositions]
\label{lem:normalized-compositions}
The normalized eigenvalues have the following properties.
\begin{enumerate}[label=\textup{(\roman*)},leftmargin=*,itemsep=4pt]
\item\label{item:normalized-wave}
For every frequency \(a\in\Z_q^n\),
\begin{equation}\label{eq:normalized-eigenvalue}
 \rho(a)=\frac{\lambda(a)}{D_{q,n}}
 =\frac1{|S_{q,n}|}\sum_{s\in S_{q,n}}\zeta_q^{a\cdot s}
 =\E_{X\ \mathrm{uniform\ on}\ S_{q,n}}\zeta_q^{a\cdot X}
 \in[-1,1],
\end{equation}
where \(X\) is chosen uniformly from \(S_{q,n}\), so each
allowed shift has probability \(1/D_{q,n}\). Moreover,
\begin{equation}\label{eq:neighbor-average-eigenvalue}
 \frac{A}{D_{q,n}}\chi_a=\rho(a)\chi_a.
\end{equation}
\item\label{item:composition-invariance}
For any \(a,b\in\Z_q^n\),
\begin{equation}\label{eq:composition-eigenvalue}
 \operatorname{comp}(a)=\operatorname{comp}(b)
 \quad\Longrightarrow\quad \rho(a)=\rho(b).
\end{equation}
\end{enumerate}
\end{lemma}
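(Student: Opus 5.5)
Both parts follow directly from Lemma~\ref{lem:fourier-diagonalization}; we treat them in order.

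For part~\ref{item:normalized-wave}, we divide the eigenvalue formula \(\lambda(a)=\sum_{s\in S_{q,n}}\zeta_q^{a\cdot s}\) by \(D_{q,n}=|S_{q,n}|\), which is positive by Lemma~\ref{lem:graph-properties}\,\ref{item:graph-degree}. This rewrites \(\rho(a)\) as the average of \(\zeta_q^{a\cdot s}\) over \(S_{q,n}\), which is the expectation of \(\zeta_q^{a\cdot X}\) for \(X\) uniform on \(S_{q,n}\). To see that \(\rho(a)\in[-1,1]\), we argue in two steps:
\begin{itemize}
\item \(\rho(a)\) is real, because \(\lambda(a)\in\mathbb R\) by Lemma~\ref{lem:fourier-diagonalization};
\item \(|\rho(a)|\le1\), by the triangle inequality applied to an average of unit-modulus numbers.
\end{itemize}
Identity~\eqref{eq:neighbor-average-eigenvalue} is then the relation \(A\chi_a=\lambda(a)\chi_a\) divided by \(D_{q,n}\).

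For part~\ref{item:composition-invariance}, suppose \(\operatorname{comp}(a)=\operatorname{comp}(b)\). Then \(a\) and \(b\) contain each letter the same number of times. Hence there is a coordinate permutation \(\pi\in\mathfrak S_n\) with \(b=\pi a\). To build it, fix for each letter \(c\) a bijection from the positions where \(a_j=c\) onto the positions where \(b_j=c\). These sets have equal size \(r_c\), and together the bijections give a permutation of \(\{1,\ldots,n\}\). With the convention \((\pi a)_j=a_{\pi^{-1}(j)}\), this permutation satisfies \(\pi a=b\). The permutation invariance \(\lambda(\pi a)=\lambda(a)\) from Lemma~\ref{lem:fourier-diagonalization} then gives \(\lambda(b)=\lambda(a)\), and dividing by \(D_{q,n}\) gives \(\rho(b)=\rho(a)\).

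The only step needing care is constructing \(\pi\) and matching it to the permutation convention of Lemma~\ref{lem:graph-properties}\,\ref{item:graph-coordinates}. If the bijection is built in the wrong direction, we replace \(\pi\) by \(\pi^{-1}\). Since \(\lambda\) is invariant under every permutation, either choice yields the conclusion.
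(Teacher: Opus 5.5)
Your proposal is correct and follows the paper's proof essentially step for step. You divide the eigenvalue formula and eigenvector relation of Lemma~\ref{lem:fourier-diagonalization} by \(D_{q,n}>0\), and combine reality with the triangle inequality to get \(\rho(a)\in[-1,1]\). Part (ii) then follows from a coordinate permutation \(\pi\) with \(b=\pi a\) together with \(\lambda(\pi a)=\lambda(a)\). Your letter-by-letter construction of \(\pi\) fills in a detail the paper leaves implicit, and it does satisfy \(\pi a=b\) under the convention \((\pi a)_j=a_{\pi^{-1}(j)}\) without any inversion.
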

\begin{proof}
\emph{\ref{item:normalized-wave}.}
By Lemma~\ref{lem:graph-properties}\,\ref{item:graph-degree},
\(D_{q,n}=|S_{q,n}|>0\). Dividing the eigenvalue formula and
eigenvector relation in Lemma~\ref{lem:fourier-diagonalization}
by \(D_{q,n}\) gives the stated average and
\eqref{eq:neighbor-average-eigenvalue}. The expectation is the
same average written using the uniform distribution on \(S_{q,n}\).
Lemma~\ref{lem:fourier-diagonalization} also gives
\(\rho(a)\in\mathbb R\), and the triangle inequality yields
\begin{equation}
 |\rho(a)|\leq\frac1{|S_{q,n}|}
              \sum_{s\in S_{q,n}}|\zeta_q^{a\cdot s}|=1.
\end{equation}
\emph{\ref{item:composition-invariance}.}
If \(a\) and \(b\) have the same composition, a coordinate
permutation \(\pi\in\mathfrak S_n\) satisfies \(b=\pi a\).
Lemma~\ref{lem:fourier-diagonalization} gives
\(\lambda(b)=\lambda(a)\). Dividing by \(D_{q,n}\) proves
\eqref{eq:composition-eigenvalue}.
\end{proof}

By Lemma~\ref{lem:fourier-diagonalization}, every adjacency
eigenvalue has the form \(\lambda(a)=D_{q,n}\rho(a)\). Since
\(D_{q,n}>0\), finding the least eigenvalue amounts to finding
the smallest \(\rho(a)\) and then multiplying by the degree.
Lemma~\ref{lem:normalized-compositions}\,\ref{item:normalized-wave}
expresses \(\rho(a)\) as an average over allowed shifts. Each
shift contributes a complex phase of absolute value one, so this
average records how the phases reinforce or cancel one another.
Normalization places these real averages on the fixed scale
\([-1,1]\), independently of the degree.

Part~\ref{item:composition-invariance} of the same lemma lets us
organize the averages by composition. There are
\(\binom{n+q-1}{q-1}\) weak compositions of \(n\) into \(q\)
parts~\cite[Sec.~1.2]{stanley}, so one frequency representative
from each composition class suffices to find the least eigenvalue.
Different classes may share an eigenvalue. The dependence on letter
counts will also allow us to express these averages through
polynomial coefficients in the next subsection.

We first evaluate the frequency indices whose entries are all equal.

\begin{lemma}[Constant frequencies]\label{lem:constant-frequencies}
A frequency \(a\in\Z_q^n\) has monochromatic composition if and
only if \(a=c\one=(c,\ldots,c)\) for some \(c\in\Z_q\).
We call such an \(a\) a \emph{constant frequency}. Its wave
\(\chi_{c\one}(x)=\zeta_q^{c\sum_{j=1}^n x_j}\) is constant
on the entire vertex set if and only if \(c=0\).
Every constant frequency has normalized eigenvalue
\begin{equation}\label{eq:constant-frequency}
 \rho(c\one)=1.
\end{equation}
\end{lemma}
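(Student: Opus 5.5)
The proof has three parts, all elementary and taken in the order of the statement.

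\emph{Monochromatic compositions.} If \(a=c\one\), then \(r_c=n\) and every other count is zero, so \(\operatorname{comp}(a)\) is monochromatic. Conversely, suppose \(r_c=n\) for some letter \(c\). Then all \(n\) positions \(j\) satisfy \(a_j=c\), so \(a=c\one\).

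\emph{When the wave is constant.} For \(a=c\one\), the definition of \(\chi_a\) gives \(\chi_{c\one}(x)=\zeta_q^{c\sum_j x_j}\). If \(c=0\), every value equals \(1\). If \(c\neq0\) in \(\Z_q\), compare the vertices \(x=0\) and \(x=e_1\), the word with a single \(1\) in position one. The values there are \(1\) and \(\zeta_q^{c}\). Since \(\zeta_q\) is a primitive \(q\)th root of unity and \(c\not\equiv0\pmod q\), we have \(\zeta_q^c\neq1\), so the wave is not constant.

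\emph{Computing \(\rho(c\one)\).} By Lemma~\ref{lem:normalized-compositions}\,\ref{item:normalized-wave}, \(\rho(c\one)\) is the average of \(\zeta_q^{c\sum_j s_j}\) over \(s\in S_{q,n}\). I will show that this exponent is the same for every balanced shift. Each residue occurs exactly \(\ell\) times in \(s\), so
\[
 \sum_j s_j\equiv \ell\sum_{d=0}^{q-1}d=\ell\frac{q(q-1)}{2}\pmod q .
\]
Under the standing assumption that \((q-1)\ell\) is even, \(\ell(q-1)/2\) is an integer. Hence \(\ell q(q-1)/2=q\cdot\ell(q-1)/2\equiv0\pmod q\). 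Every term in the average therefore equals \(\zeta_q^{0}=1\), and so \(\rho(c\one)=1\).

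The only step needing care is this use of the parity hypothesis. In the odd-parity case the same computation gives a sum congruent to \(q/2\), as in Lemma~\ref{lem:odd-bipartition}, and then \(\rho(c\one)=(-1)^c\). So the statement must be read under the standing assumptions of Sections~\ref{sec:fourier}--\ref{sec:spectral}, and I will point out explicitly where that assumption is used.
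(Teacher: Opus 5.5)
Your proposal is correct and follows the paper's proof essentially step for step. It uses the same characterization of monochromatic compositions, the same comparison of the wave at \(0\) and \(e_1\), and the same computation \(\sum_j s_j\equiv\ell q(q-1)/2\equiv0\pmod q\) under the standing parity assumption (and your side remark that odd parity would instead give \(\rho(c\one)=(-1)^c\) is also correct, consistent with Lemma~\ref{lem:odd-bipartition}).
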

\begin{proof}
By the definition of \(r_c\), the equality \(r_c=n\) holds
exactly when every entry of \(a\) equals \(c\). The formula for
\(\chi_{c\one}\) follows from the definition of the Fourier wave.
For \(c=0\), it is identically one. If \(c\neq0\), its values
at the zero vertex and at \(e_1=(1,0,\ldots,0)\) are
\(1\) and \(\zeta_q^c\neq1\), respectively, so it is not
constant on the vertex set.

Every allowed shift \(s\in S_{q,n}\) contains each residue exactly
\(\ell\) times. Since \((q-1)\ell\) is even,
\begin{equation}
 \sum_{j=1}^n s_j=\frac{\ell q(q-1)}2=0\quad\text{in }\Z_q.
\end{equation}
Thus \(\zeta_q^{(c\one)\cdot s}=1\) for every allowed shift.
Averaging as in
Lemma~\ref{lem:normalized-compositions}\,\ref{item:normalized-wave}
gives \eqref{eq:constant-frequency}.
\end{proof}

Geometrically, the coordinate sum
\(\sigma(x)=\sum_jx_j\in\Z_q\) partitions the vertices into
\(q\) classes, and every edge stays within one class. The wave
\(\chi_{c\one}(x)=\zeta_q^{c\sigma(x)}\) is constant on
each class, so averaging its values over the neighbors of a vertex
leaves its value unchanged. This explains why its normalized
eigenvalue is \(1\), even when the wave is not globally constant.

These indices contribute only the eigenvalue \(D_{q,n}>0\), so
they already satisfy the required lower bound. The remaining
estimate concerns nonmonochromatic compositions, whose frequency
indices contain at least two different letters.

\begin{lemma}[Spectral bound for nonmonochromatic compositions]
\label{lem:nonmonochromatic-bound}
For every frequency \(a\in\Z_q^n\) with nonmonochromatic
composition,
\begin{equation}\label{eq:nonconstant}
 |\rho(a)|\leq\frac1{n-1}.
\end{equation}
Consequently,
\begin{equation}
 \lambda(a)=D_{q,n}\rho(a)\geq-\frac{D_{q,n}}{n-1}.
\end{equation}
\end{lemma}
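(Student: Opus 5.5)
The plan is to prove~\eqref{nonconstant} by a switching argument on the coordinates of the uniform balanced shift \(X\), reducing \(\rho(a)\) to quantities involving only one or two positions, and then to finish with a counting estimate. The second inequality then follows at once by multiplying by \(D_{q,n}>0\), as in Lemma~\ref{lem:normalized-compositions}.

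\emph{Step 1: normalization.} By Lemma~\ref{lem:constant-frequencies}, \(\sum_j X_j=0\) in \(\Z_q\) in the even-parity regime, so \(\rho(a+c\one)=\rho(a)\) for every \(c\in\Z_q\). By Lemma~\ref{lem:normalized-compositions}\,\ref{item:composition-invariance}, \(\rho\) depends only on \(\operatorname{comp}(a)\), and \(\rho(-a)=\overline{\rho(a)}=\rho(a)\). Hence I may assume that \(0\) is a most frequent letter of \(a\), so \(r_0\geq n/q=\ell\), and that \(a\) is arranged in any convenient order.

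\emph{Step 2: the extremal two-point frequencies.} For \(a=e_i-e_j\), a direct computation from exchangeability of the coordinates of \(X\) gives
\[
\Pr(X_i=X_j)=\frac{\ell-1}{n-1},
\]
and, conditionally on \(X_i\ne X_j\), the difference \(X_i-X_j\) is uniform on \(\Z_q\setminus\{0\}\). Therefore
\[
\rho(e_i-e_j)=\frac{\ell-1}{n-1}-\frac{1}{q-1}\Bigl(1-\frac{\ell-1}{n-1}\Bigr)=-\frac1{n-1}.
\]
So the bound is attained, and this computation also supplies the explicit eigenvector needed later. Similarly, every frequency with a single nonzero letter has \(\rho=0\).

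\emph{Step 3: a transposition-averaging identity.} For a general nonmonochromatic \(a\), fix a transposition \((i\,j)\). Since \((i\,j)X\) is again uniform on \(S_{q,n}\),
\[
\rho(a)=\E\,\zeta_q^{a\cdot(i\,j)X}=\E\,\zeta_q^{a\cdot X}\,\zeta_q^{(a_i-a_j)(X_j-X_i)}.
\]
Averaging over all \(\binom n2\) transpositions expresses \(\binom n2\rho(a)\) as
\[
\binom n2\rho(a)=\E\Bigl[\zeta_q^{a\cdot X}\,\Phi_a(X)\Bigr],\qquad
\Phi_a(X)=\sum_{i<j}\zeta_q^{(a_i-a_j)(X_j-X_i)}.
\]
The weight \(\Phi_a(X)\) depends only on the joint counts of the pairs \((a_j,X_j)\), namely the contingency table \(m_{bc}=\#\{j:a_j=b,\ X_j=c\}\), with row sums \(r_b\) and column sums \(\ell\). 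Equivalently, writing \(D_{q,n}\rho(a)\) as the coefficient of \(\prod_c y_c^{\ell}\) in
\[
\prod_b\Bigl(\sum_c\zeta_q^{bc}y_c\Bigr)^{r_b},
\]
the switching identity becomes a linear relation among these coefficients for neighbouring compositions. I then intend to combine this relation with the trivial bound \(|\rho|\le1\) and Step 2 to show that
\[
|\rho(a)|\,(n-1)\le 1 .
\]
The intended route is to bound \(|\E[\zeta_q^{a\cdot X}\Phi_a(X)]|\) by Cauchy--Schwarz against the second moment
\[
\E\Bigl|\sum_j\zeta_q^{a_jX_j}\Bigr|^2,
\]
which expands into one- and two-position averages of the type computed in Step 2. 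This is where the hypothesis \(r_0\ge\ell\) and nonmonochromaticity enter.

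\emph{Main obstacle.} The hard step is the counting inequality in Step 3: controlling the contingency-table sum uniformly in \(q\), \(\ell\) and the composition, without any length threshold. Cao et al.\ only manage this asymptotically through permanent estimates. My first attempt would be an induction on the number of nonzero entries of \(a\), switching one nonzero letter to \(0\) and showing that \((n-1)|\rho|\) cannot increase. Should that fail, I would fall back on a direct sign-reversing involution on \(S_{q,n}\), pairing shifts that differ by a transposition on two positions where \(a\) differs, and counting the unpaired fixed points; those fixed points should number at most a \(1/(n-1)\) fraction of \(S_{q,n}\).
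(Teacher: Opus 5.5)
\textbf{There is a genuine gap: the inequality $|\rho(a)|\le1/(n-1)$, which is the whole content of the lemma, is not proved.}

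Your Steps~1 and~2 are correct. Step~2 is really the attainment computation, and the normalization $r_0\geq\ell$ is never needed. Step~3, however, only lists intended routes, and none of them works as described.

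\emph{The transposition identity is circular.} $\E[\zeta_q^{a\cdot X}\Phi_a(X)]$ is just $\rho(a)$ averaged over $\binom n2$ relabelings. Any estimate that controls it through the size of $\Phi_a$ throws away exactly the cancellation you need. Each pair with $a_i=a_j$ contributes the constant $1$ to $\Phi_a$. So for $a=e_1-e_2$ one has $\Phi_a=\binom{n-2}2+O(n)$ identically, and Cauchy--Schwarz (or any bound via $|\Phi_a|$) cannot give better than $|\rho(a)|\le1-O(1/n)$. The second moment $\E|\sum_j\zeta_q^{a_jX_j}|^2$ involves only two-position averages, and you give no relation between it and the $n$-fold product $\zeta_q^{a\cdot X}$.

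\emph{The induction fails.} It needs $|\rho(a)|\le|\rho(a')|$ whenever $a'$ arises from $a$ by zeroing an entry. This is false for $a=e_1-e_2$, $a'=e_1$. Iterating it down to a single nonzero entry would force $\rho(a)=0$ for every nonmonochromatic $a$, contradicting your Step~2. The reverse monotonicity only yields lower bounds.

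\emph{The involution fails for $q\geq3$.} A transposition multiplies the phase by $\zeta_q^{(a_i-a_j)(X_j-X_i)}$, an arbitrary $q$th root of unity rather than $-1$, so there is no sign-reversing pairing. Summed directly over $S_{q,n}$, the triangle inequality gives only $|\rho|\le1$.

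\textbf{The missing idea is to change the expansion to which you apply the triangle inequality.} The paper shows
$\mult n{r_0,\ldots,r_{q-1}}\rho(a)=[z^r]\bigl(\prod_t\sum_cz_c\zeta_q^{ct}\bigr)^\ell=[z^r](\det C_q(z))^\ell$
(Lemmas~\ref{lem:symmetric-fourier} and~\ref{lem:coefficient-reduction}).

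Note the orientation: the balanced composition sits in the exponent and $r$ in the extracted coefficient. This is the reverse of your $[\prod_cy_c^\ell]\prod_b(\sum_c\zeta_q^{bc}y_c)^{r_b}$, and it is this orientation that factors as a power of the circulant determinant. The cancellation among roots of unity is thereby absorbed algebraically. What remains is a sum of signs $\pm1$ over the \emph{valid} words of composition $r$: split into $\ell$ blocks of $q$ positions, a word is valid when every block map $i\mapsto i+w(b,i)$ is a permutation of $\Z_q$.

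No further cancellation is needed. It suffices that at most a fraction $1/(n-1)$ of the class is valid (Theorem~\ref{thm:permanent}), which is a switching count:
\begin{itemize}
\item Pick a letter of multiplicity $2\le k\le n-2$. Each valid word has $k(n-k)\ge2(n-2)$ swaps of that letter with a different one. Each such swap destroys validity, creating either two doubly hit targets or one triply hit target.
\item Each invalid word has at most two such repairs, because every preimage set contains at most one position carrying that letter.
\item Double counting gives the bound.
\end{itemize}
The remaining cases are handled separately. If $r_c=n-1$, no word is valid. If $\ell=1$ and $r=(1,\ldots,1)$ with $q$ odd, the paper allows all swaps and shows at most three repairs, using that $2$ is invertible modulo $q$.

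Your closing intuition about a $1/(n-1)$ fraction of surviving configurations is right in spirit. But those configurations live in the determinant expansion, not in $S_{q,n}$.
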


The coefficient reduction in the next subsection and the switching
argument in Section~\ref{sec:switching} prove
Lemma~\ref{lem:nonmonochromatic-bound}.
Section~\ref{sec:spectral} then exhibits a frequency with normalized
eigenvalue \(-1/(n-1)\), completing the spectral proof.
Example~\ref{ex:frequency-composition} in Appendix~\ref{app:examples}
illustrates frequency compositions.

\subsection{Fourier analysis in the symmetric Schur--Weyl sector}\label{sec:fourier-counts}
Section~\ref{sec:fourier-waves} expressed the normalized eigenvalues
as phase averages depending only on frequency compositions. We now
realize these averages as Fourier matrix elements in the symmetric
Schur--Weyl sector. Polynomial coordinates on this representation
yield the circulant-coefficient formula and reduce the bound in
Lemma~\ref{lem:nonmonochromatic-bound} to a counting problem.

Let \(V=\C^q\), with standard basis \((e_c)_{c\in\Z_q}\), and
write \(|a\rangle=e_{a_1}\otimes\cdots\otimes e_{a_n}\).
The actions of \(\mathrm{GL}(V)\) by tensor powers \(g^{\otimes n}\)
and of \(\mathfrak S_n\) by tensor-factor permutations \(U_\pi\)
commute. Schur--Weyl duality decomposes their joint representation as
\begin{equation}
 V^{\otimes n}\cong
 \bigoplus_{\substack{\nu\vdash n\\\nu_{q+1}=0}}
       \mathbb S_\nu(V)\otimes[\nu],
\end{equation}
where the sum runs over partitions of \(n\) with at most \(q\)
parts, \(\mathbb S_\nu(V)\) is the corresponding irreducible
polynomial representation of \(\mathrm{GL}(V)\), and \([\nu]\)
is the corresponding irreducible representation of
\(\mathfrak S_n\)~\cite[App.~E, Proposition~E.8]{zkfb}.
For the single-row partition \(\nu=(n)\), the latter representation
is trivial and \(\mathbb S_{(n)}(V)=\operatorname{Sym}^n(V)\).
Thus the symmetric sector is
\begin{equation}
 \operatorname{Sym}^n(V)\cong(V^{\otimes n})^{\mathfrak S_n},
 \qquad
 \Pi_{\mathrm{sym}}=\frac1{n!}\sum_{\pi\in\mathfrak S_n}U_\pi.
\end{equation}
We use this identification throughout and write
\(\operatorname{Sym}^n(g)\) for the restriction of
\(g^{\otimes n}\) to this subspace. The averaging operator
\(\Pi_{\mathrm{sym}}\) is its orthogonal
projector~\cite[Sec.~1, Proposition~1]{harrow-sym}.

The coordinate-permutation orbits of the basis vectors \(|a\rangle\)
are indexed by their compositions. An orbit of composition \(r\)
has size
\begin{equation}
 M_r:=\mult n{r_0,\ldots,r_{q-1}}
     =\frac{n!}{r_0!\cdots r_{q-1}!}.
\end{equation}
Its normalized orbit sum is the occupation-number vector
\begin{equation}\label{eq:occupation-basis}
 |r\rangle_{\mathrm{sym}}
 =\frac1{\sqrt{M_r}}\sum_{\operatorname{comp}(a)=r}|a\rangle,
 \qquad
 \Pi_{\mathrm{sym}}|a\rangle
 =\frac1{\sqrt{M_r}}|r\rangle_{\mathrm{sym}}
 \quad\text{if }\operatorname{comp}(a)=r.
\end{equation}
These vectors form an orthonormal basis of
\(\operatorname{Sym}^n(V)\)~\cite[Sec.~1, Theorem~3]{harrow-sym}.\footnote{%
\(\operatorname{Sym}^n(V)\) is the \emph{Bose-symmetric subspace};
\(r_c\) is the occupation number of \(e_c\).
Bose symmetry of a density operator,
\(\Pi_{\mathrm{sym}}\sigma\Pi_{\mathrm{sym}}=\sigma\), is stronger
than permutation invariance,
\(U_\pi\sigma U_\pi^\dagger=\sigma\) for every \(\pi\);
see~\cite[Sec.~II.B]{ckmr} and~\cite[Sec.~5.2, Definitions~5.1--5.2
and Proposition~5.5]{zkfb}.
This subspace organizes the phase averages; the protocol allows
arbitrary shared states.}
They are a weight basis for its \(\mathrm{GL}(V)\) action: a
diagonal matrix \(\operatorname{diag}(t_0,\ldots,t_{q-1})\)
multiplies \(|r\rangle_{\mathrm{sym}}\) by
\(\prod_c t_c^{r_c}\). Thus compositions label weights within
the single sector \((n)\)~\cite[Sec.~6, Eq.~(6.7)]{zkfb}.

To express the phase averages in the occupation basis, let
\(\mathbf b=(\ell,\ldots,\ell)\) be the balanced composition.
The words of this composition form \(S_{q,n}\), so
\(M_{\mathbf b}=D_{q,n}\).
Define the unnormalized Fourier matrix \(\mathcal F_q\) by
\begin{equation}
 \mathcal F_q e_t=\sum_{c=0}^{q-1}\zeta_q^{ct}e_c.
\end{equation}
The matrix \(q^{-1/2}\mathcal F_q\) is unitary. We keep the
unnormalized convention so that
\(\langle a|\mathcal F_q^{\otimes n}|x\rangle=\zeta_q^{a\cdot x}\)
is exactly the phase from Section~\ref{sec:fourier-waves}.

Polynomial coordinates give a second description of this action.
Identify the symmetric algebra \(\operatorname{Sym}(V)\) with
\(\C[z_0,\ldots,z_{q-1}]\) by sending \(e_c\) to the commuting
generator \(z_c\). Its degree-\(n\) component is
\(\operatorname{Sym}^n(V)\). The quotient map sends
\(|a\rangle\) to \(\prod_j z_{a_j}=z^r\), where
\(z^r=z_0^{r_0}\cdots z_{q-1}^{r_{q-1}}\).
Under the averaging identification in~\eqref{eq:occupation-basis},
\begin{equation}\label{eq:monomial-occupation}
 z^r\longleftrightarrow M_r^{-1/2}|r\rangle_{\mathrm{sym}}.
\end{equation}
The symmetric-power action is linear substitution on these
generators: for \(g e_t=\sum_c g_{ct}e_c\),
\begin{equation}
 \operatorname{Sym}^n(g)z^r
 =\prod_{t=0}^{q-1}
       \left(\sum_{c=0}^{q-1}g_{ct}z_c\right)^{r_t}.
\end{equation}
In particular, applying the Fourier matrix to the balanced monomial
defines the generating polynomial
\begin{equation}\label{eq:phase-generating-polynomial}
 P(z):=\operatorname{Sym}^n(\mathcal F_q)z^{\mathbf b},
 \qquad z^{\mathbf b}=z_0^\ell\cdots z_{q-1}^\ell.
\end{equation}
The substitution formula gives
\begin{equation}\label{eq:balanced-phase-product}
 P(z)=\left(\prod_{t=0}^{q-1}
              \sum_{c=0}^{q-1}z_c\zeta_q^{ct}\right)^\ell.
\end{equation}
For a polynomial \(Q(z)\), write \([z^r]Q(z)\) for the
coefficient of \(z^r\). The next lemma identifies the phase averages
with symmetric-power matrix elements and with coefficients of \(P\).

\begin{lemma}[Symmetric-power representation of the normalized eigenvalues]
\label{lem:symmetric-fourier}
With the notation above, for every frequency \(a\in\Z_q^n\) of
composition \(r\),
\begin{equation}\label{eq:symmetric-fourier-eigenvalue}
 \rho(a)=
 \frac{{}_{\mathrm{sym}}\langle r|
       \operatorname{Sym}^n(\mathcal F_q)
       |\mathbf b\rangle_{\mathrm{sym}}}
      {\sqrt{M_rD_{q,n}}}
 =\frac{[z^r]P(z)}{M_r}.
\end{equation}
\end{lemma}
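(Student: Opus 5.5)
The plan is to establish the two equalities in \eqref{eq:symmetric-fourier-eigenvalue} separately. The first comes from computing a tensor-power matrix element and projecting onto the symmetric sector. The second comes from translating that matrix element into polynomial coordinates via \eqref{eq:monomial-occupation}.

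For the first equality, start from Lemma~\ref{lem:normalized-compositions}\,\ref{item:normalized-wave}, which gives \(D_{q,n}\rho(a)=\sum_{s\in S_{q,n}}\zeta_q^{a\cdot s}\). Since \(\langle a|\mathcal F_q^{\otimes n}|x\rangle=\zeta_q^{a\cdot x}\), this sum equals \(\langle a|\mathcal F_q^{\otimes n}\sum_{s\in S_{q,n}}|s\rangle\). The words of composition \(\mathbf b\) are exactly \(S_{q,n}\), so \eqref{eq:occupation-basis} gives \(\sum_{s}|s\rangle=\sqrt{D_{q,n}}\,|\mathbf b\rangle_{\mathrm{sym}}\). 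Because \(\mathcal F_q^{\otimes n}\) commutes with every \(U_\pi\), it preserves the symmetric subspace and acts there as \(\operatorname{Sym}^n(\mathcal F_q)\). The vector \(\mathcal F_q^{\otimes n}|\mathbf b\rangle_{\mathrm{sym}}\) is therefore symmetric, and so \(\langle a|w\rangle=\langle a|\Pi_{\mathrm{sym}}w\rangle\) for it. Using \(\Pi_{\mathrm{sym}}|a\rangle=M_r^{-1/2}|r\rangle_{\mathrm{sym}}\) from \eqref{eq:occupation-basis}, with real coefficients, we obtain
\[
D_{q,n}\rho(a)=\sqrt{D_{q,n}}\,M_r^{-1/2}\,{}_{\mathrm{sym}}\langle r|\operatorname{Sym}^n(\mathcal F_q)|\mathbf b\rangle_{\mathrm{sym}}.
\]
Dividing by \(D_{q,n}\) yields the first expression.

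For the second equality, apply \eqref{eq:monomial-occupation} to the input: \(|\mathbf b\rangle_{\mathrm{sym}}\leftrightarrow\sqrt{D_{q,n}}\,z^{\mathbf b}\). This gives \(\operatorname{Sym}^n(\mathcal F_q)|\mathbf b\rangle_{\mathrm{sym}}\leftrightarrow\sqrt{D_{q,n}}\,P(z)\). Expanding \(P=\sum_r([z^r]P)\,z^r\) and translating back gives
\[
\operatorname{Sym}^n(\mathcal F_q)|\mathbf b\rangle_{\mathrm{sym}}=\sqrt{D_{q,n}}\sum_r[z^r]P\,M_r^{-1/2}|r\rangle_{\mathrm{sym}}.
\]
Orthonormality of the occupation basis then gives the matrix element \(\sqrt{D_{q,n}/M_r}\,[z^r]P\). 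Dividing by \(\sqrt{M_rD_{q,n}}\) produces \([z^r]P/M_r\).

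As an independent sanity check, which could also serve as an alternative proof, expand \eqref{eq:balanced-phase-product} directly. A product of \(n\) linear factors, the \(t\)-factor appearing \(\ell\) times for each \(t\), has as its coefficient of \(z^r\) the sum over assignments of letters \(c_j\) to the factors with composition \(r\) of \(\prod_j\zeta_q^{c_jt_j}\). Summing over the \(M_r\) orderings of \(a\) and using the permutation invariance of Lemma~\ref{lem:fourier-diagonalization}, this equals \(M_r\rho(a)\).

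The main obstacle is bookkeeping rather than conceptual. One must check that the identification \eqref{eq:monomial-occupation} is compatible with the substitution action, i.e.\ that \(\operatorname{Sym}^n(g)\) on symmetric tensors corresponds to linear substitution under this normalization. I would verify this by applying \(g^{\otimes n}\) to the orbit sum \(\sum_{\operatorname{comp}(x)=r}|x\rangle\) and projecting, which reproduces the multinomial expansion.
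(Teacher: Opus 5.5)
Your proposal is correct and follows the paper's proof. The second equality is exactly the paper's argument: you translate \(\operatorname{Sym}^n(\mathcal F_q)|\mathbf b\rangle_{\mathrm{sym}}\) into \(\sqrt{D_{q,n}}\,P(z)\) through the monomial--occupation identification and then read off coefficients by orthonormality. For the first equality, you project \(|a\rangle\) onto the symmetric sector using that \(\mathcal F_q^{\otimes n}\) commutes with every \(U_\pi\), whereas the paper expands \(|r\rangle_{\mathrm{sym}}\) as an orbit sum and uses that \(\rho\) is constant on the composition class; this is the same permutation symmetry with different bookkeeping.
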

\begin{proof}
Expanding the normalized occupation vectors and applying
Lemma~\ref{lem:normalized-compositions} gives
\begin{equation}
 \begin{aligned}
 {}_{\mathrm{sym}}\langle r|
       \operatorname{Sym}^n(\mathcal F_q)
       |\mathbf b\rangle_{\mathrm{sym}}
 &=\frac1{\sqrt{M_rD_{q,n}}}
   \sum_{\operatorname{comp}(a')=r}\sum_{x\in S_{q,n}}
          \zeta_q^{a'\cdot x}\\
 &=\sqrt{M_rD_{q,n}}\,\rho(a).
 \end{aligned}
\end{equation}
Indeed, each inner sum is \(D_{q,n}\rho(a')\), and
\(\rho(a')=\rho(a)\) throughout the composition class.
This proves the first equality. For the second, write
\(P(z)=\sum_s p_s z^s\), where the sum runs over compositions
of \(n\) into \(q\) parts and \(p_s=[z^s]P(z)\).
By~\eqref{eq:monomial-occupation}, its occupation-vector representation
is \(\sum_s p_s M_s^{-1/2}|s\rangle_{\mathrm{sym}}\).
The balanced input monomial similarly corresponds to
\(D_{q,n}^{-1/2}|\mathbf b\rangle_{\mathrm{sym}}\), since
\(M_{\mathbf b}=D_{q,n}\). Applying the Fourier action therefore gives
\begin{equation}
 \sum_s\frac{p_s}{\sqrt{M_s}}|s\rangle_{\mathrm{sym}}
 =\frac1{\sqrt{D_{q,n}}}
       \operatorname{Sym}^n(\mathcal F_q)
       |\mathbf b\rangle_{\mathrm{sym}}.
\end{equation}
Taking the inner product with \({}_{\mathrm{sym}}\langle r|\)
and using orthonormality yields
\begin{equation}
 \frac{p_r}{\sqrt{M_r}}
 =\frac{{}_{\mathrm{sym}}\langle r|
       \operatorname{Sym}^n(\mathcal F_q)
       |\mathbf b\rangle_{\mathrm{sym}}}{\sqrt{D_{q,n}}}.
\end{equation}
Multiplying by \(\sqrt{M_r}\) and using the first equality proves
\begin{equation}\label{eq:phase-coefficient-sum}
 [z^r]P(z)
 =\sqrt{\frac{M_r}{D_{q,n}}}\,
       {}_{\mathrm{sym}}\langle r|
       \operatorname{Sym}^n(\mathcal F_q)
       |\mathbf b\rangle_{\mathrm{sym}}
 =M_r\rho(a).\qedhere
\end{equation}
\end{proof}

Thus coefficient extraction computes the desired Fourier matrix
element, and division by \(M_r\) gives the normalized eigenvalue.

The \(q\) linear factors in~\eqref{eq:balanced-phase-product} are
the Fourier eigenvalues of the \emph{circulant} matrix
\begin{equation}
 C_q(z)=(z_{j-i})_{i,j\in\Z_q}
 =\begin{pmatrix}
 z_0 & z_1 & \cdots & z_{q-1}\\
 z_{q-1} & z_0 & \cdots & z_{q-2}\\
 \vdots & \vdots & \ddots & \vdots\\
 z_1 & z_2 & \cdots & z_0
 \end{pmatrix}.
\end{equation}
All subscripts are taken modulo \(q\); each row is the preceding
row shifted one position to the right. Indeed, the vector
\((\zeta_q^{it})_{i\in\Z_q}\) is an eigenvector with eigenvalue
\(\sum_c z_c\zeta_q^{ct}\). These Fourier vectors form a basis,
so the product of these \(q\) eigenvalues is \(\det C_q(z)\).
Equation~\eqref{eq:balanced-phase-product} therefore gives
\begin{equation}\label{eq:phase-circulant-determinant}
 P(z)=\bigl(\det C_q(z)\bigr)^\ell.
\end{equation}
The determinant therefore gives an explicit expression for the
symmetric-power Fourier image \(P(z)\).

To bound the absolute values of these coefficients, we use the
\emph{permanent}. For a \(q\times q\) matrix \(M\), define
\begin{equation}
 \per M=\sum_{\sigma\in\mathfrak S_q}
              \prod_{i=1}^q M_{i,\sigma(i)},
\end{equation}
where \(\mathfrak S_q\) is the permutation group on
\(\{1,\ldots,q\}\). Each product selects one entry from every row
and column. The determinant attaches a permutation sign to each
selection; the permanent omits these signs. Set
\begin{equation}
 H_{q,\ell}(z)=\bigl(\per C_q(z)\bigr)^\ell.
\end{equation}
Both \(P(z)\) and \(H_{q,\ell}(z)\) belong to
\(\operatorname{Sym}^n(\C^q)\), since the determinant and permanent
are homogeneous of degree \(q\) and \(q\ell=n\).
The coefficient \([z^r]H_{q,\ell}(z)\) counts selections of one
entry from every row and column in each of \(\ell\) numbered
copies of \(C_q(z)\), with \(z_c\) selected \(r_c\) times in
total. The row indices of these copies give \(\ell\) blocks of
\(q\) auxiliary positions. The composition constrains the total
label counts across all blocks, with no prescribed counts within
an individual block.

The following lemma combines the determinant-coefficient identity
of Cao et al.~\cite[Eq.~(2)]{cao}, normalized by the graph degree,
with the first inequality in their permanent bound~\cite[Eq.~(4)]{cao}.
We include a self-contained derivation.

\begin{lemma}[Reduction to circulant coefficients]\label{lem:coefficient-reduction}
For every frequency \(a\in\Z_q^n\) with composition
\(r=(r_0,\ldots,r_{q-1})\), the normalized eigenvalue satisfies
\begin{equation}\label{eq:spectrum}
 \rho(a)=\frac{[z^r](\det C_q(z))^\ell}
                  {\mult n{r_0,\ldots,r_{q-1}}}.
\end{equation}
Consequently,
\begin{equation}
 \frac{|\lambda(a)|}{D_{q,n}}=|\rho(a)|
 \leq\frac{[z^r]H_{q,\ell}(z)}{\mult n{r_0,\ldots,r_{q-1}}}.
\end{equation}
\end{lemma}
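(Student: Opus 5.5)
The identity \eqref{eq:spectrum} follows directly from two facts already in place. Lemma~\ref{lem:symmetric-fourier} gives \(\rho(a)=[z^r]P(z)/M_r\). Equation~\eqref{eq:phase-circulant-determinant} identifies \(P(z)=(\det C_q(z))^\ell\). Substituting, and noting that \(M_r=\mult n{r_0,\ldots,r_{q-1}}\) by definition, yields the first display with no further work.

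For a self-contained check, I would also verify the coefficient identity combinatorially. Expanding \(P(z)=\prod_{t}(\sum_c z_c\zeta_q^{ct})^{\ell}\) means choosing a letter \(c_j\) for each of the \(n\) factors, where factor \(j\) carries a label \(t_j\) and each \(t\) occurs \(\ell\) times. This exhibits \([z^r]P\) as \(\sum_{\operatorname{comp}(c)=r}\zeta_q^{c\cdot t}\) for the fixed balanced word \(t\). Using the permutation invariance of Lemma~\ref{lem:fourier-diagonalization}, this sum equals \(M_r\rho(a)\).

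For the inequality, I would expand the determinant and permanent of \(C_q(z)\) as sums over \(\mathfrak S_q\). We have \(\det C_q(z)=\sum_\sigma\operatorname{sgn}(\sigma)\prod_i z_{\sigma(i)-i}\), and \(\per C_q(z)\) is the same sum without signs. Raising to the \(\ell\)-th power and expanding, both \((\det C_q)^\ell\) and \(H_{q,\ell}\) become sums over \(\ell\)-tuples \((\sigma_1,\ldots,\sigma_\ell)\) of the same monomials, weighted respectively by \(\pm1\) and by \(+1\).

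Fix the exponent \(r\) and collect terms. The coefficient \([z^r](\det C_q)^\ell\) is then a signed count of the tuples producing \(z^r\), while \([z^r]H_{q,\ell}\) is the unsigned count of the same tuples. The triangle inequality therefore gives \(|[z^r](\det C_q)^\ell|\le[z^r]H_{q,\ell}\). Dividing by \(M_r\) and using \(|\lambda(a)|/D_{q,n}=|\rho(a)|\), which holds by definition of \(\rho\), completes the bound.

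The only point needing care is that the comparison is term-by-term on an identical index set before any cancellation. Collecting monomials must happen after the sign-weighted sum over tuples is written out, so that the triangle inequality applies to a sum of unit-modulus terms. There is no real obstacle here; the substantive difficulty, bounding \([z^r]H_{q,\ell}/M_r\) by \(1/(n-1)\), is deferred to the switching argument.
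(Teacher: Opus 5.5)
Your proposal is correct and follows essentially the same route as the paper: substitute Lemma~\ref{lem:symmetric-fourier} and \eqref{eq:phase-circulant-determinant} to get the identity. Then compare the signed and unsigned permutation expansions of \((\det C_q(z))^\ell\) and \((\per C_q(z))^\ell\) coefficient by coefficient, and divide by the multinomial. Your optional direct expansion \([z^r]P=\sum_{\operatorname{comp}(c)=r}\zeta_q^{c\cdot t}\), followed by the orbit-averaging step, is a correct and more elementary re-derivation of the coefficient identity that the paper obtains through symmetric-sector matrix elements.
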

\begin{proof}
Lemma~\ref{lem:symmetric-fourier} gives
\([z^r]P(z)=M_r\rho(a)\), and
\eqref{eq:phase-circulant-determinant} identifies \(P(z)\) with
\((\det C_q(z))^\ell\). Substituting the multinomial expression
for \(M_r\) proves~\eqref{eq:spectrum}. Finally, expanding each determinant as a signed sum
of permutation monomials and taking absolute values coefficient by
coefficient gives
\begin{equation}
 \bigl|[z^r](\det C_q(z))^\ell\bigr|
 \leq [z^r](\per C_q(z))^\ell.
\end{equation}
Dividing by the multinomial coefficient proves the bound.
\end{proof}

To interpret the two numerators, split the positions of each word
of composition \(r\) into \(\ell\) fixed blocks of \(q\) positions, numbered
\(0,\ldots,q-1\) within each block. Write \(w(b,i)\) for the
letter at position \(i\) in block \(b\). This letter selects
column \(i+w(b,i)\pmod q\) in row \(i\), since
\(C_q(z)_{i,i+w(b,i)}=z_{w(b,i)}\). Call an assignment
\emph{valid} if the selected columns are all distinct within every
block. Equivalently, each map \(i\mapsto i+w(b,i)\pmod q\)
must be a permutation of \(\Z_q\). This is precisely the condition
for selecting one entry from every row and column in each matrix
copy. The determinant coefficient is the signed count of valid assignments:
each contributes the product of the signs of its block permutations.
The permanent coefficient counts the same assignments with every
contribution equal to \(+1\). The common denominator counts all
words of composition \(r\). Thus the normalized eigenvalue is the
signed count divided by the class size, while the fraction of valid
assignments bounds its absolute value. The letter counts are prescribed
across all blocks; validity imposes no balance requirement within a
block. Example~\ref{ex:coefficient-counting}
in Appendix~\ref{app:example-coefficients} applies this test to all
six words of composition \((1,1,1)\), and
Figure~\ref{fig:composition-coefficients} shows the three valid
matrix selections.

Section~\ref{sec:switching} proves that, within every nonmonochromatic
composition class, at most a fraction \(1/(n-1)\) of the words are
valid. This gives the bound required in
Lemma~\ref{lem:nonmonochromatic-bound}.

\section{A switching bound for circulant coefficients}\label{sec:switching}
We now prove the main combinatorial estimate: within every
nonmonochromatic composition class, at most a fraction \(1/(n-1)\)
of the assignments are valid. Together with
Lemma~\ref{lem:coefficient-reduction}, this gives the spectral bound
of Lemma~\ref{lem:nonmonochromatic-bound}.

\begin{theorem}[Uniform coefficient bound]\label{thm:permanent}
For every nonmonochromatic composition \(r\) of \(n=q\ell\),
\begin{equation}\label{eq:permanent}
 [z^r]H_{q,\ell}(z)\leq
 \frac{1}{n-1}\mult n{r_0,\ldots,r_{q-1}}.
\end{equation}
\end{theorem}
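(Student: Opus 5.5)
The plan is a switching (double-counting) argument inside the composition class. The starting point is the combinatorial reading of \([z^r]H_{q,\ell}(z)\) given after Lemma~\ref{lem:coefficient-reduction}. Let \(W_r\) be the set of words of composition \(r\), split into \(\ell\) blocks of \(q\) positions. Then \(|W_r|=M_r=\mult n{r_0,\ldots,r_{q-1}}\), and \([z^r]H_{q,\ell}(z)=|\mathcal V|\), where \(\mathcal V\subseteq W_r\) is the set of valid words: for each block \(b\), the map \(i\mapsto i+w(b,i)\) is a permutation of \(\Z_q\). The target inequality is \(|\mathcal V|\leq M_r/(n-1)\). I will build a bipartite "switching" relation between \(\mathcal V\) and \(W_r\setminus\mathcal V\), and count its pairs from both sides.

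\emph{Step 1 (every switch destroys validity).} Call a \emph{switch} the exchange of the letters at two positions carrying different letters; it preserves the composition. I claim that a switch applied to a valid word always gives an invalid word. Suppose the two positions are in different blocks, say row \(i\) of block \(b\) and row \(j\) of block \(b'\). Row \(i\) of block \(b\) then selects column \(i+w(b',j)\neq i+w(b,i)\). Since the other rows of \(b\) are unchanged and already occupy every column except \(i+w(b,i)\), the new column collides with one of them. Suppose instead both positions are in the same block, at rows \(i\neq i'\). Then row \(i\) selects \(i+w(i')\), and this differs both from \(i+w(i)\) (the letters differ) and from \(i'+w(i')\) (the rows differ). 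So again a collision with an untouched row occurs. Hence each valid word has exactly \(e(r)=\sum_{c<c'}r_cr_{c'}\) switches, all leading to invalid words. In particular \(\mathcal V\) is an independent set in the switch graph on \(W_r\).

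\emph{Step 2 (few valid preimages).} For an invalid word \(u\), I need to bound the number of valid words one switch away from \(u\). If \(u\) arises from a valid \(w\) by switching rows \(x\) and \(y\), then \(u\) has at most two defective blocks. Each defective block has exactly one doubly hit column and one missing column. The switched row \(x\) must be one of the two rows hitting the doubled column. The letter it must receive is then forced, namely (missing column)\(-x\), and the same holds for \(y\). This gives at most a small constant \(K\) (I expect \(K\leq4\)) of valid neighbours for each invalid \(u\). Double counting then yields \(|\mathcal V|\,e(r)\leq K\,(M_r-|\mathcal V|)\).

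\emph{Step 3 (the main obstacle).} The naive bound from Step 2 is too weak. For the extreme class \(r=(n-1,1,0,\ldots)\) we have \(e(r)=n-1\), so we only get \(|\mathcal V|\leq KM_r/(n-1+K)\), which is not \(\leq M_r/(n-1)\) when \(K\geq2\). The hard part is therefore to sharpen the ratio between switches out of \(\mathcal V\) and switches into each invalid word to at least \(n-2\).

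I would attack this in two ways. First, I would handle the sparse classes separately. For example, a block containing exactly one nonzero-difference letter can never be valid, which already forces \(\mathcal V=\emptyset\) for \(r=(n-1,1)\) and its rotations. Second, for general \(r\) I would weight the switches. One weight would restrict to switches between a well-chosen pair of letters \(c,c'\). Another would enlarge the move set with short cycles (3-cycles of letters across blocks), so that each valid word has \(\Omega(n\cdot\min)\) outgoing moves while each invalid word still has only \(O(1)\) incoming moves. Then I would verify \(e'(r)/K'\geq n-2\) for every nonmonochromatic \(r\), using \(e(r)\geq n-1\) with equality only for the two-letter extreme compositions. The threshold \(n-1\) is tight in the even-parity eigenvector case of Section~\ref{sec:spectral}, so the bookkeeping near the extreme compositions must be exact rather than asymptotic. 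That is where I expect most of the work to lie.
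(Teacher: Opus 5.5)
Your Steps~1 and~2 match the first half of the paper's switching proof. A swap of unequal letters always turns a valid word into an invalid one, and the resulting defect pattern allows only a bounded number of repairs. Your defect description needs one correction: when both swapped positions lie in one block, that block has two missing columns and either two doubly hit columns or one triply hit column. The triply hit case is where three repairs arise.

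Step~3 is where the theorem lives, and there you only list strategies, so the proposal does not yet prove the bound. The two routes you sketch also fail as stated. Take all switches with a uniform repair bound, and consider \(q=3\), \(r=(4,1,1)\). Every valid word has \(e(r)=9\) switches, yet the invalid word \((1,0,2\mid0,0,0)\) has three repairs, so the best ratio you get is \(3<n-2=4\). Restricting to a pair of letters \(c,c'\) leaves only \(r_cr_{c'}\le n-2\) switches in this class, too few against the generic bound of two repairs.

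The missing idea is to permit only swaps exchanging one letter \(c\) with \emph{any} other letter, where \(k=r_c\) satisfies \(2\le k\le n-2\).
\begin{itemize}
\item Each word then has \(d=k(n-k)\ge2(n-2)\) permitted swaps, since \(k(n-k)-2(n-2)=(k-2)(n-k-2)\).
\item On positions carrying \(c\), the target map is the injective shift \((b,i)\mapsto(b,i+c)\). Hence every preimage set of an invalid word contains at most one \(c\)-position.
\item A repair must take one position from each doubly hit preimage set, or two from the triply hit one, with exactly one of them carrying \(c\). So there are at most two permitted repairs.
\item Double counting then gives \(d|E_r|\le2(|W_r|-|E_r|)\), hence \(|E_r|/|W_r|\le1/(n-1)\).
\end{itemize}

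Two kinds of composition remain. Compositions with a letter of multiplicity \(n-1\) have no valid words, as you observed. The composition \(r=(1,\ldots,1)\) forces \(\ell=1\) and, by the standing parity assumption, \(n=q\) odd; you do not treat this case.

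There you must allow all \(\binom n2\) swaps and show at most three repairs. Let \(p_1,p_2\) and \(s_1,s_2\) be the preimages of the two doubled targets. If all four cross pairs repaired, the differences \(\delta_{ij}=p_i-s_j\) would, up to relabeling, form rows \((h,k)\) and \((k,h)\) with \(h\ne k\). Then \(\delta_{11}+\delta_{22}=\delta_{12}+\delta_{21}\) would give \(2h=2k\), which is impossible for odd \(q\). Finally, \(3/(\binom n2+3)\le1/(n-1)\) follows from \((n-3)(n-4)\ge0\).
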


For \(q=3\), this coefficient bound was proved by Luo, Ning, and
Zhang~\cite[Theorem~2, Eqs.~(29)--(30)]{lnz}.
We give a self-contained switching proof covering every cyclic
alphabet and all nonmonochromatic
compositions.\footnote{Exact coefficients involving only two letters follow from the
formulas of Donovan, Johnson, and
Wanless~\cite[Theorem~5.1]{djw}. When every label occurs once
(\(\ell=1\)), the coefficient counts transversals of a cyclic Latin
square, and the required bound follows from established transversal
bounds~\cite[Theorem~8 and Eq.~(8)]{mmw}.}

The key idea is that there are \emph{many ways to destroy validity,
but only a few ways to restore it}. Fix a nonmonochromatic composition,
and let \(V\) and \(I\) be its numbers of valid and invalid
assignments. Imagine placing the valid words on the left and the
invalid words on the right. Draw a connection whenever a permitted
swap of two unequal letters takes a valid word to an invalid word.
Swapping preserves the letter counts, so we stay within the same
composition class. Reversing the swap restores validity; we call
such a swap a \emph{repair}.

Now count the same connections from both sides. From the valid side,
count the swaps that destroy validity; from the invalid side, count
the swaps that restore it. Several valid words can lead to the same
invalid word. The bound on the number of repair swaps controls
exactly how much this can happen.

In the main case, choose a letter that occurs \(k\) times, where
\(2\leq k\leq n-2\), and permit only swaps of that letter
with a different letter. The proof establishes two facts:
\begin{itemize}
\item Each valid word has exactly \(k(n-k)\geq2(n-2)\)
permitted swaps, and every one destroys validity.
\item Each invalid word has at most two permitted repairs. This
is the main technical point: a swap creates a specific pattern of
repeated and missing targets, and a repair must fix that pattern.
\end{itemize}
Figure~\ref{fig:switching-counting} illustrates this comparison.

\begin{figure}[htbp]
\centering
\includegraphics[width=\linewidth]{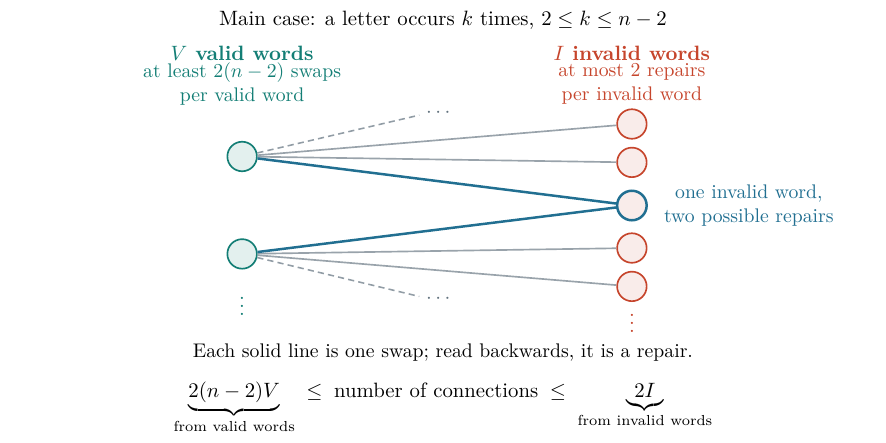}
\caption{Counting the same swap connections from both sides in the
main case. Each solid line represents a permitted swap involving the
chosen letter. The two highlighted lines share an invalid endpoint,
showing how two valid words can lead to the same invalid word.
Only selected words and swaps are shown; dashed continuations and
ellipses indicate omissions.}
\label{fig:switching-counting}
\end{figure}

Counting the connections from the two sides therefore gives
\begin{equation}
 2(n-2)V\leq k(n-k)V\leq2I,
 \qquad\text{hence}\qquad I\geq(n-2)V.
\end{equation}
There are thus at least \(n-2\) invalid words for every valid
word. The total number of words satisfies
\(V+I\geq(n-1)V\), so
\begin{equation}
 \frac{V}{V+I}\leq\frac1{n-1}.
\end{equation}
This also covers \(V=0\), since the composition class is nonempty.
For \(n>3\), the inequality forces invalid words to outnumber
valid words.

The proof treats the remaining compositions separately. When every
letter occurs once, it uses all swaps and at most three repairs to
obtain the same bound. When one letter occurs \(n-1\) times,
there are no valid words. We therefore never need to determine
\(V\) exactly: counting the ways to move between valid and invalid
words already bounds their relative numbers.

\FloatBarrier
\begin{proof}[Proof of Theorem~\ref{thm:permanent}]
Recall that \(q\geq2\), \(\ell\geq1\), \(n=q\ell\), and
\((q-1)\ell\) is even under our standing assumptions.

\emph{1. Assignments and swaps.}
Fix a nonmonochromatic composition \(r\) of \(n=q\ell\).
Let
\begin{equation}
 \begin{aligned}
 X&=\{1,\ldots,\ell\}\times\Z_q,\qquad |X|=n,\\
 W_r&=\{w:X\to\Z_q:\ |w^{-1}(\{c\})|=r_c
                    \text{ for every }c\in\Z_q\}.
 \end{aligned}
\end{equation}
For any assignment \(w:X\to\Z_q\), define its target map by
\begin{equation}
 F_w(b,i)=(b,i+w(b,i)).
\end{equation}
All additions and subtractions within a block are in \(\Z_q\).
Let \(E_r=\{w\in W_r:F_w\text{ is bijective}\}\) and
\(I_r=W_r\setminus E_r\) be the valid and invalid assignments,
respectively. Choosing a permutation \(\pi_b\) of \(\Z_q\)
in each block is equivalent to choosing an assignment with bijective
target map, via \(w(b,i)=\pi_b(i)-i\). The corresponding monomial
in the permanent expansion records the total letter counts, so
extracting \([z^r]\) restricts this bijection to \(E_r\).
Together with the multinomial count of all assignments, this gives
\begin{equation}\label{eq:count}
 |W_r|=\mult n{r_0,\ldots,r_{q-1}},\qquad
 |E_r|=[z^r]H_{q,\ell}(z).
\end{equation}
In particular, \(|W_r|>0\), and it suffices to prove
\(|E_r|/|W_r|\leq1/(n-1)\).

Write \(\binom X2\) for the set of unordered pairs of distinct
positions. For \(p=\{u,v\}\in\binom X2\), let \(\tau_p\)
exchange \(u,v\) and fix every other position, and put
\(w^p=w\circ\tau_p\). Thus \(w^p\in W_r\) and
\((w^p)^p=w\). We consider only pairs for which \(w(u)\ne w(v)\).
A \emph{repair} of \(w\in I_r\) is such a pair \(p\) with
\(w^p\in E_r\). Pairs may involve positions in different blocks.

\emph{2. The defects caused by a swap.}
For \(c\in\Z_q\), define the fixed shift
\begin{equation}
 T_c(b,i)=(b,i+c).
\end{equation}
This is a permutation of \(X\), and
\(F_w(u)=T_{w(u)}(u)\). For each fixed \(u\), the targets
\(T_c(u)\) are distinct as \(c\) varies.

Let \(w\in E_r\), \(p=\{u,v\}\), and
\(A=w(u)\ne w(v)=B\). Denote the old and new targets by
\begin{equation}
 \alpha=T_A(u),\quad \beta=T_B(v),\qquad
 \gamma=T_B(u),\quad \delta=T_A(v).
\end{equation}
Bijectivity of \(F_w\) gives \(\alpha\ne\beta\).
Moreover,
\(\{\alpha,\beta\}\cap\{\gamma,\delta\}=\varnothing\):
for example, \(\gamma\ne\alpha\) since \(A\ne B\), and
\(\gamma\ne\beta\) since \(T_B\) is injective and \(u\ne v\).
The two comparisons involving \(\delta\) follow in the same way.
Only the targets of \(u,v\) change. Hence \(\alpha,\beta\)
have no preimages under \(F_{w^p}\). Each distinct new target
\(\gamma\) or \(\delta\) already had a unique preimage under
the bijection \(F_w\). That preimage lies outside \(\{u,v\}\),
because the old targets of \(u,v\) were \(\alpha,\beta\), and
therefore remains unchanged by the swap.
If \(\gamma\ne\delta\), the swap adds one preimage to each,
so each has exactly two preimages. If \(\gamma=\delta\), the swap
adds both \(u\) and \(v\) as preimages of their common target,
which therefore has exactly three.
Every remaining target has exactly one preimage. In particular,
\(w^p\in I_r\).

It follows also that every invalid assignment admitting a repair
has exactly one of these two defect patterns: apply the preceding
argument to the valid assignment obtained by that repair and reverse
the swap. When two targets each have two preimages, any repair must
select one position from each of these two preimage sets. When one
target has three preimages, it must select two of those three
positions. Otherwise, an unchanged pair of positions would still
have the same target after the proposed repair.

\emph{3. The double-counting identity.}
Choose either all swaps of unequal letters or only swaps exchanging
a specified letter \(c\) with a different letter. For \(w\in W_r\),
let \(\mathcal P(w)\subseteq\binom X2\) be the pairs permitted
by the chosen rule. Both rules are preserved by reversing a swap:
\begin{equation}
 p\in\mathcal P(w)\quad\Longleftrightarrow\quad
 p\in\mathcal P(w^p).
\end{equation}
By Step~2, every permitted swap from \(E_r\) ends in \(I_r\).
The map \((w,p)\mapsto(w^p,p)\), whose inverse is the same map,
therefore bijects permitted swaps from valid assignments with
permitted repairs of invalid assignments. Consequently,
\begin{equation}\label{eq:switching-count}
 \sum_{w\in E_r}|\mathcal P(w)|
 =\sum_{v\in I_r}
   |\{p\in\mathcal P(v):v^p\in E_r\}|.
\end{equation}
We count swaps together with their unordered position pairs, so each
swap occurs once on each side of this identity. If every valid
assignment has \(d\) permitted swaps and every invalid assignment
has at most \(R>0\) permitted repairs, then
\begin{equation}\label{eq:switching-density}
 d|E_r|\leq R|I_r|=R(|W_r|-|E_r|),\qquad
 \frac{|E_r|}{|W_r|}\leq\frac{R}{d+R}.
\end{equation}

\emph{4. The composition cases.}
We distinguish the following three exhaustive cases.

\emph{Case 1: a letter occurs between \(2\) and \(n-2\) times.}
Suppose that \(r_c=k\) with \(2\leq k\leq n-2\).
Permit only swaps exchanging \(c\) with a different letter.
There are exactly \(d=k(n-k)\) permitted pairs for every assignment:
choose one of the \(k\) positions carrying \(c\) and one of the
\(n-k\) other positions.

Fix \(v\in I_r\). If it has no permitted repair, the required
bound is immediate. Otherwise, Step~2 applies. Each preimage set of
\(F_v\) contains at most one position carrying \(c\), because all
such positions are mapped by the injective map \(T_c\).
If two targets each have two preimages, let
\(\varepsilon,\delta\in\{0,1\}\) count the positions carrying
\(c\) in their respective preimage sets. A permitted repair must
choose one position from each set, exactly one of which carries
\(c\). The number of candidate pairs is therefore
\begin{equation}
 \varepsilon(2-\delta)+\delta(2-\varepsilon)\leq2.
\end{equation}
If one target has three preimages, a permitted repair must choose
two of them. At most one carries \(c\), leaving at most two
candidate pairs. Thus every invalid assignment has at most two
permitted repairs; no claim that every candidate pair succeeds is
needed. Taking \(R=2\) in~\eqref{eq:switching-density} gives
\begin{equation}\label{eq:two}
 \begin{aligned}
 k(n-k)|E_r|&\leq2\bigl(|W_r|-|E_r|\bigr),\\
 \frac{|E_r|}{|W_r|}&\leq\frac2{k(n-k)+2}.
 \end{aligned}
\end{equation}
Since
\begin{equation}
 k(n-k)-2(n-2)=(k-2)(n-k-2)\geq0,
\end{equation}
we obtain
\begin{equation}
 \frac{|E_r|}{|W_r|}
 \leq\frac2{2(n-2)+2}=\frac1{n-1}.
\end{equation}

\emph{Case 2: a letter occurs \(n-1\) times.}
Let this letter be \(c\), and take any \(w\in W_r\).
Subtracting \(c\) from every letter gives
\(F_{w-c}=T_{-c}\circ F_w\), so it preserves validity.
Since the composition is nonmonochromatic, \(w-c\) is zero at
all but one position \(u\), where it is nonzero. Thus
\(F_{w-c}\) fixes every position except \(u\). Its target
\(v=F_{w-c}(u)\ne u\) also satisfies \(F_{w-c}(v)=v\),
so \(F_{w-c}\) is not injective and \(w\) is invalid.
Hence \(E_r=\varnothing\), and the coefficient bound holds.

\emph{Case 3: neither of the preceding cases applies.}
Nonmonochromaticity excludes an entry \(r_c=n\), so every positive
entry of \(r\) must be one. Since there are only \(q\) letters,
this gives \(n\leq q\); combined with \(n=q\ell\geq q\), it
forces \(\ell=1\), \(n=q\), and
\(r=(1,\ldots,1)\). Since \((q-1)\ell\) is even, \(n=q\)
is odd. There is now one block, so we identify its positions with
\(\Z_n=\Z_q\).

Permit all swaps. Every assignment in \(W_r\) has distinct letters,
so it admits \(d=\binom n2\) permitted pairs. We prove that each
invalid assignment has at most three repairs. An assignment without
a repair already satisfies this bound. Otherwise, Step~2 applies.
If one target has three preimages, a repair must exchange letters
at two of them, giving at most \(\binom32=3\) choices.

For the other defect pattern, write \(U,V\in\Z_q\) for the
two targets with two preimages, whose respective preimage sets are
\(\{p_1,p_2\}\) and \(\{s_1,s_2\}\). Let \(H_1\ne H_2\)
be the two missing targets. A repair must exchange letters at some
\(p_i\) and \(s_j\), giving four candidate pairs. The two
preimage sets are disjoint because \(U\ne V\). Since the letters at these positions
are \(U-p_i\) and \(V-s_j\), their new targets after a swap are
\begin{equation}
 V+(p_i-s_j),\qquad U-(p_i-s_j).
\end{equation}
A repair must send both changed positions to the two missing targets;
in particular, a necessary condition is
\begin{equation}
 \delta_{ij}:=p_i-s_j\in
 \{H_1-V,H_2-V\}=\{h,k\},\qquad h\ne k.
\end{equation}
If all four pairs repaired the assignment, all four \(\delta_{ij}\)
would belong to this two-element set. Each row and each column of
\((\delta_{ij})\) has distinct entries, because the positions within
either pair are distinct. Up to exchanging \(h,k\), the matrix
would consequently be
\begin{equation}
 \begin{pmatrix}h&k\\k&h\end{pmatrix}.
\end{equation}
But
\(\delta_{11}+\delta_{22}=\delta_{12}+\delta_{21}\)
would then give \(2h=2k\) in \(\Z_q\).
Since \(q\) is odd, multiplication by two is injective, contradicting
\(h\ne k\). At most three of the four pairs can therefore repair the
assignment.

Applying~\eqref{eq:switching-density} with
\(d=\binom n2=n(n-1)/2\) and \(R=3\) yields
\begin{equation}\label{eq:three}
 \begin{aligned}
 d|E_r|&\leq3\bigl(|W_r|-|E_r|\bigr),\\
 \frac{|E_r|}{|W_r|}
 &\leq\frac3{d+3}
 =\frac3{n(n-1)/2+3}\leq\frac1{n-1}.
 \end{aligned}
\end{equation}
The last inequality is equivalent to
\((n-3)(n-4)\geq0\), which holds for every odd integer \(n\geq3\).
All compositions have now been covered, proving
\eqref{eq:permanent} by \eqref{eq:count}.
\end{proof}

The coefficient bound now completes the proof of the spectral
estimate stated in Section~\ref{sec:fourier-waves}.

\begin{proof}[Proof of Lemma~\ref{lem:nonmonochromatic-bound}]
Let \(a\in\Z_q^n\) have nonmonochromatic composition
\(r=(r_0,\ldots,r_{q-1})\). Lemma~\ref{lem:coefficient-reduction}
and Theorem~\ref{thm:permanent} give
\begin{equation}
 |\rho(a)|\leq
 \frac{[z^r]H_{q,\ell}(z)}{\mult n{r_0,\ldots,r_{q-1}}}
 \leq\frac1{n-1},
\end{equation}
which proves \eqref{eq:nonconstant}.
By Lemma~\ref{lem:normalized-compositions}\,\ref{item:normalized-wave},
\(\rho(a)\) is real, so \(\rho(a)\geq-1/(n-1)\).
Multiplying by the positive degree \(D_{q,n}\) gives the stated
lower bound for \(\lambda(a)=D_{q,n}\rho(a)\).
\end{proof}

\section{The least adjacency eigenvalue}\label{sec:spectral}
We now show that the spectral lower bound is attained, completing
the proof of Theorem~\ref{thm:main}.

\begin{proof}[Completion of the proof of Theorem~\ref{thm:main}]
We prove the remaining even-parity assertion. By the Fourier eigenbasis
of Lemma~\ref{lem:fourier-diagonalization}, every adjacency eigenvalue
has the form \(\lambda(a)=D_{q,n}\rho(a)\).
Constant frequencies have \(\rho(a)=1\) by
Lemma~\ref{lem:constant-frequencies}, while all other frequencies
satisfy \(\rho(a)\geq-1/(n-1)\) by
Lemma~\ref{lem:nonmonochromatic-bound}. These two cases cover the
entire spectrum, so
\begin{equation}
 \lambda_{\min}(\Om)\geq-\frac{D_{q,n}}{n-1}.
\end{equation}

It remains to exhibit a frequency attaining this bound. Take
\(a=(1,-1,0,\ldots,0)\), choose a balanced shift
\(X=(X_1,\ldots,X_n)\) uniformly from \(S_{q,n}\), and set
\(W_j=\zeta_q^{X_j}\), where \(\zeta_q=e^{2\pi i/q}\).
By Lemma~\ref{lem:normalized-compositions}\,\ref{item:normalized-wave},
\begin{equation}
 \rho(a)=\E\zeta_q^{a\cdot X}
        =\E\zeta_q^{X_1-X_2}
        =\E W_1\overline W_2.
\end{equation}
Thus we need to evaluate the average product of the phases at two
distinct coordinates.

Balancedness means that each \(q\)-th root of unity occurs exactly
\(\ell\) times among \(W_1,\ldots,W_n\). Since these roots sum to
zero, every allowed shift satisfies
\begin{equation}
 \sum_{j=1}^nW_j=\ell\sum_{t=0}^{q-1}\zeta_q^t=0.
\end{equation}
Write \(c=\E W_1\overline W_2\). Coordinate permutations preserve
the set of balanced shifts
(Lemma~\ref{lem:graph-properties}\,\ref{item:graph-coordinates})
and hence the uniform distribution of \(X\). Any ordered pair of
distinct coordinates can be sent to any other by such a permutation,
so \(\E W_i\overline W_j=c\) for every \(i\ne j\).
Expanding the squared modulus of the zero sum gives
\begin{equation}
 \begin{aligned}
 0&=\E\left|\sum_{j=1}^nW_j\right|^2\\
  &=\sum_{j=1}^n\E|W_j|^2
    +\sum_{i\ne j}\E W_i\overline W_j\\
  &=n+n(n-1)c.
 \end{aligned}
\end{equation}
Here each of the \(n\) diagonal terms equals one because
\(|W_j|=1\), and the off-diagonal sum contains \(n(n-1)\)
ordered pairs, each with expectation \(c\).
Solving for \(c\) therefore gives \(\rho(a)=c=-1/(n-1)\).
The Fourier wave \(\chi_a(x)=\zeta_q^{x_1-x_2}\) consequently
has eigenvalue
\begin{equation}
 \lambda(a)=D_{q,n}\rho(a)=-\frac{D_{q,n}}{n-1}.
\end{equation}
This attains the lower bound and proves the claimed equality.
When \(q=2\), the first two frequency entries both equal one
because \(-1=1\) in \(\Z_2\), and the same calculation applies.
\end{proof}

\section{Zero-error source coding}\label{sec:source-coding}
We now use the least eigenvalue to determine the minimum
communication for balanced pairs.
For a finite graph \(G\), consider the task in
Figure~\ref{fig:source-coding}, with an arbitrary edge as Bob's
candidate pair. Let \(\chs(G)\) denote the smallest classical
message alphabet permitting a perfect protocol with a shared
finite-dimensional state. This is the entanglement-assisted chromatic
number from zero-error source coding~\cite{bblps}: the graph of pairs
that Bob's side information can leave ambiguous is exactly \(G\).
We will combine Lemma~\ref{lem:source-hoffman} with
Theorem~\ref{thm:main} and matching protocols to
determine this minimum in both parity cases.

A quantum state is represented by a density matrix, a positive
semidefinite matrix with trace one. A measurement is described by
a POVM, a family of positive semidefinite operators summing to the
identity. For each input and message, we use Bob's unnormalized
conditional state: his post-measurement state weighted by the
probability of that message.

For completeness, we give a direct spectral proof of this known
communication bound, using the conditional-state formulation of
Bri\"et et al.~\cite[Sec.~1.2]{bblps}.
Their proof of Theorem~2.1 gives a stronger general bound through
semidefinite programming~\cite[Sec.~4]{bblps}.
Here we apply an adjacency eigenvalue inequality to vectors formed
from the conditional states and conclude with Cauchy--Schwarz.

\begin{proof}[Proof of Lemma~\ref{lem:source-hoffman}]
Let \(\sigma\) be a protocol's shared density matrix and
\(\{A_{x,a}\}_{a\in[k]}\) Alice's POVM for input \(x\).
The conditional states form a \emph{quantum state assemblage}
\(\{R_{x,a}\}_{x,a}\)~\cite[Sec.~II]{cs-steering}, indexed by Alice's
input \(x\) and outcome \(a\), which she sends as her message:
\begin{equation}\label{eq:source-conditional-states}
 R_{x,a}=\operatorname{Tr}_A\bigl[
 (A_{x,a}^{1/2}\otimes I)\sigma(A_{x,a}^{1/2}\otimes I)\bigr]
 =\operatorname{Tr}_A\bigl[(A_{x,a}\otimes I)\sigma\bigr].
\end{equation}
The first expression proves positivity.
The trace \(p(a\mid x)=\tr R_{x,a}\) is the probability of
message \(a\); whenever it is positive, Bob's normalized conditional
state is \(R_{x,a}/p(a\mid x)\).
The second expression follows by
cycling operators on the traced-out register and gives
\(\sum_aR_{x,a}=\operatorname{Tr}_A\sigma=\rho\), with
\(\tr\rho=1\). This common marginal, independent of \(x\), is
the assemblage's no-signalling condition.
For an edge \(\{u,v\}\) and message \(a\), let \(F\) be Bob's
effect for declaring \(u\); his other effect is \(I-F\).
Zero error gives
\(\tr[(I-F)R_{u,a}]=\tr(FR_{v,a})=0\).
For positive matrices \(R,E\), the identity
\(\tr(ER)=\|E^{1/2}R^{1/2}\|_{\HS}^2\) shows that a zero trace
pairing implies \(ER=0\). Thus
\((I-F)R_{u,a}=0\) and \(FR_{v,a}=0\). Taking the adjoint of
the first identity gives \(R_{u,a}F=R_{u,a}\), whence
\(R_{u,a}R_{v,a}=R_{u,a}FR_{v,a}=0\).
The conditional states therefore satisfy
\begin{equation}\label{eq:source-ensembles}
 \sum_{a=1}^k R_{x,a}=\rho,\qquad
 \tr\rho=1,\qquad R_{u,a}R_{v,a}=0\quad(u\sim v).
\end{equation}

Thus, for each fixed message, conditional states at adjacent vertices
have orthogonal supports, allowing Bob to distinguish them without
error. The adjacency matrix records exactly which pairs must obey these
constraints.

Let \(N=|V(G)|\), let \(A_G\) be the adjacency matrix, and let
\(e_x\) denote the standard coordinate basis of \(\C^{V(G)}\).
Put \(e=N^{-1/2}\one\).
Lemma~\ref{lem:regular-spectral-inequality}\,\ref{item:regular-spectral-bound}
gives
\begin{equation}
 A_G\succeq\tau I+(D-\tau)|e\rangle\langle e|.
\end{equation}
Using the Hilbert--Schmidt inner product on the matrix factor, put
\(c=\|\rho\|_{\HS}^2>0\),
\(z_a=\sum_x e_x\otimes R_{x,a}\), and \(T_a=\sum_xR_{x,a}\).
Expanding the quadratic form gives
\begin{equation}\label{eq:source-edge-quadratic}
 \langle z_a,(A_G\otimes I)z_a\rangle
 =\sum_{u,v}(A_G)_{uv}\tr(R_{u,a}R_{v,a})=0.
\end{equation}
Indeed, the adjacency entry vanishes unless \(u\sim v\), in which
case \(R_{u,a}R_{v,a}=0\) by~\eqref{eq:source-ensembles}.
The component along the normalized constant vector is
\begin{equation}\label{eq:source-constant-component}
 (\langle e|\otimes I)z_a
 =\sum_x\langle e,e_x\rangle R_{x,a}
 =\frac1{\sqrt N}\sum_xR_{x,a}
 =\frac{T_a}{\sqrt N}.
\end{equation}
Thus the quadratic form of
\(|e\rangle\langle e|\otimes I\) at \(z_a\) is
\(\|T_a\|_{\HS}^2/N\).
Tensoring the spectral inequality with the identity on the matrix
space, applying it to each \(z_a\), and summing over messages gives
\begin{equation}
 0\geq\tau\sum_a\|z_a\|^2
       +\frac{D-\tau}{N}\sum_a\|T_a\|_{\HS}^2.
\end{equation}
For positive matrices \(R,S\),
\(\tr(RS)=\|S^{1/2}R^{1/2}\|_{\HS}^2\geq0\).
Expanding \(\tr[(\sum_aR_{x,a})^2]=c\) therefore shows
\(\sum_a\|R_{x,a}\|_{\HS}^2\leq c\), and hence
\(\sum_a\|z_a\|^2\leq Nc\).
Also \(\sum_aT_a=N\rho\). Cauchy--Schwarz in the
Hilbert--Schmidt space gives
\begin{equation}\label{eq:source-message-cauchy-schwarz}
 N^2c=\left\|\sum_{a=1}^kT_a\right\|_{\HS}^2
 \leq k\sum_{a=1}^k\|T_a\|_{\HS}^2,
 \qquad
 \sum_a\|T_a\|_{\HS}^2\geq\frac{N^2c}{k}.
\end{equation}
Here \(k\geq1\), since the common sum has trace one.
Since \(\tau<0\), the bound \(\sum_a\|z_a\|^2\leq Nc\)
implies \(\tau\sum_a\|z_a\|^2\geq\tau Nc\).
Together with \(D-\tau>0\) and the preceding lower bound, this gives
\begin{equation}
 0\geq \tau Nc+(D-\tau)\frac{Nc}{k}.
\end{equation}
Cancelling \(Nc>0\) and multiplying by \(k>0\) gives
\(\tau k+D-\tau\leq0\), or
\((-\tau)k\geq D-\tau\). Division by \(-\tau>0\) yields
\(k\geq1-D/\tau\), as claimed.
\end{proof}

The odd-parity protocol uses the graph bipartition of
Lemma~\ref{lem:odd-bipartition}.

\begin{lemma}[Odd-parity coding]\label{lem:odd-source-coding}
Let \(q\geq2\), \(\ell\geq1\), and \(n=q\ell\), with
\((q-1)\ell\) odd. Then
\begin{equation}
 \chs(\Om)=2.
\end{equation}
The minimum is attained by a protocol using one classical bit and
no shared entanglement.
\end{lemma}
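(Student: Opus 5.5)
The plan is to establish matching upper and lower bounds, \(\chs(\Om)\leq2\) and \(\chs(\Om)\geq2\).

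\emph{Upper bound.} We use the bipartition \(b\) of Lemma~\ref{lem:odd-bipartition} as a deterministic one-bit protocol. Alice, given \(x\), sends \(b(x)\in\{0,1\}\); no shared state is needed, or equivalently a trivial one-dimensional state with deterministic POVMs \(A_{x,a}=\delta_{a,b(x)}\). Bob holds an allowed pair \(\{u,v\}\), which is an edge of \(\Om\). By Lemma~\ref{lem:odd-bipartition}, \(b(v)=1-b(u)\), so exactly one candidate carries the received label. Bob outputs that candidate, and recovery is correct with probability one for every edge and either choice of \(x\). Since the protocol uses two messages, \(\chs(\Om)\leq2\).

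\emph{Lower bound.} We show that one message cannot suffice. Because \(\ell\geq1\), the set \(S_{q,n}\) is nonempty, so \(\Om\) has at least one edge \(\{u,v\}\). Suppose \(k=1\). In the conditional-state formulation from the proof of Lemma~\ref{lem:source-hoffman}, we then have \(R_{x,1}=\rho\) for every \(x\), with \(\tr\rho=1\). The orthogonality condition \(R_{u,1}R_{v,1}=0\) would give \(\rho^2=0\), hence \(\rho=0\), contradicting \(\tr\rho=1\).

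The same conclusion also follows from Lemma~\ref{lem:source-hoffman} combined with the odd case of Theorem~\ref{thm:main}. There \(\tau=-D_{q,n}<0\), so \(k\geq1-D/\tau=2\). Either argument gives \(\chs(\Om)\geq2\), and together with the upper bound this proves \(\chs(\Om)=2\).

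No step is a real obstacle. The only care needed is checking that the deterministic protocol fits the paper's model with arbitrary finite-dimensional shared states, which the trivial state handles. It is also worth noting explicitly that optimality holds even against entanglement-assisted protocols, and the lower bound argument above applies to those.
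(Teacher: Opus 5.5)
Your proposal is correct and follows the paper's proof. The upper bound is the deterministic protocol that sends the bipartition label from Lemma~\ref{lem:odd-bipartition}, and the lower bound is Lemma~\ref{lem:source-hoffman} with \(\tau=-D_{q,n}\) from the odd case of Theorem~\ref{thm:main}. Your extra one-message argument (\(R_{u,1}R_{v,1}=\rho^2=0\)) is a valid, more elementary alternative; it is the \(t=2\) case of the clique bound the paper uses in Corollary~\ref{cor:source-block-coding}.
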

\begin{proof}
Let \(b(x)\in\{0,1\}\) be the class label from
Lemma~\ref{lem:odd-bipartition}, which satisfies \(b(v)=1-b(u)\)
on every edge.
Since edges are precisely the allowed candidate pairs, every such
pair contains exactly one word with each label.
Alice sends the single classical bit \(b(x)\). Bob computes
\(b(u)\) and \(b(v)\) and outputs the unique candidate whose bit
equals Alice's message. The protocol has zero error for either
candidate in every allowed pair. Alice's rule depends only on her
word, not on Bob's pair, and requires no shared entanglement.
Conversely, Theorem~\ref{thm:main} gives
\(\tau=-D_{q,n}\), so Lemma~\ref{lem:source-hoffman} yields
\(k\geq1-D_{q,n}/\tau=2\), even with arbitrary
finite-dimensional entanglement assistance.
\end{proof}

We now turn to the even-parity case and the Fourier protocol
introduced in Section~\ref{sec:introduction}.

\begin{corollary}[Optimal coding alphabet]\label{cor:source-coding}
If \(q\geq2\), \(\ell\geq1\), \(n=q\ell\), and
\((q-1)\ell\) is even, then
\begin{equation}
 \chs(\Om)=n.
\end{equation}
The minimum is over arbitrary finite-dimensional shared density
matrices, Alice's POVMs, and Bob's message-dependent binary decoders.
\end{corollary}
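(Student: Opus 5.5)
The corollary has two halves, and I will prove them separately: the lower bound \(\chs(\Om)\geq n\) and a matching \(n\)-message protocol.

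\emph{Lower bound.} By Lemma~\ref{lem:graph-properties}, \(\Om\) is a finite simple undirected \(D_{q,n}\)-regular graph whose edges are exactly the allowed candidate pairs. Under the even-parity hypothesis, Theorem~\ref{thm:main} gives \(\tau=-D_{q,n}/(n-1)<0\), so Lemma~\ref{lem:source-hoffman} applies. It yields
\[
 k\geq 1-\frac{D_{q,n}}{\tau}=1+(n-1)=n
\]
for every protocol with arbitrary finite-dimensional shared state and POVMs. This half is immediate.

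\emph{Upper bound.} The plan is to use the phase encoding of \cite[Lemma~4.1]{cao} inside the source-coding construction of \cite[Lemma~5.1]{bblps}; the construction needs \(q\mid n\) but no parity assumption.
\begin{enumerate}
\item Alice and Bob share the maximally entangled state \(\Phi=n^{-1/2}\sum_j|j\rangle|j\rangle\) of local dimension \(n\).
\item On input \(x\), Alice applies the diagonal unitary \(D_x=\operatorname{diag}(\zeta_q^{x_1},\ldots,\zeta_q^{x_n})\) to her register.
\item She then measures in a fixed orthonormal basis \(\{f_a\}_{a\in[n]}\) with all entries of modulus \(n^{-1/2}\), for instance the \(n\)-point Fourier basis, and sends the outcome \(a\).
\end{enumerate}
Bob's unnormalized conditional state is proportional to the projector onto the conjugate of \(D_x^{\mathsf T}f_a\), up to transposition conventions. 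For fixed \(a\), the overlap between the conditional states for \(u\) and \(v\) is proportional to
\[
 \sum_j |f_a(j)|^2\zeta_q^{v_j-u_j}=\frac1n\sum_j\zeta_q^{s_j}.
\]
For a balanced shift \(s=v-u\), each residue appears \(\ell\) times, so this equals \(\frac{\ell}{n}\sum_{c}\zeta_q^c=0\). Hence Bob's two candidate states are orthogonal pure states for every outcome. Knowing the pair and \(a\), he measures in a basis containing one of them and decodes without error. This gives \(\chs(\Om)\leq n\).

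The only real care needed is in checking the steering computation with consistent conjugation and transpose conventions, and in verifying that Bob's decoder is a legitimate binary POVM for every pair. The spectral content has already been supplied by Theorem~\ref{thm:main}, so I expect no substantive obstacle.
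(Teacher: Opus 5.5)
Your proposal is correct and follows the paper's proof essentially step for step. The lower bound combines Lemma~\ref{lem:source-hoffman} with Theorem~\ref{thm:main} to get $k\geq 1-D/\tau=n$. The upper bound is the same flat-Fourier phase protocol on a maximally entangled state of local dimension $n$, with Bob testing against one candidate's conditional state. The only difference is cosmetic: applying $D_x$ before a fixed Fourier measurement projects onto $D_x^{*}f_a$ rather than the paper's $D_xf_a$. This merely conjugates the overlap $\frac1n\sum_j\zeta_q^{s_j}$, which still vanishes for balanced shifts.
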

\begin{proof}
By Lemma~\ref{lem:graph-properties}, the graph is finite, simple,
and undirected. Lemma~\ref{lem:adjacency-structure} gives
\(A\one=D\one\), so it is \(D\)-regular.
By Lemma~\ref{lem:source-hoffman} and Theorem~\ref{thm:main}, every
perfect protocol therefore requires \(k\geq1-D/\tau=n\) messages.

For achievability, we apply the source-coding construction
of~\cite[Lemma~5.1]{bblps} to the phase representation
of~\cite[Lemma~4.1]{cao}. Explicitly, we use the flat-Fourier projectors
of~\cite[Proposition~7]{cmnsw}.
Let \(F_n\) be the normalized Fourier matrix, and for each vertex
\(x\in\Z_q^n\) define
\begin{equation}
 U_x=\operatorname{diag}(\zeta_q^{x_1},\ldots,\zeta_q^{x_n})F_n.
\end{equation}
Let \(P_{x,a}\) be the rank-one projection onto column \(a\) of
\(U_x\); these \(n\) projections sum to the identity.
At adjacent vertices \(u,v\), the inner product of columns with the
same index is
\begin{equation}
 \frac1n\sum_{j=1}^n\zeta_q^{v_j-u_j}
 =\frac{\ell}{n}\sum_{t=0}^{q-1}\zeta_q^t=0.
\end{equation}
Thus \(P_{u,a}P_{v,a}=0\) for every edge and message.
Share \(|\Phi_n\rangle=n^{-1/2}\sum_i|i\rangle|i\rangle\).
Alice measures \(\{P_{x,a}\}_a\) and sends \(a\).
To compute Bob's conditional state, expand the maximally entangled
state in this basis. For any \(n\times n\) matrix \(M\),
\begin{equation}\label{eq:source-maximally-entangled-transpose}
 \begin{aligned}
 \operatorname{Tr}_A\bigl[(M\otimes I)
             |\Phi_n\rangle\langle\Phi_n|\bigr]
 &=\frac1n\sum_{i,j}\langle j|M|i\rangle\,|i\rangle\langle j|\\
 &=\frac{M^{\mathsf T}}n.
 \end{aligned}
\end{equation}
Applying~\eqref{eq:source-conditional-states} with \(M=P_{x,a}\)
therefore gives the unnormalized state
\(R_{x,a}=P_{x,a}^{\mathsf T}/n\).
Since \(P_{x,a}\) has rank one, outcome \(a\) has probability
\(\tr R_{x,a}=1/n\), and Bob's normalized conditional state is
\(P_{x,a}^{\mathsf T}\).
Given \(\{u,v\}\), he uses the binary measurement
\(\{P_{u,a}^{\mathsf T},I-P_{u,a}^{\mathsf T}\}\), declaring
\(u\) or \(v\), respectively. Conditioned on message \(a\),
the probability of declaring \(u\) is
\begin{equation}
 \tr\bigl(P_{u,a}^{\mathsf T}P_{x,a}^{\mathsf T}\bigr)
 =\begin{cases}
  1,&x=u,\\
  0,&x=v.
 \end{cases}
\end{equation}
The first case follows from idempotence and unit trace; the second
from \(P_{v,a}P_{u,a}=0\), the adjoint of the edge relation.
The complementary outcome therefore declares \(v\) with certainty
when \(x=v\).
An arbitrary fixed ordering of the pair specifies which candidate
is tested first. This is a perfect protocol with \(n\) messages.
\end{proof}

\begin{remark}[Changing parity by preprocessing]\label{rem:parity-preprocessing}
To transfer the odd-case protocol, Alice would need a fixed map
\(F\) taking every allowed pair \(\{u,v\}\) to an allowed
odd-case pair \(\{F(u),F(v)\}\). Composing \(F\) with the
odd-case partition label would then give a two-message protocol
for the original task. Corollary~\ref{cor:source-coding} excludes
this in the even-parity regime, where \(n\geq3\) messages are
necessary. Choosing the lift separately for each candidate pair
does not directly define a valid protocol, since Alice does not
know that pair.
\end{remark}

Optimality also persists when several source instances are encoded
jointly. For \(m\geq1\), the \emph{strong power}
\(G^{\boxtimes m}\) has vertex set \(V(G)^m\); two distinct tuples
are adjacent exactly when their entries in every position are either
equal or adjacent in \(G\). This is the characteristic graph of
\(m\) independent source instances~\cite[Secs.~1.1--1.2]{bblps}.

\begin{corollary}[Block optimality]\label{cor:source-block-coding}
Let \(q\geq2\), \(\ell\geq1\), and \(n=q\ell\), and put
\(G=\Om\). For every integer \(m\geq1\),
\begin{equation}\label{eq:source-block-optimality}
 \chs\!\left(G^{\boxtimes m}\right)=
 \begin{cases}
 n^m,&(q-1)\ell\text{ even},\\
 2^m,&(q-1)\ell\text{ odd}.
 \end{cases}
\end{equation}
Equivalently, suppose Alice receives \(m\) words and Bob receives
one allowed candidate pair for each word. Recovering all \(m\) words
with zero error for every tuple of pairs and every choice of candidates
requires exactly the number of messages in
Eq.~\eqref{eq:source-block-optimality}. This minimum allows arbitrary
finite-dimensional shared states independent of the inputs and joint
local measurements across all \(m\) instances.
In the even-parity case, the optimal fixed-length binary cost is
\(\lceil m\log_2 n\rceil\), with asymptotic cost
\(\log_2 n\) bits per instance. In the odd-parity case, the optimal
cost is exactly \(m\) bits, or one bit per instance, and is attained
by a deterministic protocol without shared entanglement.
\end{corollary}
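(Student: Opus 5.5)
The plan is to prove matching upper and lower bounds on \(\chs(G^{\boxtimes m})\), using the spectral communication bound of Lemma~\ref{lem:source-hoffman} applied to the strong power itself.

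\emph{Upper bounds.} In even parity, run the protocol of Corollary~\ref{cor:source-coding} independently on each of the \(m\) instances. Use \(m\) copies of \(|\Phi_n\rangle\); Alice measures each word's projective measurement and sends the tuple \((a_1,\ldots,a_m)\in[n]^m\). Bob needs zero error on every tuple of pairs, where a coordinate may even be a trivial pair with equal entries. He decodes each coordinate separately, and each one succeeds with certainty. Hence \(\chs(G^{\boxtimes m})\le n^m\). In odd parity, Alice sends the \(m\) bits \(b(x_1),\ldots,b(x_m)\) from Lemma~\ref{lem:odd-bipartition}, which gives \(2^m\) messages with no entanglement. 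The fixed-length costs \(\lceil m\log_2 n\rceil\) and \(m\) then follow by counting bit strings.

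\emph{Lower bounds.} The strong power has adjacency matrix
\[
A_{G^{\boxtimes m}}=(A+I)^{\otimes m}-I^{\otimes m}.
\]
It is finite, simple, and \(D_m\)-regular with \(D_m=(D+1)^m-1\). Its eigenvalues are \(\prod_i(\mu_i+1)-1\), where each \(\mu_i\) ranges over the spectrum of \(A\). By Theorem~\ref{thm:main} and regularity, every eigenvalue satisfies \(\mu+1\in[\tau+1,D+1]\). In even parity \(\tau+1=1-D/(n-1)\), which is negative whenever \(D>n-1\).

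To minimise the product, take an odd number \(j\) of factors equal to \(\tau+1\) and the rest equal to \(D+1\). I expect this to give a least eigenvalue \(\tau_m=(\tau+1)(D+1)^{m-1}-1\) when \(|\tau+1|\le D+1\). Plugging into \(k\ge 1-D_m/\tau_m\) gives
\[
1+\frac{(D+1)^m-1}{1+(D/(n-1)-1)(D+1)^{m-1}},
\]
which is \emph{not} \(n^m\) in general. So the naive Hoffman bound on the strong power fails, and I expect this step to be the main obstacle.

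\emph{Resolving the obstacle.} I would instead use a multiplicative spectral bound that tensorises. In the proof of Lemma~\ref{lem:source-hoffman}, replace the operator inequality by a matrix \(B=I+A/(-\tau)\), which is positive semidefinite by Theorem~\ref{thm:main}. The argument then shows \(k\ge N\,\langle e,Be\rangle/\|B\|\)-type bounds, i.e. a Lovász-theta-type weighting. Concretely:
\begin{itemize}
\item Prove a variant of Lemma~\ref{lem:source-hoffman}: for any positive semidefinite \(B\) supported on the diagonal and the edges, with \(B_{xx}=1\), one has \(k\ge\langle\one,B\one\rangle/N\).
\item Take \(B^{\otimes m}\). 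It is positive semidefinite, has unit diagonal, and is supported on equal-or-adjacent tuples, i.e. on \(G^{\boxtimes m}\) plus the diagonal.
\item Compute \(\langle\one,B^{\otimes m}\one\rangle/N^m=(1-D/\tau)^m\).
\end{itemize}
This gives \(n^m\) in even parity and \(2^m\) in odd parity. The key check is that the orthogonality argument only uses entries of \(B\) on edges and the diagonal. That should follow exactly as in~\eqref{eq:source-edge-quadratic}, with \(\langle z_a,(B\otimes I)z_a\rangle=\sum_x\|R_{x,a}\|_{\HS}^2\) and the constant-component Cauchy--Schwarz step unchanged.
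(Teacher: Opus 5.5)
Your final argument is correct, but it obtains the converse by a different route from the paper. The paper builds the even-parity lower bound from cited results. The spectral theta bound and Theorem~2.1 of Bri\"et et al.\ give $n=1-D/\tau\le\vartheta(\overline G)\le\chs(G)=n$. Multiplicativity of $\vartheta$ under strong products then gives $n^m=\vartheta(\overline{G^{\boxtimes m}})\le\chs(G^{\boxtimes m})$. In odd parity the paper uses a separate clique bound on $\{u,v\}^m$, which needs only one edge and no spectral input.

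You instead tensorize the Hoffman weighting $B=I-A/\tau$ directly. The matrix $B^{\otimes m}\succeq0$ has unit diagonal, vanishes off equal-or-adjacent tuples, and has $\one^{\otimes m}$ as an eigenvector with eigenvalue $(1-D/\tau)^m$. Hence $B^{\otimes m}\succeq(1-D/\tau)^m e_me_m^*$ with $e_m=N^{-m/2}\one^{\otimes m}$. The proof of Lemma~\ref{lem:source-hoffman} then runs verbatim using $\langle z_a,(B^{\otimes m}\otimes I)z_a\rangle=\sum_x\|R_{x,a}\|_{\HS}^2$, and yields $k\ge(1-D/\tau)^m$.

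This is the easy, supermultiplicative half of theta multiplicativity, proved in place. It makes the corollary self-contained given Theorem~\ref{thm:main}. It treats both parities uniformly, since $1-D/\tau$ equals $n$ or $2$. It also shows $\chs(G^{\boxtimes m})\ge(1-D/\tau)^m$ for every regular graph. The paper's route is shorter and also records $\vartheta(\overline G)=n$. Your observation that the plain Hoffman bound on $G^{\boxtimes m}$ gives only about $n$ is correct, and it is exactly why some tensorization is needed.

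Two small points.

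First, your variant lemma for an arbitrary admissible $B$ is true, but the constant-component step is not literally unchanged there. The inequality $B\succeq\langle e,Be\rangle ee^*$ holds only when $e$ is an eigenvector of $B$. In that generality you should instead apply Cauchy--Schwarz in the $B$-semi-inner product against $\one\otimes\rho$. This gives $\sum_a\langle z_a,(B\otimes I)z_a\rangle\ge\langle\one,B\one\rangle\,\|\rho\|_{\HS}^2/k$. For $B^{\otimes m}$ the issue does not arise.

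Second, you should state why a perfect protocol forces $R_{x,a}R_{y,a}=0$ on every edge of $G^{\boxtimes m}$. Two adjacent tuples lie in a common product of candidate pairs: complete each equal coordinate with an edge, which exists since $D_{q,n}>0$.
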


The upper bounds come from using an optimal single-instance code in
each position. The issue is whether a collective protocol can use fewer
messages. In even parity, the spectral lower bound survives under
strong products. In odd parity, fixing the same candidate pair in
every position produces \(2^m\) mutually confusable tuples; the clique
bound below shows why shared entanglement cannot compress their message
alphabet.

\begin{proof}
\emph{Even parity.}
Let \(\vartheta\) denote the Lov\'asz theta number and
\(\overline G\) the complement of \(G\). The spectral theta
bound~\cite[Sec.~2.1]{bdov} and the entanglement-assisted
source-coding bound~\cite[Theorem~2.1]{bblps}, also recorded in
Eq.~\eqref{eq:prior-spectral-coding}, give
\begin{equation}\label{eq:source-theta-tight}
 n=1-\frac D\tau
 \leq\vartheta(\overline G)\leq\chs(G)=n,
\end{equation}
where \(D=D_{q,n}\), \(\tau=-D/(n-1)\) by
Theorem~\ref{thm:main}, and the last equality is
Corollary~\ref{cor:source-coding}.
Thus \(\vartheta(\overline G)=n\). The multiplicativity identity
in~\cite[Eq.~(10)]{bblps} and its Theorem~2.1 imply
\begin{equation}
 n^m=\vartheta(\overline G)^m
 =\vartheta\!\left(\overline{G^{\boxtimes m}}\right)
 \leq\chs\!\left(G^{\boxtimes m}\right).
\end{equation}
For the reverse inequality, run the protocol of
Corollary~\ref{cor:source-coding} on each instance using a tensor
product of its shared states. Alice sends the tuple of its \(m\)
outcomes, which has \(n^m\) possible values, and Bob decodes each
word. This is the standard strong-product construction
following~\cite[Definition~1.5]{bblps}.

\emph{Odd parity.}
Let \(b:V(G)\to\{0,1\}\) be the bipartition label from
Lemma~\ref{lem:odd-bipartition}. Alice sends
\((b(x_1),\ldots,b(x_m))\). Each allowed pair has one word of
each label, so Bob identifies the correct word in every position.
This gives \(2^m\) messages using no entanglement. The label vector
is also a proper coloring of \(G^{\boxtimes m}\), since two adjacent
distinct tuples differ along an edge in at least one position.

For the converse, a clique of size \(t\geq1\) requires at least \(t\)
messages even with entanglement. To see this directly, let
\(R_{x,a}\) be Bob's conditional states in a perfect \(k\)-message
protocol, with \(\sum_a R_{x,a}=\rho\) and
\(\operatorname{tr}\rho=1\), as in
Lemma~\ref{lem:source-hoffman}. Restrict \(x\) to the clique and put
\(T_a=\sum_x R_{x,a}\) and \(c=\operatorname{tr}(\rho^2)>0\).
For fixed \(a\), the states indexed by distinct clique vertices have
orthogonal supports. Positivity and completeness give
\(\sum_{a,x}\|R_{x,a}\|_{\mathrm{HS}}^2\leq tc\).
Consequently, Cauchy--Schwarz and orthogonality imply
\begin{equation}\label{eq:block-clique-bound}
 t^2c=\left\|\sum_a T_a\right\|_{\mathrm{HS}}^2
 \leq k\sum_a\|T_a\|_{\mathrm{HS}}^2
 =k\sum_{a,x}\|R_{x,a}\|_{\mathrm{HS}}^2
 \leq ktc.
\end{equation}
Cancelling \(tc>0\) yields \(k\geq t\).
Since \(D_{q,n}>0\), choose an edge \(\{u,v\}\) of \(G\).
The \(2^m\) tuples in \(\{u,v\}^m\) form a clique of
\(G^{\boxtimes m}\): any two distinct tuples have equal or adjacent
entries in every position. The clique bound gives the required
\(k\geq2^m\), including for arbitrary collective protocols.

\emph{Operational interpretation and bits.}
The characteristic-graph formulation applies also to Bob's
\(2^m\)-element candidate sets: for a fixed message, pairwise
orthogonal conditional-state supports permit simultaneous perfect
discrimination~\cite[Lemma~1.2]{bblps}.
Encoding an alphabet of size \(r^m\) needs
\(\lceil\log_2(r^m)\rceil=\lceil m\log_2 r\rceil\) bits.
For \(r=n\), dividing by \(m\) and taking the limit gives
\(\log_2 n\); for \(r=2\), the cost is exactly \(m\) bits.
\end{proof}

The equality in Eq.~\eqref{eq:source-block-optimality} rules out a
reduction in the message alphabet through collective encoding in
either parity regime. In even parity, jointly packing the labels into
binary strings can still remove the rounding overhead from encoding
each label separately. In odd parity, the optimal label vector already
uses exactly one bit per instance.

Luo, Ning, and Zhang relate balanced graphs to graphs obtained by
deleting one coordinate~\cite[Theorem~6 and Corollary~7]{lnz}.
Combining their existing reduction with
Corollary~\ref{cor:source-coding} gives the following operational
corollary throughout our admissible parameter range.

\begin{corollary}[One missing occurrence]\label{cor:near-balanced-coding}
Let \(q\geq2\), \(\ell\geq1\), and \(n=q\ell-1\), with
\((q-1)\ell\) even. Alice receives a word \(x\in\Z_q^n\), and
Bob receives an unordered pair \(\{u,v\}\) of distinct words
containing \(x\). Suppose that, for some \(r\in\Z_q\) and
every \(a\in\Z_q\),
\begin{equation}
 \#\{j:v_j-u_j=a\}=\begin{cases}
   \ell-1,&a=r,\\
   \ell,&a\ne r.
 \end{cases}
\end{equation}
The deficient residue \(r\) may vary with the pair, which Alice
does not know. The minimum entanglement-assisted message alphabet
has size \(n+1=q\ell\), and the optimal fixed-length cost is
\(\lceil\log_2(n+1)\rceil\) classical bits. A single protocol must
succeed without error for every promised pair and either candidate;
arbitrary finite-dimensional shared states and local POVMs are allowed.
\end{corollary}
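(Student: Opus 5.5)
The plan is to prove matching upper and lower bounds on the message alphabet for the near-balanced task. Write \(G'\) for its characteristic graph. It is the Cayley graph on \(\Z_q^{n}\), \(n=q\ell-1\), whose connection set \(S'\) consists of the words in which one residue \(r\) occurs \(\ell-1\) times and every other residue occurs \(\ell\) times. The quantity to determine is \(\chs(G')\).

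For the upper bound, I will show that \(G'\) is an induced subgraph of \(\Om\) at length \(q\ell\). The map is \(\iota(x)=(x,0)\), appending a zero coordinate. If \(v-u\in S'\) with deficient residue \(r\), then \(\iota(v)-\iota(u)=(v-u,0)\) gains one zero. This is balanced only when \(r=0\), so the naive embedding fails. The fix is to append a coordinate that depends on the word. Since \(r\) is determined by \(\sum_j(v_j-u_j)\equiv\frac{\ell q(q-1)}2-r\equiv -r\pmod q\) (using that \((q-1)\ell\) is even), I will set \(\iota(x)=(x,-\sum_jx_j)\). Then the last coordinate of \(\iota(v)-\iota(u)\) equals \(-\sum_j(v_j-u_j)=r\), which restores exactly the missing residue. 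Hence \(\iota\) maps edges of \(G'\) to edges of \(\Om\), and it is injective. Alice can compute \(\iota(x)\) herself, because the map does not depend on Bob's pair. She then runs the \(n+1=q\ell\)-message protocol of Corollary~\ref{cor:source-coding} on \(\iota(x)\), and Bob uses the pair \(\{\iota(u),\iota(v)\}\). This gives \(\chs(G')\leq q\ell\). I expect this to be the content of the cited reduction~\cite[Theorem~6]{lnz}.

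For the lower bound, I will invoke the other direction of the Luo--Ning--Zhang reduction~\cite[Corollary~7]{lnz}. It relates \(\Om\) back to \(G'\) through spectral data: the least eigenvalue of \(G'\) is obtained from that of \(\Om\), so that \(1-D'/\tau'\) equals \(q\ell\). I will then apply Lemma~\ref{lem:source-hoffman} to \(G'\). Its hypotheses hold: \(G'\) is a Cayley graph, so it is regular with degree \(|S'|\), and \(S'=-S'\) because negation permutes residues. Theorem~\ref{thm:main} supplies the value of \(\lambda_{\min}(\Om)\), and Lemma~\ref{lem:source-hoffman} then yields \(k\geq q\ell\).

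The main obstacle is transporting the spectral bound. If the cited corollary only gives \(\vartheta(\overline{G'})=\vartheta(\overline{\Om})\), or a homomorphism in the opposite direction, I will use that form together with \(\vartheta(\overline{\Om})=n+1\). That equality was established in the proof of Corollary~\ref{cor:source-block-coding}, and I would combine it with the theta bound of \cite[Theorem~2.1]{bblps}. As a fallback I will compute \(\lambda_{\min}(G')\) directly. The plan is to exhibit a homomorphism \(\Om\to G'\), for instance deleting a coordinate after a suitable translation, which would give \(\chs(\Om)\leq\chs(G')\). The fixed-length bit cost then follows by taking \(\lceil\log_2\rceil\).
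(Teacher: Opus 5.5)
Your proposal is correct. Your upper bound is exactly the paper's: the checksum map \(x\mapsto(x,-\sum_jx_j)\) restores the deficient residue and is injective, so it transfers the \(q\ell\)-message protocol of Corollary~\ref{cor:source-coding}.

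\textbf{Fallback lower bound.} Your fallback is the paper's actual argument, and it should be your main line. Deleting the last coordinate maps every edge of \(\Omega_{q\ell}^{(\Z_q)}\) to a promised pair: the deleted residue becomes the deficient one. The shortened words stay distinct because a balanced difference has \((q-1)\ell\geq2\) nonzero entries. Hence \(\chs(\Omega_{q\ell}^{(\Z_q)})\leq\chs(G')\), and Corollary~\ref{cor:source-coding} finishes. No translation is needed. Note that this homomorphism gives the coding bound directly; it does not compute \(\lambda_{\min}(G')\), so your sentence merging the two ideas should be separated.

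\textbf{Primary spectral route.} This is a genuinely different argument. As written, it rests on a guess about what \cite[Corollary~7]{lnz} states, whereas the paper takes only the reduction maps from that source and re-verifies them. You can make the route self-contained in two lines. Truncation is a bijection \(S_{q,q\ell}\to S'\), since the deleted letter is recovered as the deficient residue. Hence \(D'=D_{q,q\ell}\), and the eigenvalue of \(G'\) at frequency \(a'\in\Z_q^{q\ell-1}\) is \(\lambda((a',0))\). Theorem~\ref{thm:main} then gives \(\lambda_{\min}(G')\geq-D_{q,q\ell}/(q\ell-1)\), with equality at \(a'=(1,-1,0,\ldots,0)\), which exists since \(q\ell-1\geq2\). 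Lemma~\ref{lem:source-hoffman} now yields \(k\geq q\ell\).

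\textbf{Comparison.} The spectral route gives more: the least eigenvalue of \(G'\) itself, and with it the theta and quantum-coloring bounds for \(G'\). The paper's route is purely combinatorial and needs no spectral input beyond Corollary~\ref{cor:source-coding}. In fact, the checksum map identifies \(G'\) with each coordinate-sum class of \(\Omega_{q\ell}^{(\Z_q)}\), so the latter is a disjoint union of \(q\) copies of \(G'\). This explains why both directions are immediate.
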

\begin{proof}
Let \(H\) be the graph of the promised pairs and let
\(G=\Omega_{n+1}^{(\Z_q)}\) be the balanced graph. An
edge-preserving vertex map \(f:H\to G\) transfers any protocol
for \(G\) to \(H\): Alice encodes \(f(x)\), and Bob uses the
pair \(\{f(u),f(v)\}\). The two image words are distinct, so recovering
the image identifies the original candidate. Hence
\(\chs(H)\leq\chs(G)\). We recall the maps in both directions
from~\cite[Theorem~6]{lnz}.

For the map from \(H\) to \(G\), append a checksum coordinate:
\begin{equation}
 E(x)=\left(x_1,\ldots,x_n,-\sum_{j=1}^n x_j\right)\in\Z_q^{n+1}.
\end{equation}
If \(d=v-u\) has deficient residue \(r\), then admissibility gives
\begin{equation}
 \sum_{j=1}^n d_j
 =\ell\sum_{a=0}^{q-1}a-r
 =-r\pmod q.
\end{equation}
Thus the appended difference is \(r\), restoring the missing
occurrence and making \(E(v)-E(u)\) balanced.

Conversely, deleting the last coordinate sends every balanced pair
in \(G\) to a promised pair in \(H\): the removed residue occurs
\(\ell-1\) times, and all others occur \(\ell\) times. The two
shortened words remain distinct, since a balanced difference has
\((q-1)\ell\geq2\) nonzero coordinates. This gives
\(\chs(G)\leq\chs(H)\). Corollary~\ref{cor:source-coding},
applied at length \(n+1=q\ell\), now yields
\begin{equation}
 \chs(H)=\chs(G)=n+1.
\end{equation}
\end{proof}

\section{Quantum coloring}\label{sec:quantum}
The spectral theorem and the explicit odd-parity bipartition
together determine the quantum chromatic number throughout the
full parameter range.
In the graph-coloring game, two separated players receive vertices
of a graph and return colors. They must return the same color on
equal vertices and different colors on adjacent vertices. A perfect
classical strategy is a proper coloring; shared entanglement can
reduce the number of colors needed~\cite{cmnsw}.
Figure~\ref{fig:coloring-game} summarizes the game and its quantum resources.

\begin{figure}[htbp]
\centering
\includegraphics[width=\linewidth]{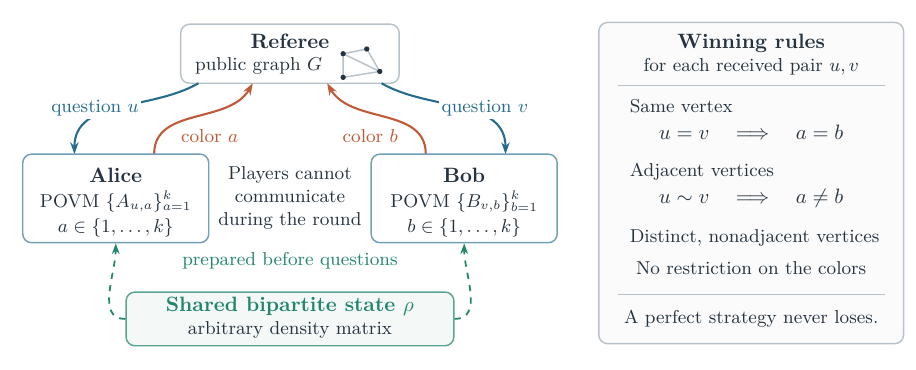}
\caption{The graph-coloring game for a public graph \(G\).
The referee sends vertices \(u,v\) to Alice and Bob, who reply with
colors \(a,b\in\{1,\ldots,k\}\). The players share a density matrix
\(\rho\) prepared before the round and use question-dependent local
POVMs, without communication during the round. Equal vertices require
equal colors; adjacent vertices require different colors; distinct
nonadjacent vertices impose no condition. A perfect strategy satisfies
these rules with probability one for every question pair.}
\label{fig:coloring-game}
\end{figure}

We use the usual finite-dimensional quantum chromatic number
\(\chq(G)\). Explicitly, a quantum \(k\)-coloring consists of a
bipartite density matrix \(\rho\) and local POVMs
\(\{A_{u,a}\}_{a=1}^k\), \(\{B_{v,b}\}_{b=1}^k\), in arbitrary
nonzero finite dimensions (cf.~\cite{zkfb,zksp}), for which the Born probabilities
\begin{equation}
 p(a,b\mid u,v)=\tr\bigl(\rho(A_{u,a}\otimes B_{v,b})\bigr)
\end{equation}
vanish when \(u=v\) and \(a\ne b\), or when \(u\sim v\) and
\(a=b\). The minimum possible \(k\) is \(\chq(G)\).
Appendix~\ref{app:normal} proves the equivalent projector formulation
used below, without purity or projectivity assumptions on the
original strategy.

\begin{theorem}[Quantum chromatic number]\label{thm:quantum-coloring}
Let \(q\geq2\), \(\ell\geq1\), and \(n=q\ell\).
Then
\begin{equation}
 \chq(\Om)=
 \begin{cases}
  n,&(q-1)\ell\text{ even},\\
  2,&(q-1)\ell\text{ odd}.
 \end{cases}
\end{equation}
In the odd-parity case, \(\chi(\Om)=2\) as well, and the
two-color strategy requires no entanglement.
\end{theorem}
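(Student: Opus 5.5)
The plan is to prove matching upper and lower bounds in each parity regime. The upper bounds come from explicit strategies. The lower bounds come from a quantum Hoffman-type spectral bound combined with Theorem~\ref{thm:main}.

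\emph{Upper bounds.} In odd parity, Lemma~\ref{lem:odd-bipartition} supplies a proper $2$-coloring $b$. Both players answer $b$ of their vertex. This deterministic classical strategy needs no shared state and wins with certainty, so $\chi(\Om)\le2$ and hence $\chq(\Om)\le2$. Since $\Om$ has an edge ($D_{q,n}>0$), we get $\chi(\Om)\ge2$. Thus $\chi(\Om)=2$.

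In even parity, I would reuse the flat-Fourier projectors $P_{x,a}$ from the proof of Corollary~\ref{cor:source-coding}. These are rank-one projectors that sum to the identity over $a\in[n]$ and satisfy $P_{u,a}P_{v,a}=0$ for every edge $u\sim v$. The standard construction of~\cite{cmnsw} then gives a quantum $n$-coloring. Share $|\Phi_n\rangle$; Alice measures $\{P_{u,a}\}$ and Bob measures $\{P_{v,b}^{\mathsf T}\}$. Using~\eqref{eq:source-maximally-entangled-transpose}, one finds $p(a,b\mid u,v)=\tr(P_{u,a}P_{v,b})/n$. This vanishes when $u=v$ and $a\ne b$ (distinct columns of a unitary are orthogonal), and it vanishes when $u\sim v$ and $a=b$ (by the edge relation). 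Hence $\chq\le n$.

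\emph{Lower bounds.} In odd parity, any edge is a $2$-clique. A quantum coloring of a clique of size $t$ needs $t$ colors, either by the standard clique argument or because a perfect quantum coloring yields a perfect source code and the clique bound~\eqref{eq:block-clique-bound} applies. This gives $\chq\ge2$.

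In even parity, I would invoke the projector normal form from Appendix~\ref{app:normal}. It says a quantum $k$-coloring can be taken as projectors $\{E_{x,a}\}_{a=1}^k$ on $\C^d$ with $\sum_aE_{x,a}=I$ and $E_{u,a}E_{v,a}=0$ for $u\sim v$. This is exactly the input of the Elphick--Wocjan quantum Hoffman bound~\cite{ew}. I would reprove it directly, mirroring the proof of Lemma~\ref{lem:source-hoffman} with the conditional states $R_{x,a}$ replaced by $E_{x,a}/d$ and common marginal $\rho=I/d$. The edge-orthogonality gives $\langle z_a,(A\otimes I)z_a\rangle=0$. Lemma~\ref{lem:regular-spectral-inequality}\,\ref{item:regular-spectral-bound} together with Cauchy--Schwarz on $\sum_aT_a=N\rho$ then yields $k\ge1-D/\tau$. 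An alternative route avoids recomputation: a quantum $k$-coloring gives an entanglement-assisted source code with $k$ messages (Alice measures and sends her color, and Bob measures on each candidate). Hence $\chs\le\chq$, and Lemma~\ref{lem:source-hoffman} applies directly. With $\tau=-D/(n-1)$ from Theorem~\ref{thm:main}, this gives $k\ge n$.

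\emph{Main obstacle.} The delicate step is the reduction from general (mixed-state, POVM) strategies to the projector form, or equivalently the claim that a quantum coloring induces a valid source code. Both require showing that zero Born probabilities force the needed operator orthogonality on the support of the shared state. I would rely on Appendix~\ref{app:normal} for this. Concretely, I would restrict to the support of Alice's reduced state and use $\tr(ER)=0\Rightarrow ER=0$ for positive operators, as in the proof of Lemma~\ref{lem:source-hoffman}. The remaining steps are routine consequences of results already established.
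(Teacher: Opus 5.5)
Your proposal is correct and matches the paper's proof. In odd parity, the label $b$ of Lemma~\ref{lem:odd-bipartition} gives an entanglement-free two-coloring, and any edge rules out one color; in even parity, the flat-Fourier projectors give $\chq\le n$, and the regular-graph quantum Hoffman bound in projector normal form, with $\tau=-D_{q,n}/(n-1)$ from Theorem~\ref{thm:main}, gives $\chq\ge n$. Your alternative route via $\chs\le\chq$ is also sound (Bob declares $u$ exactly when $B_{u,\cdot}$ returns Alice's color), and it even bypasses the projector normal form for the lower bound, since Lemma~\ref{lem:source-hoffman} already handles mixed states and POVMs.
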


\begin{remark}[Odd-parity coloring in the literature]\label{rem:odd-coloring-literature}
For \(q=2\) and \(n\equiv2\pmod4\), Godsil and
Newman~\cite[Sec.~2]{gn} describe the bipartition by Hamming-weight
parity, and Luo, Ning, and Zhang~\cite[Sec.~2]{lnz} explicitly
record \(\chq=\chi=2\). For general graphs, Cameron
et al.~\cite[Proposition~3]{cmnsw} prove
\(\chq(G)=2\) if and only if \(\chi(G)=2\).
For \(q=4\), the odd-parity spectral sign symmetry follows from
Ning, Koolen, and Zhang~\cite[Theorem~4.3 and Eq.~(45)]{nkz}.
\mbox{Cao~et~al.~\cite[Lemma~4.3]{cao}} give the corresponding symmetry
for the full cyclic family, without a length threshold.
These symmetries preserve multiplicities and imply
bipartiteness~\cite{bh}, hence the two-color value; the latter two
papers do not explicitly state these odd-parity coloring conclusions.
Lemma~\ref{lem:odd-bipartition} gives a concrete partition throughout
the odd-parity regime and proves that every edge crosses it. This
separates every allowed candidate pair, yielding the optimal one-bit
protocol of Lemma~\ref{lem:odd-source-coding}.
\end{remark}

For the even-parity lower bound, we use the standard projector normal
form~\cite[Proposition~1 and Eq.~(4)]{cmnsw}, proved in
Appendix~\ref{app:normal}. A quantum \(k\)-coloring therefore gives
Hermitian projections \(P_{u,a}\in M_h(\C)\), with \(h\geq1\), satisfying
\begin{equation}
 \sum_{a=1}^kP_{u,a}=I_h,\qquad
 P_{u,a}P_{v,a}=0\quad(u\sim v).
\end{equation}

The following lemma is the regular-graph specialization of the
quantum Hoffman bound~\cite[Secs.~5--6]{ew}. We include a direct proof.

\begin{lemma}[Regular-graph quantum Hoffman bound]\label{lem:hoffman}
If a finite \(D\)-regular graph has least adjacency eigenvalue
\(\tau<0\), then \(\chq(G)\geq1-D/\tau\).
\end{lemma}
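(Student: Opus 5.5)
The plan is to reuse the Hilbert--Schmidt argument from the proof of Lemma~\ref{lem:source-hoffman}, with Bob's conditional states replaced by the projections of the normal form. By Appendix~\ref{app:normal}, a quantum \(k\)-coloring of \(G\) yields Hermitian projections \(P_{u,a}\in M_h(\C)\), \(h\geq1\), with \(\sum_aP_{u,a}=I_h\) and \(P_{u,a}P_{v,a}=0\) whenever \(u\sim v\). Let \(N=|V(G)|\). For each color \(a\), set
\[
 z_a=\sum_u e_u\otimes P_{u,a}\in\C^{V(G)}\otimes M_h(\C)
 \quad\text{and}\quad
 T_a=\sum_uP_{u,a}.
\]

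\emph{Step 1: the quadratic forms vanish.} Equip the matrix factor with the Hilbert--Schmidt inner product. Then
\[
 \langle z_a,(A_G\otimes I)z_a\rangle=\sum_{u,v}(A_G)_{uv}\tr(P_{u,a}P_{v,a}).
\]
Each nonzero term has \(u\sim v\), so \(P_{u,a}P_{v,a}=0\) and the whole sum is zero.

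\emph{Step 2: apply the operator inequality.} Tensor \eqref{eq:regular-spectral-operator} of Lemma~\ref{lem:regular-spectral-inequality} with the identity and evaluate at each \(z_a\). As in \eqref{eq:source-constant-component}, the component along \(e=N^{-1/2}\one\) is \(T_a/\sqrt N\). Summing over \(a\) gives
\[
 0\geq\tau\sum_a\|z_a\|^2+\frac{D-\tau}{N}\sum_a\|T_a\|_{\HS}^2 .
\]

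\emph{Step 3: evaluate the norms exactly.} The projector structure makes both quantities computable. First,
\[
 \sum_a\|z_a\|^2=\sum_{u,a}\tr P_{u,a}=\sum_u\tr I_h=Nh.
\]
Second, \(\sum_aT_a=NI_h\). Cauchy--Schwarz, as in \eqref{eq:source-message-cauchy-schwarz}, then gives
\[
 \sum_a\|T_a\|_{\HS}^2\geq\frac{\|NI_h\|_{\HS}^2}{k}=\frac{N^2h}{k}.
\]

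\emph{Step 4: combine.} Since \(\tau<0\) and \(D-\tau>0\), substituting into Step~2 gives
\[
 0\geq\tau Nh+\frac{(D-\tau)Nh}{k}.
\]
Cancel \(Nh>0\) and multiply by \(k\) to get \((-\tau)k\geq D-\tau\), that is, \(k\geq1-D/\tau\). This holds for every quantum \(k\)-coloring, so \(\chq(G)\geq1-D/\tau\).

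The only substantive input is the projector normal form, which the paper supplies in Appendix~\ref{app:normal}. The main point to check is that Steps~1--3 use nothing beyond it: orthogonality on edges, completeness, and \(\tr P=\tr P^2\), the last of which is what makes \(\|z_a\|^2\) equal to a trace. Everything else repeats the source-coding proof with \(\rho\) replaced by \(I_h/h\).
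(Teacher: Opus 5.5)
Your proposal is correct and follows the paper's proof essentially step for step. It uses the same projector normal form, the same vectors $z_a$ and sums $T_a$, the operator inequality $A_G\succeq\tau I+(D-\tau)ee^*$, the exact evaluation $\sum_a\|z_a\|^2=Nh$ via $\|P_{u,a}\|_{\HS}^2=\tr P_{u,a}$, and Cauchy--Schwarz on $\sum_aT_a=NI_h$. The only implicit point is $k\geq1$, which follows from $\sum_aP_{u,a}=I_h$ with $h\geq1$.
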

\begin{proof}
Let \(N=|V(G)|\), let \(A_G\) be its adjacency matrix, and put
\(e=N^{-1/2}\one\). Let \(e_u\) denote the standard coordinate basis
of \(\C^{V(G)}\).
In \(\C^{V(G)}\otimes M_h(\C)\), using the Hilbert--Schmidt inner
product on the matrix factor, put
\begin{equation}
 z_a=\sum_u e_u\otimes P_{u,a},\qquad T_a=\sum_uP_{u,a}.
\end{equation}
As in~\eqref{eq:source-edge-quadratic}, the coloring relations give
\(\langle z_a,(A_G\otimes I)z_a\rangle=0\).
Lemma~\ref{lem:regular-spectral-inequality}\,\ref{item:regular-spectral-bound}
gives
\(A_G\succeq\tau I+(D-\tau)|e\rangle\langle e|\).
The constant-component calculation in
\eqref{eq:source-constant-component}, with \(R_{u,a}\) replaced by
\(P_{u,a}\), gives \((\langle e|\otimes I)z_a=T_a/\sqrt N\).
Tensoring the spectral inequality with the identity on the matrix
space and applying it to \(z_a\) therefore yields
\begin{equation}
 0\geq\tau\|z_a\|^2+\frac{D-\tau}{N}\|T_a\|_{\rm HS}^2.
\end{equation}
Since \(\|P_{u,a}\|_{\HS}^2=\tr P_{u,a}\), completeness gives
\(\sum_a\|z_a\|^2=Nh\). Also \(\sum_aT_a=NI_h\), and
Cauchy--Schwarz gives \(\sum_a\|T_a\|_{\rm HS}^2\geq N^2h/k\).
Summing the preceding inequalities yields
\begin{equation}
 0\geq\tau Nh+(D-\tau)\frac{Nh}{k},
\end{equation}
which rearranges to \(k\geq1-D/\tau\).
\end{proof}

\begin{proof}[Proof of Theorem~\ref{thm:quantum-coloring}]
Suppose first that \((q-1)\ell\) is odd. By
Lemma~\ref{lem:odd-bipartition}, the class label \(b\) satisfies
\(b(v)=1-b(u)\) on every edge.
It therefore defines a proper two-coloring. Both players can return
\(b(x)+1\) on receiving vertex \(x\): equal vertices give equal
colors, and adjacent vertices give different colors. This is a
perfect strategy without entanglement. Since \(D_{q,n}>0\), the
graph has an edge, so one color is impossible even for a quantum
strategy. Thus \(\chq(\Om)=\chi(\Om)=2\).

Suppose now that \((q-1)\ell\) is even.
Lemma~\ref{lem:adjacency-structure} gives \(A\one=D\one\), so
\(\Om\) is \(D\)-regular. Theorem~\ref{thm:main} and
Lemma~\ref{lem:hoffman} therefore give
\(\chq(\Om)\geq1-D/\tau=n\).
For the matching upper bound, use the Fourier projections
\(P_{x,a}\) constructed in the proof of
Corollary~\ref{cor:source-coding}. They sum to the identity at each
vertex and satisfy \(P_{u,a}P_{v,a}=0\) on every edge. The projector
normal form therefore gives a quantum \(n\)-coloring.
\end{proof}

\section{Discussion and open problems}\label{sec:discussion}
In the even-parity regime, the uniform coefficient bound determines
the least adjacency eigenvalue at every admissible length. It proves
optimality of the existing entanglement-assisted \(n\)-message source
code and establishes that the quantum chromatic number is \(n\).
The converse therefore places a sharp limit on the benefit of
entanglement: even under the balanced-difference promise, perfect
recovery requires at least \(n\) possible classical messages, or
\(\lceil\log_2 n\rceil\) fixed-length bits. This minimum holds for
arbitrary finite-dimensional shared states and local POVMs, so
neither a larger entangled resource nor more general measurements
can reduce it.

In the odd-parity regime, our explicit bipartition yields an optimal
deterministic one-bit source code and classical and quantum chromatic
numbers equal to \(2\). Thus entanglement provides no advantage in
either task at any odd-parity length: the optimal message and color
counts are already attained without it.

The following questions concern the even-parity regime, focusing on
the quantum resources needed to attain optimality and on protocols
that allow errors.

\paragraph{Entanglement and local dimension.}
The Fourier constructions in Sections~\ref{sec:source-coding}
and~\ref{sec:quantum} use a maximally entangled state with local
dimension \(n\). What is the smallest \(d\) for which an optimal
strategy exists with both local dimensions at most \(d\)? This
question should be considered separately for source codes with
\(n\) messages and quantum colorings with \(n\) colors, since the
two tasks allow different strategies. A related question is how
much entanglement is needed. For pure shared states, one precise
formulation asks for the infimum of the entanglement entropy,
the entropy of either reduced state, among strategies attaining
the corresponding optimum.

\paragraph{Allowing a small error.}
Our results require perfect success on every allowed input.
For \(0<\varepsilon<1/2\), one can instead require failure
probability at most \(\varepsilon\) for every candidate pair and
either choice of Alice's word. The analogous coloring problem
allows failure probability at most \(\varepsilon\) on every equal
or adjacent question pair. What are the optimal message and color
counts as functions of \(q,n,\varepsilon\)? In particular, how do
they behave when \(\varepsilon\) decreases with \(n\)?
The zero-error converse does not by itself determine these
tradeoffs; quantitative bounds would clarify how the exact
optimality results change when errors are permitted.

\paragraph{Formal verification.}
Every lemma, theorem, and corollary has been verified in Lean~4.19.0
with Mathlib. The Lean sources are available in~\cite{zeiss-lean}.

\section*{Author contributions and use of AI}
OpenAI Codex assisted with literature searches, mathematical exploration
and proof refinement, Lean implementation and debugging, and manuscript
preparation. The formalized results were checked using Lean and Mathlib.
The author is responsible for the content of this article.

\section*{Acknowledgments}
The author thanks Tobias Rippchen, Steven Kim, Cormac Stopes, and
Robert Salzmann for discussions on quantum graph coloring games,
and the Lean and Mathlib communities for the proof
assistant and mathematical library used in this work.
JZ acknowledges support from the European Research Council
(ERC Grant Agreement No.~948139) and the Excellence Cluster --
Matter and Light for Quantum Computing (ML4Q-2).

\appendix
\clearpage
\phantomsection
\addcontentsline{toc}{section}{Appendix}
\addtocontents{toc}{\protect\setcounter{tocdepth}{-1}}
\section{Notation}\label{app:notation}
Table~\ref{tab:notation} groups the principal notation by context,
following the structure of the paper. Symbols reused in different contexts are
listed separately; temporary variables are defined locally.
Unless stated otherwise, \(n=q\ell\). In
Corollary~\ref{cor:near-balanced-coding}, the shortened word length
is instead \(n=q\ell-1\).

\begingroup
\footnotesize
\renewcommand{\arraystretch}{1.00}
\setlength{\tabcolsep}{5pt}
\setlength{\LTpre}{6pt}
\begin{longtable}{@{}>{\raggedright\arraybackslash}p{.255\linewidth}>{\raggedright\arraybackslash}p{.70\linewidth}@{}}
\caption{Principal notation.}\label{tab:notation}\\
\toprule
Symbol & Meaning \\
\midrule
\endfirsthead
\multicolumn{2}{l}{\small Table~\thetable\ (continued)}\\
\toprule Symbol & Meaning \\
\midrule
\endhead
\midrule
\multicolumn{2}{r@{}}{\small Continued on the next page}\\
\endfoot
\bottomrule
\endlastfoot
\multicolumn{2}{@{}l}{\textbf{Graph and parameters (Section~\ref{sec:graph})}}\\*
\(q,\ell,n=q\ell\) & Alphabet size, residue multiplicity, and balanced word length; \(q\geq2,\ell\geq1\).\\
\(\Z_q\), \(\Om\) & Cyclic group \(\Z/q\Z\); graph with vertex set \(\Z_q^n\).\\
\(S_{q,n}\), \(u\sim v\) & Differences in which every residue occurs \(\ell\) times; adjacency means \(v-u\in S_{q,n}\).\\
\(D=D_{q,n}\), \(N\) & Degree and vertex count of a regular graph; for \(\Om\), \(D=n!/(\ell!)^q\) and \(N=q^n\).\\
\(\chi(G)\) & Classical chromatic number.\\
\(K_{m_1,\ldots,m_t}\) & Complete multipartite graph with independent parts of sizes \(m_1,\ldots,m_t\).\\
\(\sigma(x)\), \(b(x)\) & Coordinate sum modulo \(q\), represented in \(\{0,\ldots,q-1\}\); odd-parity bipartition label from~\eqref{eq:odd-bipartition}.\\[3pt]
\multicolumn{2}{@{}l}{\textbf{Adjacency and Fourier analysis (Sections~\ref{sec:graph}, \ref{sec:fourier}, and~\ref{sec:spectral})}}\\*
\(\mathfrak S_n\), \(\pi u\) & Symmetric group on \(\{1,\ldots,n\}\); coordinate action \((\pi u)_j=u_{\pi^{-1}(j)}\).\\
\(V,E,f\) & Vertex set, unordered-edge set, and a vector of values assigned to vertices.\\
\(\zeta_q\), \(\chi_a(x)\) & Root \(e^{2\pi i/q}\); character \(\zeta_q^{a\cdot x}\), with frequency \(a\in\Z_q^n\).\\
\(A,A_G\), \(\lambda(a)\), \(\tau\) & Adjacency operator; eigenvalue at frequency \(a\); least adjacency eigenvalue.\\
\(\succeq\) & Positive semidefinite matrix order.\\
\(\rho(a)\) & Normalized adjacency eigenvalue \(\lambda(a)/D_{q,n}\); the density matrices \(\rho\) are listed below.\\
\(r=\operatorname{comp}(a)\) & Composition \((r_0,\ldots,r_{q-1})\); \(r_c\) counts occurrences of label \(c\), so \(r_c\geq0\) and \(\sum_c r_c=n\).\\
\(X\), \(W_j\), \(\E\) & Uniform random balanced word; phase \(\zeta_q^{X_j}\); expectation. The switching position set \(X\) is listed below.\\
\(c=\E W_1\overline W_2\) & Common expectation \(\E W_i\overline W_j\) for \(i\ne j\) in Section~\ref{sec:spectral}; it equals \(-1/(n-1)\). Distinct from the residue label and coding purity denoted by \(c\) below.\\[3pt]
\multicolumn{2}{@{}l}{\textbf{Symmetric Schur--Weyl sector (Section~\ref{sec:fourier-counts})}}\\*
\(V=\C^q\), \(\Pi_{\mathrm{sym}}\) & Single-particle space; orthogonal projector onto \((V^{\otimes n})^{\mathfrak S_n}\).\\
\(U_\pi\), \(\nu\vdash n\) & Tensor-factor permutation; partition of \(n\), with at most \(q\) parts in the Schur--Weyl decomposition.\\
\(\mathbb S_\nu(V)\), \([\nu]\) & Corresponding irreducible representations of \(\mathrm{GL}(V)\) and \(\mathfrak S_n\).\\
\(\operatorname{Sym}^n(V)\) & Symmetric sector \((V^{\otimes n})^{\mathfrak S_n}\), identified with homogeneous degree-\(n\) polynomials.\\
\(\operatorname{Sym}^n(g)\) & Restriction of \(g^{\otimes n}\) to the symmetric sector, also realized by substitution on degree-\(n\) polynomials.\\
\(M_r\), \(|r\rangle_{\mathrm{sym}}\) & Size \(\binom n{r_0,\ldots,r_{q-1}}\) of a composition orbit; its normalized orbit sum.\\
\(\mathbf b\), \(\mathcal F_q\) & Balanced composition \((\ell,\ldots,\ell)\), with \(M_{\mathbf b}=D_{q,n}\); unnormalized Fourier matrix \((\zeta_q^{ct})_{c,t\in\Z_q}\).\\
\(z_c\), \([z^r]\) & Commuting indeterminates; coefficient of \(z^r=\prod_c z_c^{r_c}\).\\
\(C_q(z)\), \(H_{q,\ell}(z)\) & Circulant \((z_{j-i})_{i,j\in\Z_q}\); permanent power \((\per C_q(z))^\ell\).\\
\(P(z)\) & Fourier image \(\operatorname{Sym}^n(\mathcal F_q)z^{\mathbf b}=(\det C_q(z))^\ell\).\\
\(s\), \(p_s\) & Composition index and coefficient \(p_s=[z^s]P(z)\) in \(P(z)=\sum_s p_s z^s\); in particular, \(p_r=[z^r]P(z)\). Here \(M_s\) and \(|s\rangle_{\mathrm{sym}}\) use the same definitions as \(M_r\) and \(|r\rangle_{\mathrm{sym}}\).\\
\(\per\), \(\binom n{r_0,\ldots,r_{q-1}}\) & Permanent; multinomial coefficient \(n!/\prod_c r_c!\).\\[3pt]
\multicolumn{2}{@{}l}{\textbf{Assignments and switching (Section~\ref{sec:switching})}}\\*
\(X=\{1,\ldots,\ell\}\times\Z_q\) & Set of \(n\) positions; \((b,i)\) is site \(i\) in block \(b\).\\
\(w:X\to\Z_q\), \(c\) & Assignment of displacement labels; a specified residue label.\\
\(W_r,E_r,I_r\) & All assignments of composition \(r\), the valid assignments, and the invalid assignments \(I_r=W_r\setminus E_r\).\\
\(F_w(b,i)\), \(T_c(b,i)\) & Target map \((b,i+w(b,i))\); fixed shift \((b,i+c)\). Validity means \(F_w\) is a permutation.\\
\(p\), \(\tau_p\), \(w^p\) & Unordered pair of distinct positions, its transposition, and the swapped assignment \(w\circ\tau_p\).\\
\(\mathcal P(w)\), \(d\), \(R\) & Permitted swap pairs, their number per valid assignment, and an upper bound on permitted repairs per invalid assignment.\\
\(r_c(n-r_c)\) & Number of swaps exchanging label \(c\) with a different label; when all letters occur once, allowing every swap gives \(d=\binom n2\).\\
\(V\), \(I\) & Numbers \(|E_r|\) and \(|I_r|\) of valid and invalid assignments in the counting explanations.\\[3pt]
\multicolumn{2}{@{}l}{\textbf{Source coding (Section~\ref{sec:source-coding}, Appendix~\ref{app:prior-regimes})}}\\*
\(\chs(G)\), \(L=\{u,v\}\) & Minimum entanglement-assisted message-alphabet size; Bob's adjacent candidate pair, containing Alice's input \(x\).\\
\(k\), \(a\in[k]\) & Number of possible messages; message sent by Alice, with \([k]=\{1,\ldots,k\}\).\\
\(m\), \(G^{\boxtimes m}\) & Number of jointly encoded source instances; the \(m\)-fold strong power of \(G\).\\
\(\overline G\), \(\vartheta(G)\) & Complement graph; Lov\'asz theta number.\\
\(\sigma\), \(\rho\), \(R_{x,a}\) & Shared density matrix, Bob's reduced density matrix, and his unnormalized assemblage element for input \(x\) and message \(a\).\\
\(\operatorname{Tr}_A\), \(F\), \(F_{L,a}\) & Partial trace over Alice's register; Bob's effect for declaring the first candidate, with complement \(I-F\), written \(F_{L,a}\) when the pair and message are explicit.\\
\(c=\|\rho\|_{\HS}^2\) & Purity of Bob's reduced state, used in the coding converse.\\
\(z_a\), \(T_a\) & Vector \(\sum_x e_x\otimes R_{x,a}\) and sum \(\sum_xR_{x,a}\) used in the coding converse; their projector counterparts are listed below.\\
\(|\Phi_n\rangle\), \(P_{x,a}\) & Shared vector \(n^{-1/2}\sum_i|i\rangle\otimes|i\rangle\) in the Fourier protocol; rank-one projection onto column \(a\) of \(U_x\).\\
\(p(a\mid x)\), \(P_{x,a}^{\mathsf T}\) & Message probability \(\tr R_{x,a}\); normalized conditional state in the Fourier protocol, where \(p(a\mid x)=1/n\) and \(R_{x,a}=P_{x,a}^{\mathsf T}/n\).\\
\(M\) & Arbitrary \(n\times n\) matrix in~\eqref{eq:source-maximally-entangled-transpose}; its transpose is taken in the basis defining \(|\Phi_n\rangle\).\\
\(p_\lambda\), \(\sigma_\lambda\), \(p(a\mid x,\lambda)\) & Hidden-variable distribution, input-independent hidden state, and message probabilities in the local-hidden-state model~\eqref{eq:source-lhs}.\\[3pt]
\multicolumn{2}{@{}l}{\textbf{Quantum strategies (Section~\ref{sec:quantum}, Appendix~\ref{app:normal})}}\\*
\(\chq(G)\) & Finite-dimensional quantum chromatic number.\\
\(k\), \(a,b\) & Number of output colors; Alice's and Bob's outputs in \(\{1,\ldots,k\}\).\\
\(\rho\), \(A_{u,a},B_{v,b}\) & Bipartite density matrix; local POVM effects for questions \(u,v\) and outputs \(a,b\).\\
\(p(a,b\mid u,v)\) & Born probability \(\tr[\rho(A_{u,a}\otimes B_{v,b})]\).\\
\(P_{u,a}\), \(h\), \(I_h\) & Coloring projection; positive local dimension; identity on \(\C^h\).\\
\(F_n\), \(U_v\) & Normalized Fourier matrix of order \(n\); \(\operatorname{diag}(\zeta_q^{v_1},\ldots,\zeta_q^{v_n})F_n\).\\
\(e_u\), \(e\), \(\one\) & Vertex basis vector; unit constant vector \(N^{-1/2}\one\); all-ones vector.\\
\(z_a\), \(T_a\) & Hoffman vector \(\sum_u e_u\otimes P_{u,a}\); projector sum \(\sum_uP_{u,a}\).\\
\(\psi\), \(\Psi\), \(H_0\) & Common nonzero column of \(\rho\), its coefficient matrix, and the invariant space \(\operatorname{ran}\Psi\).\\
\(|\Phi_h\rangle\) & Normalized maximally entangled vector \(h^{-1/2}\sum_i|i\rangle\otimes|i\rangle\).\\
\((\cdot)^*,(\cdot)^{\mathsf T}\), \(\|\cdot\|_{\HS}\) & Adjoint, transpose; \(\|M\|_{\HS}^2=\tr(M^*M)\), with the ordinary trace.\\
\end{longtable}
\endgroup
\clearpage

\section{From arbitrary quantum strategies to projectors}\label{app:normal}
We prove the standard finite-dimensional projector normal
form~\cite[Proposition~1 and Eq.~(4)]{cmnsw} used in
Section~\ref{sec:quantum}. The equivalence applies to the arbitrary
mixed states and local POVMs allowed in
Theorem~\ref{thm:quantum-coloring}.

\begin{lemma}[Projector normal form]\label{lem:normal}
A finite graph has a perfect quantum \(k\)-coloring with an arbitrary
finite-dimensional mixed state and local POVMs if and only if there
are Hermitian projections \(P_{u,a}\in M_h(\C)\), with \(h\geq1\),
such that
\begin{equation}\label{eq:projectors}
 \sum_{a=1}^kP_{u,a}=I_h,
 \qquad P_{u,a}P_{v,a}=0\quad(u\sim v).
\end{equation}
\end{lemma}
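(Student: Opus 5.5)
The plan is to prove the two implications separately. The reverse direction, from projections to a strategy, is a direct construction. The forward direction, from a strategy to projections, proceeds by restricting Alice's POVM effects to a subspace determined by one pure component of the shared state. Throughout, write \(\psi\in\C^{d_A}\otimes\C^{d_B}\) as \(\psi=\sum_{i,j}\Psi_{ij}|i\rangle|j\rangle\). Under this identification, \((X\otimes Y)\psi\) corresponds to the coefficient matrix \(X\Psi Y^{\mathsf T}\).

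\emph{Projections give a coloring.} Given projections satisfying \eqref{eq:projectors}, share \(|\Phi_h\rangle\). Alice measures \(\{P_{u,a}\}_a\) and Bob measures \(\{P_{v,b}^{\mathsf T}\}_b\); transposes of projections summing to \(I_h\) are again projections summing to \(I_h\). The transpose identity \eqref{eq:source-maximally-entangled-transpose} gives
\[
p(a,b\mid u,v)=\tfrac1h\tr(P_{u,a}P_{v,b}).
\]
For adjacent \(u\sim v\) with \(a=b\), this vanishes because \(P_{u,a}P_{v,a}=0\). For \(u=v\) and \(a\ne b\), it vanishes because projections summing to the identity are mutually orthogonal. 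To see this, compress \(\sum_cP_{u,c}=I\) by \(P_{u,a}\): this gives \(\sum_{c\ne a}P_{u,a}P_{u,c}P_{u,a}=0\), a sum of positive terms. Each term is therefore zero, and hence \(P_{u,c}P_{u,a}=0\).

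\emph{A coloring gives projections: choosing \(\psi\).} Start from a perfect strategy \((\rho,A_{u,a},B_{v,b})\). Choose any nonzero vector \(\psi\) in the range of \(\rho\), for instance a nonzero column, and write \(\rho=\sum_i\mu_i\psi_i\psi_i^*\) with \(\mu_i>0\) and \(\psi\) among the \(\psi_i\). Each forbidden probability is a nonnegative combination of the terms \(\langle\psi_i,(A\otimes B)\psi_i\rangle\ge0\), so each term vanishes, in particular for \(\psi\). Positivity then upgrades a vanishing expectation to \((A^{1/2}\otimes B^{1/2})\psi=0\), hence \((A\otimes B)\psi=0\). In matrix form:
\[
A_{u,a}\Psi B_{u,b}^{\mathsf T}=0\quad(a\ne b),\qquad A_{u,a}\Psi B_{v,a}^{\mathsf T}=0\quad(u\sim v).
\]

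\emph{Intertwining and the invariant subspace.} Sum the first relation over \(b\ne a\) and use \(\sum_bB_{u,b}=I\) to get \(A_{u,a}\Psi=A_{u,a}\Psi B_{u,a}^{\mathsf T}\). Sum it instead over \(a\ne b\) to get \(\Psi B_{u,b}^{\mathsf T}=A_{u,b}\Psi B_{u,b}^{\mathsf T}\). Together these give the intertwining relation \(A_{u,a}\Psi=\Psi B_{u,a}^{\mathsf T}\). Hence \(H_0=\operatorname{ran}\Psi\neq0\) is invariant under every \(A_{u,a}\). Since each \(A_{u,a}\) is Hermitian, \(H_0\) reduces it, and the restriction \(P_{u,a}:=A_{u,a}|_{H_0}\) is positive. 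The restrictions sum to the identity on \(H_0\), and we take \(h=\dim H_0\ge1\).

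\emph{Orthogonality relations.} Intertwining again gives, for \(a\ne b\),
\[
A_{u,a}A_{u,b}\Psi=A_{u,a}\Psi B_{u,b}^{\mathsf T}=0,
\]
so \(P_{u,a}P_{u,b}=0\) on \(H_0\). Consequently \(P_{u,a}=P_{u,a}\sum_bP_{u,b}=P_{u,a}^2\), so each \(P_{u,a}\) is a Hermitian projection. Similarly, for \(u\sim v\),
\[
A_{u,a}A_{v,a}\Psi=A_{u,a}\Psi B_{v,a}^{\mathsf T}=0,
\]
which gives \(P_{u,a}P_{v,a}=0\). Identifying \(H_0\cong\C^h\) yields \eqref{eq:projectors}.

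The main obstacle is the intertwining step. The argument has to pass from vanishing probabilities for a mixed state and non-projective POVMs to exact operator identities, using only positivity and completeness. The remaining steps are routine consequences of those identities.
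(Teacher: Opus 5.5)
Your proof is correct and follows essentially the same route as the paper. The reverse direction uses the same maximally entangled construction. In the forward direction you obtain the same annihilation relations \(A_{u,a}\Psi B_{v,b}^{\mathsf T}=0\), the same intertwining \(A_{u,a}\Psi=\Psi B_{u,a}^{\mathsf T}\), and the same restriction to \(H_0=\operatorname{ran}\Psi\).

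There are two minor differences. The paper gets \(E\rho=0\) directly from \(\tr(\rho E)=0\), so any nonzero column works without a pure-state decomposition of \(\rho\) containing \(\psi\). You assert that such a decomposition exists without justification; it does hold, since \(\psi\in\operatorname{ran}\rho\) gives \(\rho\succeq\epsilon\psi\psi^*\) for small \(\epsilon>0\). The paper also obtains idempotence from \(A_{u,a}^2\Psi=A_{u,a}\Psi B_{u,a}^{\mathsf T}=A_{u,a}\Psi\), rather than from mutual orthogonality plus completeness.
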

\begin{proof}
Write \(\rho\) for the shared density matrix and
\(A_{u,a},B_{v,b}\) for the local POVM effects.

\emph{1. A consequence of positivity.}
For positive
semidefinite matrices \(R,E\) of the same size,
\begin{equation}\label{eq:trace-zero}
 \tr(RE)=\|E^{1/2}R^{1/2}\|_{\HS}^2\geq0,
 \qquad \tr(RE)=0\ \Longrightarrow\ ER=0.
\end{equation}
Indeed, cyclicity of the trace gives the equality. If the norm
vanishes, \(E^{1/2}R^{1/2}=0\), so multiplication on the left by
\(E^{1/2}\) and on the right by \(R^{1/2}\) gives \(ER=0\).
In particular the Born probabilities of the mixed strategy are real
and nonnegative. Their sum is one, since both POVMs sum to the
identity and \(\tr\rho=1\).

\emph{2. From a perfect strategy to projections.}
Suppose a perfect mixed-state strategy with local POVMs is given.
A question-answer tuple \((u,v,a,b)\) is losing when
\(u=v\) and \(a\ne b\), or when \(u\sim v\) and \(a=b\).
For each such tuple, the tensor effect
\(E=A_{u,a}\otimes B_{v,b}\) is positive semidefinite
(for instance, it is the square of
\(A_{u,a}^{1/2}\otimes B_{v,b}^{1/2}\)).
The zero Born probability and~\eqref{eq:trace-zero} imply
\(E\rho=0\). Choose a nonzero column \(\psi\) of \(\rho\), which
exists because \(\tr\rho=1\). The same column is annihilated by
every losing tensor effect. In fixed orthonormal bases of the two
local spaces, write
\begin{equation}
 \psi=\sum_{i,j}\Psi_{ij}|i\rangle\otimes|j\rangle,
 \qquad \Psi\ne0.
\end{equation}
The matrix of coefficients of \((A\otimes B)\psi\) is
\(A\Psi B^{\mathsf T}\). Therefore
\begin{equation}\label{eq:zero-amplitudes}
 A_{u,a}\Psi B_{v,b}^{\mathsf T}=0
 \quad\text{whenever }(u,v,a,b)\text{ is losing}.
\end{equation}

Synchrony and completeness now give, for every \(u,a\),
\begin{equation}\label{eq:intertwining}
 A_{u,a}\Psi
 =A_{u,a}\Psi B_{u,a}^{\mathsf T}
 =\Psi B_{u,a}^{\mathsf T}.
\end{equation}
For the first equality, insert \(\sum_bB_{u,b}^{\mathsf T}=I\)
and discard the terms with \(b\ne a\) using
\eqref{eq:zero-amplitudes}; the second follows by inserting
\(\sum_bA_{u,b}=I\) on the left. It follows that
\begin{equation}
 A_{u,a}^2\Psi
 =A_{u,a}\Psi B_{u,a}^{\mathsf T}=A_{u,a}\Psi,
 \qquad
 A_{u,a}A_{v,a}\Psi
 =A_{u,a}\Psi B_{v,a}^{\mathsf T}=0\quad(u\sim v).
\end{equation}
Let \(H_0=\operatorname{ran}\Psi\), a nonzero subspace of Alice's
space, and put \(h=\dim H_0\).
Equation~\eqref{eq:intertwining} shows that every \(A_{u,a}\)
preserves \(H_0\). Its restriction to \(H_0\) is Hermitian, since
\(A_{u,a}\) is Hermitian, and is idempotent by the first equality
above. The restricted operators sum to \(I_{H_0}\) and satisfy the
edge relations by the second equality. In an orthonormal basis
of \(H_0\) they are the required projections.

\emph{3. From projections to a perfect strategy.}
Conversely, suppose~\eqref{eq:projectors} holds. Projections at a
fixed vertex are pairwise orthogonal: for any \(u,a\),
\begin{equation}
 0=P_{u,a}(I-P_{u,a})P_{u,a}
   =\sum_{b\ne a}P_{u,a}P_{u,b}P_{u,a}
   =\sum_{b\ne a}(P_{u,b}P_{u,a})^*(P_{u,b}P_{u,a}).
\end{equation}
Taking the trace gives
\(\sum_{b\ne a}\|P_{u,b}P_{u,a}\|_{\HS}^2=0\).
Hence \(P_{u,b}P_{u,a}=0\) for every \(b\ne a\).
Use the normalized maximally entangled state
\begin{equation}
 |\Phi_h\rangle=\frac1{\sqrt h}\sum_{i=1}^h|i\rangle\otimes|i\rangle,
 \qquad \rho=|\Phi_h\rangle\langle\Phi_h|,
\end{equation}
with Alice's projective measurement \(P_{u,a}\) and Bob's
\(P_{v,b}^{\mathsf T}\). These are POVMs because transposition
preserves positivity and their sums are identities. The coefficient
matrix of the vector after applying the two outcome projectors is
\(h^{-1/2}P_{u,a}P_{v,b}\), which vanishes for both kinds of losing
tuples. For projections the Born probability is exactly the squared
norm of this vector, so the strategy is perfect.
\end{proof}

\section{Coding consequences of earlier results}\label{app:prior-regimes}
We derive the coding consequences of existing graph and source-coding
results for the source considered here. The cited results supply the
ingredients; the deductions below do not assert that those papers
explicitly formulated this particular coding application.

\paragraph{Identifying the coding costs.}
Let \(q\geq2\), \(\ell\geq1\), \(n=q\ell\), and \(G=\Om\).
Every edge is a possible candidate pair, and either
candidate can be Alice's input, so the characteristic graph is exactly
\(G\). A classical encoder must assign different messages to adjacent
vertices; conversely, sending a vertex's color resolves every pair.
Thus the minimum classical message alphabet size is \(\chi(G)\), as in
the general framework of~\cite[Sec.~1.1]{bblps}. Its
entanglement-assisted counterpart is the minimum alphabet size
\(\chs(G)\), defined in
Section~\ref{sec:source-coding}.
Shared or private randomness does not reduce this classical minimum.
Indeed, sample all random tapes independently of the inputs. Since
there are finitely many allowed inputs and the protocol succeeds
with probability one on each, there is a choice of tapes for which
it succeeds on all of them. Fixing those tapes gives a deterministic
code with the same message alphabet.

\paragraph{Steering is necessary for a strict coding advantage.}
Consider the conditional-state assemblage \(\{R_{x,a}\}_{x,a}\)
of a perfect \(k\)-message protocol, as defined in
Section~\ref{sec:source-coding}.
A local-hidden-state model expresses it as
\begin{equation}\label{eq:source-lhs}
 R_{x,a}=\sum_\lambda p_\lambda\,p(a\mid x,\lambda)\,
 \sigma_\lambda,
\end{equation}
where \(p_\lambda\) is a probability distribution, each
\(\sigma_\lambda\) is a density matrix independent of \(x\),
and \(p(a\mid x,\lambda)\) is a conditional probability
distribution. Since the input and outcome sets are finite, the
hidden-variable set can also be taken finite~\cite[Secs.~II and III.A]{cs-steering}.
Such a model gives a classical simulation with the same message
alphabet. Alice and Bob share the random variable \(\lambda\);
on input \(x\), Alice samples \(a\) from \(p(a\mid x,\lambda)\)
and sends it. For candidate pair \(L\) and message \(a\), let
\(F_{L,a}\) be the original decoder's effect for declaring the
first candidate in a fixed ordering of \(L\). Bob declares that
candidate with probability \(\tr(F_{L,a}\sigma_\lambda)\), and
the other candidate otherwise. By~\eqref{eq:source-lhs}, averaging
over \(\lambda\) reproduces all of the original protocol's
message and decoding probabilities. The simulation is therefore
perfect and, by fixing its random tapes as above, yields a
deterministic classical code with the same \(k\) messages.
Consequently \(k\geq\chi(G)\). A strict advantage over the
classical minimum thus requires an assemblage with no
local-hidden-state model, which is steering in the technical
sense~\cite[Sec.~II.A]{cs-steering}. This is a necessary condition;
steering alone does not guarantee a coding advantage.

\paragraph{An elementary classical code.}
Fix a set \(J\subseteq[n]\) of \(\ell+1\) coordinates, known to
both parties, and let Alice send \(x|_J\).
The balanced-difference promise says that exactly \(\ell\)
coordinates of \(v-u\) equal zero, so the two candidates agree
in exactly \(\ell\) positions. They therefore cannot agree on
all of \(J\). Bob recovers \(x\) by comparing the received tuple
with \(u|_J\) and \(v|_J\), exactly one of which matches.
There are \(q^{\ell+1}\) possible tuples. Encoding the entire
tuple as a single label gives
\begin{equation}
 \chi(G)\leq q^{\ell+1},\qquad
 \lceil\log_2\chi(G)\rceil
 \leq\lceil(\ell+1)\log_2 q\rceil.
\end{equation}
This elementary upper bound holds in both parity regimes;
it need not attain the exact classical optimum.

\paragraph{Combining the upper and lower bounds.}
Let \(\xi'(G)\) be the smallest dimension of an orthogonal
representation whose coordinates all have modulus one. The phase
representation of~\cite[Lemma~4.1]{cao} and the source-coding
construction of~\cite[Lemma~5.1]{bblps} give, respectively,
\begin{equation}
 \xi'(G)\leq n,\qquad \chs(G)\leq\xi'(G),
 \qquad\text{hence}\qquad \chs(G)\leq n.
\end{equation}
This holds whenever \(q\geq2\) and \(n>0\) is divisible by \(q\),
without a parity assumption. Section~\ref{sec:source-coding} gives
the resulting protocol explicitly.
For fixed \(q\), \cite[Theorem~1.9]{cao} gives constants
\(N_q,\varepsilon_q>0\) such that
\(\chi(G)\geq(1+\varepsilon_q)^n\) whenever \(n\geq N_q\),
\(q\mid n\), and \((q-1)\ell\) is even. Under these conditions,
the optimal fixed-length bit costs satisfy
\begin{equation}
 \underbrace{\lceil\log_2\chs(G)\rceil}_{\text{with entanglement}}
 \leq\lceil\log_2 n\rceil,
 \qquad
 \underbrace{\lceil\log_2\chi(G)\rceil}_{\text{classical}}
 \geq n\log_2(1+\varepsilon_q).
\end{equation}
Combining the classical lower bound with the coordinate-sending code
and \(\ell=n/q\) gives
\begin{equation}
 n\log_2(1+\varepsilon_q)
 \leq\lceil\log_2\chi(G)\rceil
 \leq\left\lceil\left(\frac nq+1\right)\log_2 q\right\rceil.
\end{equation}
Hence, along the even-parity lengths, the optimal classical cost is
\(\Theta(n)\) for fixed \(q\),
whereas the known entanglement-assisted code uses
\(\lceil\log_2 n\rceil\) bits. The \(O(\log n)\)-versus-\(\Theta(n)\)
separation already follows from existing results: even sending
the entire word gives the sufficient classical upper bound
\(\lceil n\log_2 q\rceil\). No use of our uniform spectral theorem
is needed for that separation. Corollary~\ref{cor:source-coding}
additionally proves that the entanglement-assisted bit cost is
exactly \(\lceil\log_2 n\rceil\) at every even-parity length.

Here \(n\) is the word length within one source instance;
\(m\) in Corollary~\ref{cor:source-block-coding} is the number of
source instances encoded jointly. The classical lower bound
and the resulting separation require sufficiently large \(n\)
depending on \(q\); they do not determine the exact classical cost
at every parameter pair or imply a strict advantage at every finite
length. In particular, the classical codes in
Appendix~\ref{app:full-graphs} attain \(n\) messages at
\((q,n)=(2,4)\) and \((3,3)\).

\paragraph{From a spectral formula to a coding converse.}
Let \(\vartheta\) denote the Lov\'asz theta number, and let
\(\overline G\) be the complement of \(G\). For a \(D\)-regular
graph with least eigenvalue \(\tau<0\), the spectral theta bound
in~\cite[Sec.~2.1]{bdov} and the source-coding bound
of~\cite[Theorem~2.1]{bblps} give
\begin{equation}\label{eq:prior-spectral-coding}
 1-\frac D\tau\leq\vartheta(\overline G)\leq\chs(G).
\end{equation}
The proof of Lemma~\ref{lem:source-hoffman} in
Section~\ref{sec:source-coding} gives a direct derivation of this
communication lower bound. For \(G=\Om\), the identity
\(A\one=D\one\) in Lemma~\ref{lem:adjacency-structure} gives regularity.
Thus, whenever an earlier result establishes \(\tau=-D/(n-1)\),
Eq.~\eqref{eq:prior-spectral-coding} and the upper bound
\(\chs(G)\leq n\) give
\begin{equation}
 n=1-\frac D\tau\leq\chs(G)\leq n.
\end{equation}
This is the extra implication needed to obtain coding optimality
from a published spectral result. The equality \(\chq(G)=n\)
alone would not provide the lower bound, since
\(\chs(G)\leq\chq(G)\)~\cite[Sec.~2.1]{bblps}.

\paragraph{Earlier spectral inputs.}
The required least-eigenvalue formula holds for \(q=2\),
\(4\mid n\), as recorded in~\cite[Sec.~3.2]{wed}, and for
\(q=3\), \(3\mid n\), by~\cite[Theorem~2]{lnz}.
Cao et al. prove it for every fixed \(q\) at sufficiently large
even-parity lengths~\cite[Lemma~4.4]{cao}. Their finite-field
formula~\cite[Lemma~4.6]{cao} also applies to the cyclic alphabet
when \(q\) is prime and \(n=q^r\), \(r\geq1\), within the
even-parity range. For nonprime prime powers, the additive group of
the finite field is not the cyclic group, so that identification
does not apply. Substituting each of these spectral inputs
into~\eqref{eq:prior-spectral-coding} proves \(\chs(G)=n\)
in the corresponding regime.

\paragraph{From a generalized Hadamard matrix to a coding converse.}
An order-\(n\) generalized Hadamard matrix over \(\Z_q\) has
balanced differences between every two distinct rows, so its rows
form an \(n\)-clique in \(G\). The clique bound for theta and
the same source-coding bound~\cite{bblps} then give
\begin{equation}
 n\leq\vartheta(\overline G)\leq\chs(G)\leq n.
\end{equation}
This supplies another sufficient condition for coding optimality,
without requiring a spectral computation. For example, the
order-ten Butson matrix over fifth roots of unity recorded
in~\cite[Remark~5.11]{efo} corresponds to a generalized Hadamard
matrix over \(\Z_5\) by~\cite[Lemma~2.3]{efo}. The rows of the
latter matrix therefore certify
the coding optimum at \((q,n)=(5,10)\).

\paragraph{The additional deduction for \(q=4\).}
Let \(n=4\ell\), with \(\ell\geq1\) even, and put \(D=n!/(\ell!)^4\).
We derive the spectral input from~\cite[Lemmas~4.5--4.7]{nkz}
and a binomial estimate. For a frequency with composition
\(t=(t_0,t_1,t_2,t_3)\), write \(\lambda(t)\) for its
adjacency eigenvalue and set \(k=t_0+t_2\).
For \(0<k<n\), \cite[Lemma~4.7]{nkz} gives
\(|\lambda(t)|\leq D/(n-1)\) when \(n\geq10\), hence for
every \(n\geq16\) divisible by eight. Its proof also gives a
coefficient estimate valid at \(n=8\): the eigenvalue is zero
for odd \(k\), and for even \(k\),
\begin{equation}
 \frac{|\lambda(t)|}{D}
 \leq\frac{\binom{n/2}{k/2}}{\binom nk}.
\end{equation}
For \(n=8\), the interior even values \(k=2,4,6\) give
\(1/7,3/35,1/7\), respectively, all at most \(1/(n-1)\).

For \(k=0,n\), all frequency coordinates have the same parity.
Subtracting that common parity places them in \(\{0,2\}\).
This constant shift preserves the eigenvalue because \(\ell\) is even.
If \(a\) coordinates equal two, the corresponding phase is a
product of \(a\) signs sampled uniformly without replacement from
\(n/2\) plus signs and \(n/2\) minus signs. Equivalently, the exact formula
of~\cite[Lemma~4.6]{nkz} becomes
\begin{equation}
 \frac{\lambda(t)}D
 =\frac{[z^a](1-z^2)^{n/2}}{\binom na}
 =\begin{cases}
    0,&a\text{ odd},\\
    (-1)^m R_m,&a=2m,
   \end{cases}
 \qquad
 R_m=\frac{\binom{n/2}{m}}{\binom n{2m}}.
\end{equation}
The identities
\begin{equation}
 R_m=R_{n/2-m},\qquad
 \frac{R_{m+1}}{R_m}=\frac{2m+1}{n-2m-1}
\end{equation}
show that \(R_m\leq R_1=1/(n-1)\) for
\(1\leq m\leq n/2-1\). The values \(a=0\) and \(a=n\) both
give \(\lambda(t)/D=1\), since \(n/2\) is even.
Thus every eigenvalue is at least \(-D/(n-1)\).
The frequency type \((0,1,n-2,1)\) attains this value
by~\cite[Lemma~4.5]{nkz}. Consequently
\begin{equation}
 \lambda_{\min}(G)=-\frac D{n-1},\qquad \chs(G)=n
 \quad\text{for }q=4,\ 8\mid n,
\end{equation}
where the coding equality follows
from~\eqref{eq:prior-spectral-coding} and the upper bound.
This derivation uses the cited lemmas and the additional estimate
above, independently of our uniform spectral theorem.

These regimes overlap. Theorem~\ref{thm:main} supplies the spectral
input, and hence the coding converse, for every finite even-parity
parameter pair \((q,n)\), without a length threshold or a
matrix-existence assumption.

\section{Additional worked examples}\label{app:examples}
These examples illustrate graph neighborhoods, adjacency spectra,
Fourier waves, the reduction from frequency compositions to
polynomial coefficients, and the switching bound. We conclude with
two small graphs and their optimal classical codes.

\Needspace{7\baselineskip}
\subsection{Graph neighborhoods}
\label{app:example-neighborhoods}

\begin{example}[Building a vertex's neighborhood]
\label{ex:shift-neighborhood}
Take \(q=n=2\), so \(\ell=1\). The allowed shifts are
\(S_{2,2}=\{(0,1),(1,0)\}\). For \(u=(0,1)\), addition modulo
two gives
\begin{equation}
 (0,1)+(0,1)=(0,0),\qquad
 (0,1)+(1,0)=(1,1).
\end{equation}
Thus the two neighbors of \(u\) are \((0,0)\) and \((1,1)\).
The word \((1,0)\) is not a neighbor, because its difference from
\(u\) is \((1,1)\), which is not balanced.
\end{example}

\Needspace{7\baselineskip}
\subsection{Adjacency spectra and Fourier waves}
\label{app:example-spectra}

\begin{example}[The triangle]\label{ex:triangle}
The triangle \(K_3\) has adjacency matrix and spectrum
\begin{equation}
 \vcenter{\hbox{%
 \begin{tikzpicture}[every node/.style={circle,draw=MidnightBlue,
   fill=white,minimum size=5mm,inner sep=1pt,font=\small}]
  \node (v1) at (0,1.35) {1};
  \node (v2) at (-.78,0) {2};
  \node (v3) at (.78,0) {3};
  \draw[MidnightBlue,line width=.7pt] (v1)--(v2)--(v3)--(v1);
 \end{tikzpicture}}}
 \qquad
 A_{K_3}=\begin{pmatrix}
  0&1&1\\
  1&0&1\\
  1&1&0
 \end{pmatrix},
 \qquad
 \operatorname{spec}(A_{K_3})=(2,-1,-1).
\end{equation}
Rows and columns follow the vertex labels \(1,2,3\). The diagonal
entries are zero because there are no loops; every off-diagonal
entry is one because the corresponding vertices are joined. For
values \(f=(a,b,c)^{\mathsf T}\) assigned to the three vertices,
neighbor summation gives
\begin{equation}
 A_{K_3}(a,b,c)^{\mathsf T}
 =(b+c,a+c,a+b)^{\mathsf T}.
\end{equation}
For the constant vector \(\one=(1,1,1)^{\mathsf T}\), which assigns
the value \(1\) to every vertex, the result is \((2,2,2)^{\mathsf T}\),
so its eigenvalue is \(2\).
For \((1,-1,0)^{\mathsf T}\), the result is
\((-1,1,0)^{\mathsf T}\): the vector changes sign, giving
eigenvalue \(-1\).

More generally, if \(a+b+c=0\), each neighbor sum is the negative
of the value at that vertex, so \(A_{K_3}f=-f\). These zero-sum
vectors form a two-dimensional space with basis
\((1,-1,0)^{\mathsf T}\) and \((1,0,-1)^{\mathsf T}\), which
explains why \(-1\) occurs twice in the spectrum. Together with
the constant vector, they span \(\mathbb R^3\), so the spectrum is
complete. Thus \(D=2\) and \(\tau=-1\).
Lemma~\ref{lem:source-hoffman} therefore requires at least three
messages for the corresponding source-coding task.
\end{example}

\begin{example}[Evaluating a Fourier wave]\label{ex:fourier-wave}
For \(q=3\), \(n=3\), and \(a=(1,0,0)\), we have
\begin{equation}
 \chi_a(x)=\zeta_3^{x_1}
 =\begin{cases}
  1,&x_1=0,\\
  e^{2\pi i/3},&x_1=1,\\
  e^{4\pi i/3},&x_1=2.
 \end{cases}
\end{equation}
This wave assigns a value to each vertex using only its first
coordinate. Other choices of \(a\) can combine several coordinates.
\end{example}

\begin{example}[Frequency vectors and their compositions]\label{ex:frequency-composition}
For \(q=3\) and \(n=6\), consider
\begin{equation}
 a=(0,1,0,2,1,0),\qquad a'=(0,0,0,1,1,2).
\end{equation}
Both contain three zeros, two ones, and one two. Their compositions
therefore agree, and
Lemma~\ref{lem:normalized-compositions}\,\ref{item:composition-invariance} gives
\begin{equation}
 \operatorname{comp}(a)=\operatorname{comp}(a')=(3,2,1),
 \qquad \rho(a)=\rho(a').
\end{equation}
These frequency vectors define different Fourier waves with the
same eigenvalue. Their common composition is not monochromatic.
The monomial associated with both vectors in
Section~\ref{sec:fourier-counts} is \(z_0^3z_1^2z_2\).
For \(b=(2,2,2,2,2,2)\), however,
\(\operatorname{comp}(b)=(0,0,6)\) is monochromatic: only the
label \(2\) occurs, and it occurs in all six positions.
\end{example}

\Needspace{7\baselineskip}
\subsection{From compositions to coefficients}
\label{app:example-coefficients}

\begin{example}[From words to coefficient counts]\label{ex:coefficient-counting}
Take \(q=3\), \(\ell=1\), and \(r=(1,1,1)\). There are
\(3!=6\) words in this composition class, obtained by arranging
\(0,1,2\). Number the positions in this single block by
\(i=0,1,2\). For a word \(w=(w_0,w_1,w_2)\), the letter
\(w_i\) selects column \(i+w_i\pmod3\) in row \(i\) of
\(C_3(z)\). This selects one entry from every row automatically.
The word is valid exactly when the selected columns are all different,
so that every column is also selected once.

For instance, \(w=(0,1,2)\) selects columns \((0,2,1)\).
The selected-column permutation exchanges \(1,2\), giving
permutation sign \(-1\). The word \((0,2,1)\), however,
selects column \(0\) in all three rows and is invalid.
The following table applies this test to all six words. The sign
is that of the selected-column permutation and is defined only
for valid words.
\begin{center}
\renewcommand{\arraystretch}{1.1}
\begin{tabular}{@{}cccc@{}}
\toprule
Word \(w\) & Selected columns & Valid? & Permutation sign \\
\midrule
\((0,1,2)\) & \((0,2,1)\) & Yes & \(-1\) \\
\((0,2,1)\) & \((0,0,0)\) & No  & \(\text{--}\) \\
\((1,0,2)\) & \((1,1,1)\) & No  & \(\text{--}\) \\
\((1,2,0)\) & \((1,0,2)\) & Yes & \(-1\) \\
\((2,0,1)\) & \((2,1,0)\) & Yes & \(-1\) \\
\((2,1,0)\) & \((2,2,2)\) & No  & \(\text{--}\) \\
\bottomrule
\end{tabular}
\end{center}
The three valid words give the matrix selections in
Figure~\ref{fig:composition-coefficients}. Each selects the entries
\(z_0,z_1,z_2\) once, producing \(z_0z_1z_2\), and each
selected-column permutation exchanges two positions. Hence
\begin{equation}
 [z_0z_1z_2]\det C_3(z)=-3,\qquad
 [z_0z_1z_2]\per C_3(z)=3.
\end{equation}
Dividing by the six words in the class,
Lemma~\ref{lem:coefficient-reduction} gives
\begin{equation}
 \rho(a)=\frac{-3}{6}=-\frac12,\qquad
 |\rho(a)|\leq\frac36=\frac12
 \quad\text{when }\operatorname{comp}(a)=(1,1,1).
\end{equation}
All six words are balanced, but only three are valid. Balance fixes
the letter counts; validity requires one selected entry per row and
column in the coefficient expansion. These are different conditions.
Example~\ref{ex:switching} continues with these six words to explain
the switching argument of Section~\ref{sec:switching}.
\end{example}

\begin{figure}[htbp]
\centering
\includegraphics[width=\linewidth]{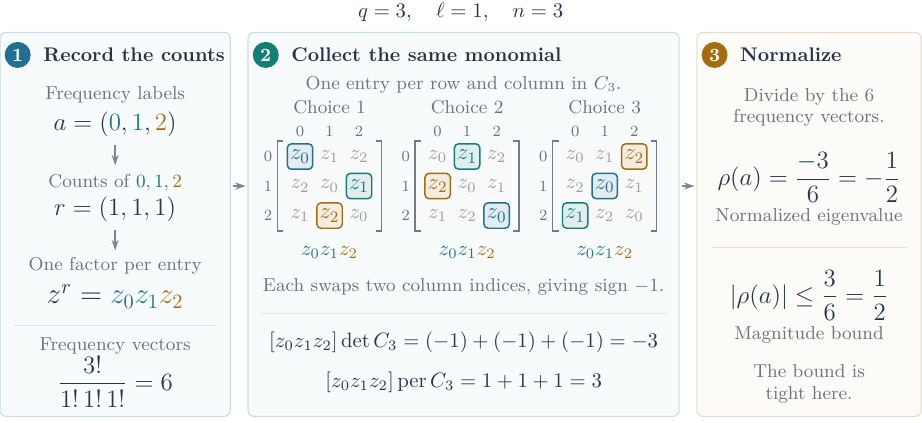}
\caption{From a composition to an eigenvalue, with \(q=3\) and
\(\ell=1\). The middle panels show all three selections of one entry
per row and column of \(C_3(z)\) that produce \(z_0z_1z_2\).
Rows and columns are indexed by \(0,1,2\); colors track the residue
labels. Each selection has negative permutation sign, giving
determinant coefficient \(-3\) and permanent coefficient \(3\).
Dividing by the six frequency vectors of composition \((1,1,1)\)
gives the normalized eigenvalue and its bound in
Lemma~\ref{lem:coefficient-reduction}.}
\label{fig:composition-coefficients}
\end{figure}

\Needspace{7\baselineskip}
\subsection{Switching and valid assignments}\label{app:example-switching}

\begin{example}[From swaps to a counting bound]\label{ex:switching}
Continue Example~\ref{ex:coefficient-counting} with \(q=n=3\)
and composition \(r=(1,1,1)\). We now use swaps to compare the
numbers of valid and invalid words in this same class. A \emph{swap}
exchanges the letters at two positions. The positions stay fixed,
and the letter counts are preserved, so the new word remains in the
same composition class. After swapping, we recompute the selected
columns \(i+w_i\pmod3\), which we call the \emph{targets} of
the positions.

\emph{Three positions.}
Start with the valid word \((0,1,2)\), whose targets are
\((0,2,1)\). Exchanging the letters at positions \(0,1\) gives
\begin{equation}
 \underbrace{(0,1,2)}_{\text{valid}}
 \xrightarrow{\text{swap letters at }0,1}
 \underbrace{(1,0,2)}_{\text{invalid}}.
\end{equation}
The letter at position \(2\) has not moved and still selects
target \(1\). Both changed positions now also select target \(1\),
so the new targets are \((1,1,1)\): target \(1\) is reached
three times, while targets \(0,2\) are missing. The upper panel of
Figure~\ref{fig:switching} shows these arrows before and after the
swap.

\Needspace{11\baselineskip}
A \emph{repair} is any swap that restores validity, possibly at a
different valid word. For the invalid word \((1,0,2)\),
all three choices of two positions give a repair:
\begin{center}
\renewcommand{\arraystretch}{1.1}
\begin{tabular}{@{}ccc@{}}
\toprule
Positions swapped & Repaired word & Recomputed targets \\
\midrule
\(0,1\) & \((0,1,2)\) & \((0,2,1)\) \\
\(0,2\) & \((2,0,1)\) & \((2,1,0)\) \\
\(1,2\) & \((1,2,0)\) & \((1,0,2)\) \\
\bottomrule
\end{tabular}
\end{center}
Each target list contains \(0,1,2\) exactly once. The three
repairs therefore lead to the three valid words in
Example~\ref{ex:coefficient-counting}.

The lower panel of Figure~\ref{fig:switching} records all swaps
between valid and invalid words in this class. Each valid word has
three swaps, all leading to invalid words; each invalid word has
three repairing swaps. Write \(V=|E_r|\) and \(I=|I_r|\) for
the numbers of valid and invalid words. Each line represents one reversible
swap, with one valid endpoint and one invalid endpoint. Counting
the same lines from the two sides gives
\begin{equation}
 \underbrace{3V}_{\text{from the valid words}}
 =\underbrace{3I}_{\text{from the invalid words}},
 \qquad \frac{V}{V+I}=\frac12=\frac1{n-1}.
\end{equation}
Here \(V=I=3\), so both counts give nine lines. This is the
switching method: compare how many swaps leave a valid word with
how many can repair an invalid word.

Every letter occurs once in this example, so it illustrates the
three-repair case in Section~\ref{sec:switching}. When a letter
occurs between \(2\) and \(n-2\) times, the proof instead
restricts swaps to those involving that letter and establishes
at most two repairs per invalid word.

\begin{figure}[htbp]
\centering
\includegraphics[width=\linewidth]{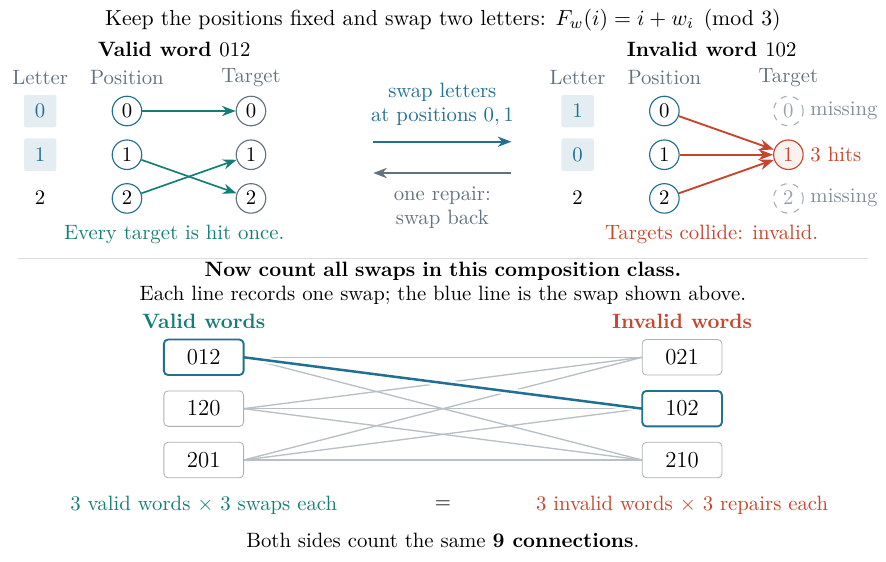}
\caption{Switching within the composition class \((1,1,1)\) from
Example~\ref{ex:coefficient-counting}. Top: swapping two letters of
\((0,1,2)\) creates the invalid word \((1,0,2)\), whose three
targets coincide. Bottom: all nine swaps between the three valid
and three invalid words. The highlighted line is the swap shown
above. Every line can be read in either direction: as a swap
destroying validity or as a repair.}
\label{fig:switching}
\end{figure}

\Needspace{12\baselineskip}
\emph{An equality family.}
To see how invalid assignments become more numerous as \(n\) grows,
keep \(q=3\) and take \(n=3\ell\) with composition
\(r=(n-2,1,1)\). There are \(n(n-1)\) assignments: choose
the position carrying \(1\), then a different position carrying
\(2\); all remaining letters are zero. Every zero position maps
to itself, so validity requires the two nonzero positions to exchange
targets. If \(1\) is at position \(i\) in a block, then \(2\)
must be at position \(i+1\pmod3\) in the same block. Thus each
of the \(n\) choices for the position of \(1\) determines exactly
one valid assignment, giving
\begin{equation}
 V=n,\qquad I=n(n-2),\qquad
 \frac{I}{V}=n-2,\qquad \frac{V}{V+I}=\frac1{n-1}.
\end{equation}
The table gives exact counts for this family; its first row is the
six-word example above.
\begin{center}
\renewcommand{\arraystretch}{1.2}
\begin{tabular}{@{}rrrrcc@{}}
\toprule
\(n\) & Total \(V+I\) & Valid \(V\) & Invalid \(I\)
      & \(I/V\) & \(V/(V+I)\) \\
\midrule
 3 &          6 &  3 &          3 &  1 & \(1/2\)  \\
 6 &         30 &  6 &         24 &  4 & \(1/5\)  \\
 9 &         72 &  9 &         63 &  7 & \(1/8\)  \\
12 &        132 & 12 &        120 & 10 & \(1/11\) \\
30 &        870 & 30 &        840 & 28 & \(1/29\) \\
60 & \(3\,540\) & 60 & \(3\,480\) & 58 & \(1/59\) \\
\bottomrule
\end{tabular}
\end{center}
For example, at \(n=30\) there are \(28\) invalid assignments
for every valid one. This family attains the bound of
Theorem~\ref{thm:permanent}; for any nonmonochromatic composition,
the theorem guarantees \(I\geq(n-2)V\), although equality need
not hold.

\emph{Six positions.}
For \(n=6\), the following table lists all \(30\) words of
composition \((4,1,1)\), using the same validity test as
Example~\ref{ex:coefficient-counting}. A vertical bar separates the
two fixed blocks of three positions. Within each block, the positions
are numbered \(0,1,2\), and the targets are \(i+w_i\pmod3\).
A word is valid exactly when \emph{each block} of its target list
contains \(0,1,2\) once; targets may repeat across different blocks.
For valid words, the sign is the product of the two block-permutation
signs.
\begingroup
\small
\renewcommand{\arraystretch}{1.0}
\begin{longtable}{@{}cccc@{}}
\toprule
Word \(w\) & Selected columns in each block & Valid? & Sign \\
\midrule
\endfirsthead
\multicolumn{4}{c}{Words of composition \((4,1,1)\) (continued)} \\
\toprule
Word \(w\) & Selected columns in each block & Valid? & Sign \\
\midrule
\endhead
\midrule
\multicolumn{4}{r@{}}{\small Continued on the next page} \\
\endfoot
\bottomrule
\endlastfoot
\((0,0,0\mid 0,1,2)\) & \((0,1,2\mid 0,2,1)\) & Yes & \(-1\) \\
\((0,0,0\mid 0,2,1)\) & \((0,1,2\mid 0,0,0)\) & No & \(\text{--}\) \\
\((0,0,0\mid 1,0,2)\) & \((0,1,2\mid 1,1,1)\) & No & \(\text{--}\) \\
\((0,0,0\mid 1,2,0)\) & \((0,1,2\mid 1,0,2)\) & Yes & \(-1\) \\
\((0,0,0\mid 2,0,1)\) & \((0,1,2\mid 2,1,0)\) & Yes & \(-1\) \\
\((0,0,0\mid 2,1,0)\) & \((0,1,2\mid 2,2,2)\) & No & \(\text{--}\) \\
\((0,0,1\mid 0,0,2)\) & \((0,1,0\mid 0,1,1)\) & No & \(\text{--}\) \\
\((0,0,1\mid 0,2,0)\) & \((0,1,0\mid 0,0,2)\) & No & \(\text{--}\) \\
\((0,0,1\mid 2,0,0)\) & \((0,1,0\mid 2,1,2)\) & No & \(\text{--}\) \\
\((0,0,2\mid 0,0,1)\) & \((0,1,1\mid 0,1,0)\) & No & \(\text{--}\) \\
\((0,0,2\mid 0,1,0)\) & \((0,1,1\mid 0,2,2)\) & No & \(\text{--}\) \\
\((0,0,2\mid 1,0,0)\) & \((0,1,1\mid 1,1,2)\) & No & \(\text{--}\) \\
\((0,1,0\mid 0,0,2)\) & \((0,2,2\mid 0,1,1)\) & No & \(\text{--}\) \\
\((0,1,0\mid 0,2,0)\) & \((0,2,2\mid 0,0,2)\) & No & \(\text{--}\) \\
\((0,1,0\mid 2,0,0)\) & \((0,2,2\mid 2,1,2)\) & No & \(\text{--}\) \\
\((0,1,2\mid 0,0,0)\) & \((0,2,1\mid 0,1,2)\) & Yes & \(-1\) \\
\((0,2,0\mid 0,0,1)\) & \((0,0,2\mid 0,1,0)\) & No & \(\text{--}\) \\
\((0,2,0\mid 0,1,0)\) & \((0,0,2\mid 0,2,2)\) & No & \(\text{--}\) \\
\((0,2,0\mid 1,0,0)\) & \((0,0,2\mid 1,1,2)\) & No & \(\text{--}\) \\
\((0,2,1\mid 0,0,0)\) & \((0,0,0\mid 0,1,2)\) & No & \(\text{--}\) \\
\((1,0,0\mid 0,0,2)\) & \((1,1,2\mid 0,1,1)\) & No & \(\text{--}\) \\
\((1,0,0\mid 0,2,0)\) & \((1,1,2\mid 0,0,2)\) & No & \(\text{--}\) \\
\((1,0,0\mid 2,0,0)\) & \((1,1,2\mid 2,1,2)\) & No & \(\text{--}\) \\
\((1,0,2\mid 0,0,0)\) & \((1,1,1\mid 0,1,2)\) & No & \(\text{--}\) \\
\((1,2,0\mid 0,0,0)\) & \((1,0,2\mid 0,1,2)\) & Yes & \(-1\) \\
\((2,0,0\mid 0,0,1)\) & \((2,1,2\mid 0,1,0)\) & No & \(\text{--}\) \\
\((2,0,0\mid 0,1,0)\) & \((2,1,2\mid 0,2,2)\) & No & \(\text{--}\) \\
\((2,0,0\mid 1,0,0)\) & \((2,1,2\mid 1,1,2)\) & No & \(\text{--}\) \\
\((2,0,1\mid 0,0,0)\) & \((2,1,0\mid 0,1,2)\) & Yes & \(-1\) \\
\((2,1,0\mid 0,0,0)\) & \((2,2,2\mid 0,1,2)\) & No & \(\text{--}\) \\
\end{longtable}
\endgroup
The six valid words have one block with a transposition and one
with the identity, so all six signs are \(-1\). There are
\(24\) invalid words. Restricting swaps to a zero and a nonzero
letter gives eight swaps from each valid word and exactly two
repairs of each invalid word. Thus the switching count is
\begin{equation}
 8V=2I=48,\qquad I=4V,\qquad
 \frac{V}{V+I}=\frac6{30}=\frac15=\frac1{n-1}.
\end{equation}
\end{example}

\Needspace{7\baselineskip}
\subsection{Small graphs through their shifts}\label{app:full-graphs}
Figure~\ref{fig:full-small-graphs} constructs two graphs from their
shifts, as in Section~\ref{sec:graph}. Words are written without commas.
Each listed \(S_{q,n}\) contains six shifts, giving every word \(u\)
the six neighbors \(u+s\)
(Lemma~\ref{lem:graph-properties}\,\ref{item:graph-degree}).
Bold arrows apply one fixed shift at every vertex. Both graphs are
undirected by part~\ref{item:graph-simple} of the same lemma.

For \((q,n)=(2,4)\), the allowed shifts have exactly two ones.
The highlighted \(0011\) flips the last two bits, pairing words as
in \(0000\leftrightarrow0011\). Every allowed shift preserves parity.
Distinct words in a parity class differ in two positions, unless they
are complementary and differ in four. Thus only complementary pairs
are nonadjacent; these pairs form the independent parts of the two
\(K_{2,2,2,2}\) components.

\begin{figure}[htbp]
\centering
\includegraphics[width=.90\linewidth]{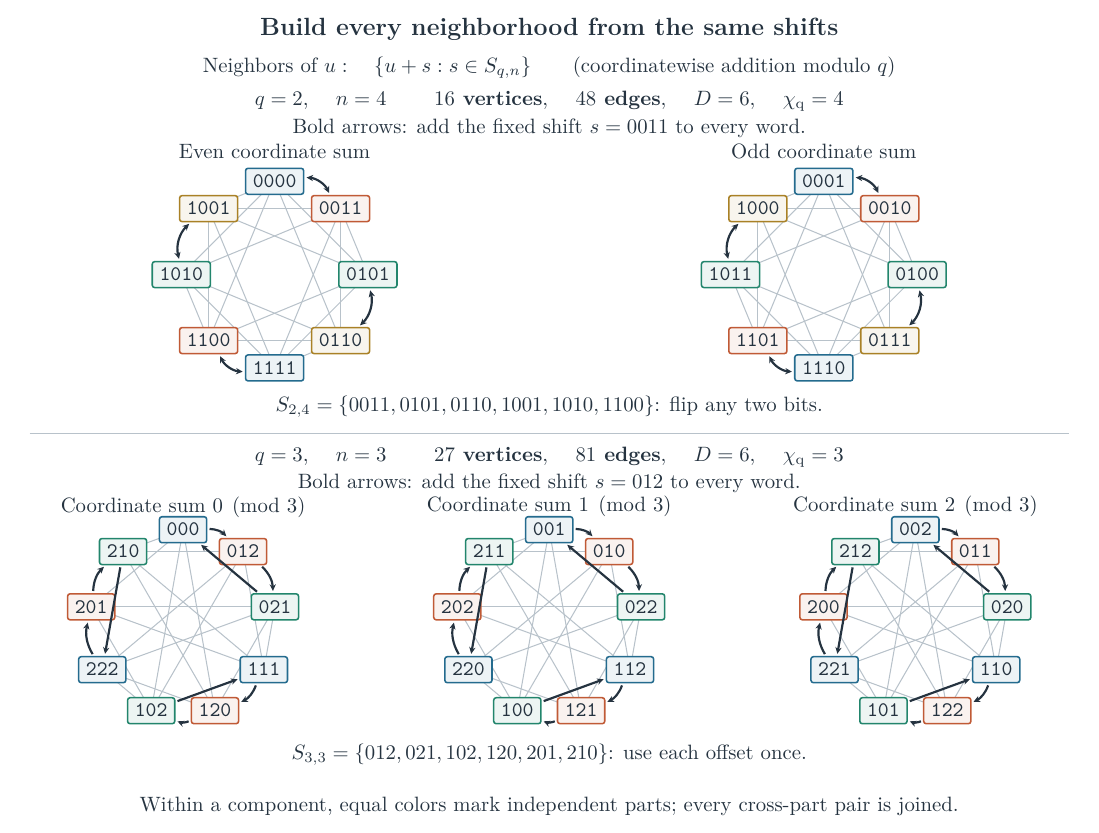}
\caption{Allowed shifts generate all edges. Bold arrows add \(0011\)
modulo two (top) and \(012\) modulo three (bottom); other edges are
gray. Colors mark independent parts. Both undirected graphs
(Lemma~\ref{lem:graph-properties}\,\ref{item:graph-simple}) satisfy
\(\chi=\chq=n\).}
\label{fig:full-small-graphs}
\end{figure}

For \((q,n)=(3,3)\), the shifts are the six permutations of \(012\).
Iterating the highlighted \(012\) gives cycles such as
\(000\to012\to021\to000\); its inverse \(021\) reverses them.
All shifts have coordinate sum zero modulo three, so edges remain
within coordinate-sum classes. A sum-zero difference is constant or
a permutation of \(012\); only the latter gives an edge.
Constant differences therefore define the independent triples in the
three \(K_{3,3,3}\) components.

For a Fourier wave \(f\), neighbor summation adds the six shifted
functions \(u\mapsto f(u+s)\), one for each allowed shift \(s\).
Each is a phase multiple of \(f\)
(Section~\ref{sec:fourier-waves}).

In each graph, one vertex from each part of a component gives an
\(n\)-clique. Reusing the same \(n\) colors for the parts in every
component gives an \(n\)-coloring. Alice sends her word's color to
distinguish Bob's candidates, so classical codes attain the
\(n\)-message optimum in both examples.

\begin{thebibliography}{10}
\raggedright
\small
\setlength{\itemsep}{2pt}
\setlength{\parsep}{0pt}
\setlength{\parskip}{0pt}
\bibitem{clmw}
T.~S.~Cubitt, D.~Leung, W.~Matthews, and A.~Winter,
\emph{Improving zero-error classical communication with entanglement},
Physical Review Letters \textbf{104} (2010), 230503.
\href{https://doi.org/10.1103/PhysRevLett.104.230503}{doi:10.1103/PhysRevLett.104.230503}.
\bibitem{lmmor}
D.~Leung, L.~Man\v{c}inska, W.~Matthews, M.~Ozols, and A.~Roy,
\emph{Entanglement can increase asymptotic rates of zero-error classical
communication over classical channels},
Communications in Mathematical Physics \textbf{311} (2012), 97--111.
\href{https://doi.org/10.1007/s00220-012-1451-x}{doi:10.1007/s00220-012-1451-x}.
\bibitem{bblps}
J.~Bri\"et, H.~Buhrman, M.~Laurent, T.~Piovesan, and G.~Scarpa,
\emph{Entanglement-Assisted Zero-Error Source-Channel Coding},
IEEE Transactions on Information Theory \textbf{61}(2) (2015), 1124--1138.
\href{https://doi.org/10.1109/TIT.2014.2385080}{doi:10.1109/TIT.2014.2385080}.
\bibitem{beigi}
S.~Beigi,
\emph{Entanglement-assisted zero-error capacity is upper bounded by the
Lov\'asz theta function},
Physical Review A \textbf{82} (2010), 010303(R).
\href{https://doi.org/10.1103/PhysRevA.82.010303}{doi:10.1103/PhysRevA.82.010303}.
\bibitem{cao}
X.~Cao, K.~Feng, H.~Huang, Y.~Yang, and Z.~Zhang,
\emph{On the quantum chromatic number of Hamming and generalized
Hadamard graphs}, arXiv:2510.14209v2, 11 March 2026.
\href{https://arxiv.org/abs/2510.14209v2}{arXiv:2510.14209v2}.
\bibitem{lnz}
T.~Luo, Y.~Ning, and X.~Zhang,
\emph{Quantum Chromatic Number of Subgraphs of Orthogonality Graphs
and the Distance-2 Hamming Graph},
Electronic Journal of Combinatorics \textbf{33}(3) (2026), P3.45.
\href{https://doi.org/10.37236/14936}{doi:10.37236/14936}.
\bibitem{nkz}
Y.~Ning, J.~H.~Koolen, and X.~Zhang,
\emph{On the Smallest Eigenvalues and Quantum Chromatic Numbers of
Hamming Graphs and Generalizations}, arXiv:2605.28402v1, 27 May 2026.
\href{https://arxiv.org/abs/2605.28402v1}{arXiv:2605.28402v1}.
\bibitem{cmnsw}
P.~J.~Cameron, A.~Montanaro, M.~W.~Newman, S.~Severini, and A.~Winter,
\emph{On the quantum chromatic number of a graph},
Electronic Journal of Combinatorics \textbf{14} (2007), R81.
\href{https://doi.org/10.37236/999}{doi:10.37236/999}.
\bibitem{ew}
C.~Elphick and P.~Wocjan,
\emph{Spectral lower bounds for the quantum chromatic number of a graph},
Journal of Combinatorial Theory, Series A \textbf{168} (2019), 338--347.
\href{https://doi.org/10.1016/j.jcta.2019.06.008}{doi:10.1016/j.jcta.2019.06.008}.
\bibitem{wed}
P.~Wocjan, C.~Elphick, and P.~Darbari,
\emph{Spectral Lower Bounds for the Quantum Chromatic Number of a
Graph---Part II},
Electronic Journal of Combinatorics \textbf{27}(4) (2020), P4.47.
\href{https://doi.org/10.37236/9295}{doi:10.37236/9295}.
\bibitem{efo}
R.~Egan, D.~Flannery, and P.~\'O~Cath\'ain,
\emph{Classifying Cocyclic Butson Hadamard Matrices},
in C.~J.~Colbourn (ed.), \emph{Algebraic Design Theory and Hadamard
Matrices}, Springer Proceedings in Mathematics \& Statistics
\textbf{133}, Springer, Cham (2015), 93--106.
\href{https://doi.org/10.1007/978-3-319-17729-8_8}{doi:10.1007/978-3-319-17729-8\_8}.
\bibitem{djw}
D.~Donovan, K.~Johnson, and I.~M.~Wanless,
\emph{Permanents and Determinants of Latin Squares},
Journal of Combinatorial Designs \textbf{24}(3) (2016), 132--148.
\href{https://doi.org/10.1002/jcd.21418}{doi:10.1002/jcd.21418}.
\bibitem{mmw}
B.~D.~McKay, J.~C.~McLeod, and I.~M.~Wanless,
\emph{The number of transversals in a Latin square},
Designs, Codes and Cryptography \textbf{40} (2006), 269--284.
\href{https://doi.org/10.1007/s10623-006-0012-8}{doi:10.1007/s10623-006-0012-8}.
\bibitem{gqz}
J.~Gruska, D.~Qiu, and S.~Zheng,
\emph{Generalizations of the distributed Deutsch--Jozsa promise problem},
Mathematical Structures in Computer Science \textbf{27}(3) (2017), 311--331.
\href{https://doi.org/10.1017/S0960129515000158}{doi:10.1017/S0960129515000158}.
\bibitem{bcp}
C.~Bracken, Y.~M.~Chee, and P.~Purkayastha,
\emph{Optimal family of \(q\)-ary codes obtained from a substructure
of generalised Hadamard matrices},
Proceedings of the IEEE International Symposium on Information Theory
(ISIT), Boston, MA (2012), 116--119.
\href{https://ppurka.github.io/files/BCP-hadamard-ISIT2012.pdf}{Author's manuscript}.
\bibitem{bdov}
C.~Bachoc, E.~DeCorte, F.~M.~de Oliveira Filho, and F.~Vallentin,
\emph{Spectral bounds for the independence ratio and the chromatic
number of an operator},
Israel Journal of Mathematics \textbf{202}(1) (2014), 227--254.
\href{https://doi.org/10.1007/s11856-014-1070-7}{doi:10.1007/s11856-014-1070-7}.
\bibitem{wits}
H.~S.~Witsenhausen,
\emph{The zero-error side information problem and chromatic numbers},
IEEE Transactions on Information Theory \textbf{22}(5) (1976), 592--593.
\href{https://doi.org/10.1109/TIT.1976.1055607}{doi:10.1109/TIT.1976.1055607}.
\bibitem{cmrssw}
T.~Cubitt, L.~Man\v{c}inska, D.~Roberson, S.~Severini, D.~Stahlke,
and A.~Winter,
\emph{Bounds on Entanglement Assisted Source-Channel Coding via the
Lov\'asz \(\vartheta\) Number and Its Variants},
IEEE Transactions on Information Theory \textbf{60}(11) (2014), 7330--7344.
\href{https://doi.org/10.1109/TIT.2014.2349502}{doi:10.1109/TIT.2014.2349502}.
\bibitem{mr}
L.~Man\v{c}inska and D.~E.~Roberson,
\emph{Quantum homomorphisms},
Journal of Combinatorial Theory, Series B \textbf{118} (2016), 228--267.
\href{https://doi.org/10.1016/j.jctb.2015.12.009}{doi:10.1016/j.jctb.2015.12.009}.
\bibitem{zkfb}
J.~A.~Zeiss, G.~Ko\ss mann, O.~Fawzi, and M.~Berta,
\emph{Approximating fixed size quantum correlations in polynomial time},
arXiv:2507.12302v2, 5 August 2026.
\href{https://arxiv.org/abs/2507.12302v2}{arXiv:2507.12302v2}.
\bibitem{harrow-sym}
A.~W.~Harrow,
\emph{The Church of the Symmetric Subspace},
arXiv:1308.6595v1, 29 August 2013.
\href{https://arxiv.org/abs/1308.6595v1}{arXiv:1308.6595v1}.
\bibitem{ckmr}
M.~Christandl, R.~K\"onig, G.~Mitchison, and R.~Renner,
\emph{One-and-a-half quantum de Finetti theorems},
Communications in Mathematical Physics \textbf{273}(2) (2007), 473--498.
\href{https://doi.org/10.1007/s00220-007-0189-3}{doi:10.1007/s00220-007-0189-3}.
\href{https://arxiv.org/abs/quant-ph/0602130v4}{arXiv:quant-ph/0602130v4}.
\bibitem{zksp}
J.~A.~Zeiss, G.~Ko\ss mann, R.~Schwonnek, and M.~Pl\'avala,
\emph{Finite de Finetti for convex bodies and Polynomial Optimization},
arXiv:2601.15184v1, 21 January 2026.
\href{https://arxiv.org/abs/2601.15184v1}{arXiv:2601.15184v1}.
\bibitem{bh}
A.~E.~Brouwer and W.~H.~Haemers,
\emph{Spectra of Graphs},
Universitext, Springer, New York, 2012.
\href{https://doi.org/10.1007/978-1-4614-1939-6}{doi:10.1007/978-1-4614-1939-6}.
\bibitem{serre}
J.-P.~Serre,
\emph{Linear Representations of Finite Groups},
translated by L.~L.~Scott,
Graduate Texts in Mathematics \textbf{42}, Springer, New York, 1977.
\href{https://doi.org/10.1007/978-1-4684-9458-7}{doi:10.1007/978-1-4684-9458-7}.
\bibitem{cs-steering}
D.~Cavalcanti and P.~Skrzypczyk,
\emph{Quantum steering: a review with focus on semidefinite programming},
Reports on Progress in Physics \textbf{80}(2) (2017), 024001.
\href{https://doi.org/10.1088/1361-6633/80/2/024001}{doi:10.1088/1361-6633/80/2/024001}.
\bibitem{fb}
M.~J.~Ferguson and D.~W.~Bailey,
\emph{Zero-error coding for correlated sources},
unpublished manuscript, 1975.
\bibitem{stanley}
R.~P.~Stanley,
\emph{Enumerative Combinatorics}, Volume~1, second edition,
Cambridge Studies in Advanced Mathematics \textbf{49},
Cambridge University Press, Cambridge, 2012.
\href{https://doi.org/10.1017/CBO9781139058520}{doi:10.1017/CBO9781139058520}.
\bibitem{gn}
C.~D.~Godsil and M.~W.~Newman,
\emph{Colouring an Orthogonality Graph},
arXiv:math/0509151v1, 7 September 2005.
\href{https://arxiv.org/abs/math/0509151v1}{arXiv:math/0509151v1}.
\bibitem{zeiss-lean}
J.~A.~Zeiss,
\emph{Lean verification for optimal entanglement-assisted source coding
under a balanced-difference promise},
Lean source code (2026).
\href{https://github.com/JuliusAZeiss/Lean-Verification-for-Optimal-entanglement-assisted-source-coding-under-a-balanced-di-erence-promise}{GitHub repository}.
\end{thebibliography}
\end{document}